\newcommand*{\READY}{}
\newcommand{\script}[3][script]{\href{\urlpre#2Script.sml\#lines-#3}{[#1]}} 
\newcommand{\script}[3][script]{\textit{[script]}} 
\newcommand{\eg}{\textit{e.g.}}
\newcommand{\ie}{\textit{i.e.}}
\newcommand{\etc}{\textit{etc.}}
\newcommand{\ee}{\ \HOLConst{=}\ } 
\newsavebox{\supbox}
\newsavebox{\subbox}
\renewcommand{\HOLinline}[1]{\ensuremath{#1}} 
\renewcommand{\HOLConst}[1]{{\fontsize{9pt}{1em}\selectfont\textsf{\upshape #1}}}
\renewcommand{\HOLNumLit}[1]{{\selectfont\textrm{\upshape #1}}} 
\renewcommand{\HOLSymConst}[1]{{\fontsize{9pt}{1em}\selectfont\textsf{\upshape #1}}}
\renewcommand{\HOLKeyword}[1]{\mbox{\upshape\fontsize{9pt}{1em}\selectfont\bfseries{\textsf{#1}}}}
\renewcommand{\HOLTokenBar}{\ensuremath{\mid}}
\newcommand{\q}[1]{\ensuremath{\texttt{#1}}}
\newenvironment{mquote}[1][2em]
  {\list{}{\leftmargin=#1\rightmargin=#1}\item[]}%
  {\endlist}
\newcommand{\windmill}[5]
{
\draw[fill=blue!30!white] 
     (#1,#2+#3) -- (#1,#2+#3+#5) -- (#1+#4,#2+#3+#5) -- (#1+#4,#2+#3)  -- (#1,#2+#3);
\draw[fill=blue!30!white] 
     (#1+#3,#2+#3) -- (#1+#3,#2+#3-#4) -- (#1+#3+#5,#2+#3-#4) -- (#1+#3+#5,#2+#3) -- (#1+#3,#2+#3);
\draw[fill=blue!30!white] 
     (#1+#3,#2) -- (#1+#3,#2-#5) -- (#1+#3-#4,#2-#5) -- (#1+#3-#4,#2)  -- (#1+#3,#2);
\draw[fill=blue!30!white] 
     (#1,#2) -- (#1,#2+#4) -- (#1-#5,#2+#4) -- (#1-#5,#2) -- (#1,#2);
\draw[fill=pink!40!white] 
     (#1,#2) -- (#1,#2+#3) -- (#1+#3,#2+#3) -- (#1+#3,#2) -- (#1,#2);
}
\newcommand{\windmillr}[5] 
{
\draw[fill=blue!30!white] 
     (#1+#3,#2+#3) -- (#1+#3,#2+#3+#5) -- (#1+#3-#4,#2+#3+#5) -- (#1+#3-#4,#2+#3)
      -- (#1+#3,#2+#3);
\draw[fill=blue!30!white] 
     (#1+#3,#2) -- (#1+#3,#2+#4) -- (#1+#3+#5,#2+#4) -- (#1+#3+#5,#2) -- (#1+#3,#2);
\draw[fill=blue!30!white] 
     (#1,#2) -- (#1,#2-#5) -- (#1+#4,#2-#5) -- (#1+#4,#2)  -- (#1,#2);
\draw[fill=blue!30!white] 
     (#1,#2+#3) -- (#1,#2+#3-#4) -- (#1-#5,#2+#3-#4) -- (#1-#5,#2+#3) -- (#1,#2+#3);
\draw[fill=pink!40!white] 
     (#1,#2) -- (#1,#2+#3) -- (#1+#3,#2+#3) -- (#1+#3,#2) -- (#1,#2);
}
\newcommand{\mind}[4][dashed, ultra thick] 
{
\draw[#1] (#2,#3) -- (#2,#3+#4) -- (#2+#4,#3+#4) -- (#2+#4,#3) -- (#2,#3); 
}
\newcommand\DoubleLine[5][3pt]{
  \path(#2)--(#3)coordinate[at start](h1)coordinate[at end](h2);
  \draw[#4]($(h1)!#1!90:(h2)$)--($(h2)!#1!-90:(h1)$);
  \draw[#5]($(h1)!#1!-90:(h2)$)--($(h2)!#1!90:(h1)$);
}
\tikzset{%
glow/.style={%
preaction={#1, draw, line join=round, line width=0.5pt, opacity=0.05,
preaction={#1, draw, line join=round, line width=1.0pt, opacity=0.05,
preaction={#1, draw, line join=round, line width=1.5pt, opacity=0.05,
preaction={#1, draw, line join=round, line width=2.0pt, opacity=0.05,
preaction={#1, draw, line join=round, line width=2.5pt, opacity=0.05,
preaction={#1, draw, line join=round, line width=3.0pt, opacity=0.05,
preaction={#1, draw, line join=round, line width=3.5pt, opacity=0.05,
preaction={#1, draw, line join=round, line width=4.0pt, opacity=0.05,
preaction={#1, draw, line join=round, line width=4.5pt, opacity=0.05,
preaction={#1, draw, line join=round, line width=5.0pt, opacity=0.05,
preaction={#1, draw, line join=round, line width=5.5pt, opacity=0.05,
preaction={#1, draw, line join=round, line width=6.0pt, opacity=0.05,
}}}}}}}}}}}}}}
\let\c@proposition\c@theorem
\let\c@corollary\c@theorem
\let\c@lemma\c@theorem
\let\c@definition\c@theorem
\let\c@example\c@theorem
\begin{document}

\title[Windmills of the Minds: An Algorithm for Fermat's Two Squares Theorem]{Windmills of the Minds: An Algorithm for\\ Fermat's Two Squares Theorem}

\ifdefined\READY
\author{Hing Lun Chan}
\affiliation{%
   \institution{Australian National University}
   \city{Canberra}
   \country{Australia}
}
\orcid{0000-0003-1811-1684}
\email{joseph.chan@anu.edu.au}

\pdfinfo{
   /Author (Hing Lun Chan)
   /Title  (Windmills of the minds: an algorithm for Fermat's Two Squares Theorem)
   /Subject (Theorem Proving)
   /Keywords (number theory;algorithm;prime)
}
\else
\fi

\date{}

\begin{abstract}
The two squares theorem of Fermat is a gem in number theory,
with a spectacular one-sentence ``proof from the Book''.
Here is a formalisation of this proof, with an interpretation using windmill patterns.
The theory behind involves involutions on a finite set, 
especially the parity of the number of fixed points in the involutions.
Starting as an existence proof that is non-constructive, there is an ingenious way to turn it into a constructive one. This gives an algorithm to compute the two squares by iterating the two involutions alternatively from a known fixed point.
\end{abstract}

\begin{CCSXML}
<ccs2012>
   <concept>
       <concept_id>10003752</concept_id>
       <concept_desc>Theory of computation</concept_desc>
       <concept_significance>100</concept_significance>
       </concept>
   <concept>
       <concept_id>10003752.10003790.10003794</concept_id>
       <concept_desc>Theory of computation~Automated reasoning</concept_desc>
       <concept_significance>500</concept_significance>
       </concept>
</ccs2012>
\end{CCSXML}

\ccsdesc[100]{Theory of computation}
\ccsdesc[500]{Theory of computation~Automated reasoning}

\keywords{Number Theory, Algorithm, Iteractive Theorem Proving.}

\maketitle

\pdfbookmark{\contentsname}{Contents}

\setcounter{footnote}{0}

\section{Introduction}
\label{sec:introduction}
Fermat's two squares theorem, dated back to 1640, states that a prime $n$ that is one more than a multiple of $4$ can be uniquely expressed as a sum of odd and even squares (Section~\ref{sec:sum-of-two-squares}, Theorem~\ref{thm:fermat-two-squares-thm}).
Of the many proofs of this classical number theory result, this one-sentence proof by Zagier~\cite{Zagier-1990-acm} caused a sensation in 1990:

\bigskip
\begin{mquote} 
\textit{
\\ 
The involution on the finite set
\[
S = \{(x,y,z) \in \mathbb{N}^{3} \mid \HOLinline{\HOLFreeVar{n}\;\HOLSymConst{=}\;\HOLFreeVar{x}\HOLSymConst{\ensuremath{\sp{2}}}\;\HOLSymConst{\ensuremath{+}}\;\HOLNumLit{4}\HOLSymConst{\ensuremath{}}\HOLFreeVar{y}\HOLSymConst{\ensuremath{}}\HOLFreeVar{z}} \}
\]
defined by 
\begin{equation}
\label{eqn:zagier-map}
\HOLinline{(\HOLFreeVar{x}\HOLSymConst{,}\HOLFreeVar{y}\HOLSymConst{,}\HOLFreeVar{z})} \longmapsto
\begin{cases}
      (x + 2 z,z,y - z - x) & \text{if }x < y - z \\
      (2 y - x,y,x + z - y) & \text{if }y - z < x < 2y\\
      (x - 2 y,x + z - y,y) & \text{if }x > 2y
\end{cases}
\end{equation}
has exactly one fixed point, so \HOLinline{\ensuremath{|}\HOLFreeVar{S}\ensuremath{|}} is odd, and the involution defined by
$\HOLinline{(\HOLFreeVar{x}\HOLSymConst{,}\HOLFreeVar{y}\HOLSymConst{,}\HOLFreeVar{z})} \longmapsto \HOLinline{(\HOLFreeVar{x}\HOLSymConst{,}\HOLFreeVar{z}\HOLSymConst{,}\HOLFreeVar{y})}$
also has a fixed point.
\\ 
}
\end{mquote}
Those who are perplexed by this multi-line sentence are not alone.
Even knowing involution, a self-inverse function, and fixed points, those values unchanged by a function, the proof is not obvious at a glance!

Listed as number 20 in Formalizing 100 Theorems~\cite{Wiedijk-2020},
there are many formal proofs of this theorem.
Some are based on textbook proofs, others follow the ideas in Zagier's proof.
All show the existence of the two squares, only a few (Coq~\cite{Thery-2004} and Lean~\cite{Hughes-2019}) include the uniqueness part.
Therefore, a formalisation of this one-sentence proof, in a constructive way, is an interesting exercise in theorem-proving.
As a bonus, the exercise is a path of discovery due to recent progress in understanding this proof.

As Don Zagier remarked after the one sentence, his proof was a condensed version of a 1984 proof by Roger Heath-Brown~\cite{Heath-Brown-1984-acm}, who in turn acknowledged prior work in number theory taken up by Joseph Liouville~\cite{Williams-2010-alt}.
This one-sentence proof invokes two involutions: the second one is obvious, but the first one in Equation~\eqref{eqn:zagier-map} has been called ``black magic''~\cite{Trimble-Lama-2008}. The algebraic formulation of this involution has been given a geometric interpretation by Alexander Spivak~\cite{Spivak-2007} in 2007. These are the windmills (Section~\ref{sec:windmills}). They explain why the magic works, and suggest an interplay of the involutions to identify fixed points of each other. Moreover, this provides an algorithm to find the two squares in Fermat's theorem.
Thus the one-sentence proof can be made constructive, as elucidated by Zagier~\cite{Zagier-2013-acm} in 2013.

\subsection{Contribution}
\label{sec:contribution}
This paper gives the first formal proof of an algorithm to compute the two squares in Fermat's two squares theorem, by following a constructive version of Zagier's proof in HOL4.

As noted before, Zagier's proof has been formalised, in HOL Light~\cite{Harrison-2010}, in NASA PVS~\cite{Narkawicz-2012-acm} and in Coq~\cite{Dubach-Muehlboeck-2021-acm}, although not in this constructive form.

All the ideas used in this paper can be found in Shiu~\cite{Shiu-1996} and Zagier~\cite{Zagier-2013-acm}.
The novel feature of this work is an elegant and pictorial approach for our formalisation.
The emphasis is in providing formal definitions and developing appropriate theories, not only for the present work, but also for supporting further work.

\subsection{Overview}
\label{sec:overview}
Major features in this formalisation are:
\begin{itemize}[leftmargin=*] 
\item the groundwork for Zagier's proof in Section~\ref{sec:sum-of-two-squares},
\item the two involutions for windmills in Section~\ref{sec:windmill-involutions},
\item the existence and uniqueness of two squares in Section~\ref{sec:two-squares-theorem},
\item an algorithm to compute the two squares in Section~\ref{sec:algorithm},
\item theories of involutions and iterations in Section~\ref{sec:orbits}, and
\item a correctness proof of our algorithm in Section~\ref{sec:correctness}.
\end{itemize}
After a review of the work done, we conclude in Section~\ref{sec:conclusion}.

\subsection{Notation}
\label{sec:notations}
Statements starting with a turnstile ($\vdash$) are HOL4 theorems,
automatically pretty-printed to \LaTeX{} from the relevant \text{theory} in the HOL4 development.
Generally, our notation allows an appealing combination of quantifiers ($\forall, \exists, \exists{!}$),
logical connectives (\HOLTokenConj{} for ``and'', \HOLTokenDisj{} for ``or'', \HOLTokenNeg{} for ``not'', also \HOLTokenImp{} for ``implies'' and \HOLTokenEquiv{} for ``if and only if''),
set theory ($\in$ for ``element of'', $\times$ for Cartesian product, and comprehensions such as \HOLinline{\HOLTokenLeftbrace{}\HOLBoundVar{x}\;\HOLTokenBar{}\;\HOLBoundVar{x}\;\HOLSymConst{\HOLTokenLt{}}\;\HOLNumLit{6}\HOLTokenRightbrace{}}),
and functional programming (\HOLTokenLambda{} for abstraction, and juxtaposition for application).
Repeated application of a function $f$ is indicated by exponents,
\eg, \HOLinline{\HOLFreeVar{f}\;(\HOLFreeVar{f}\;(\HOLFreeVar{f}\;\HOLFreeVar{x}))\;\HOLSymConst{=}\;\HOLFreeVar{f}\ensuremath{\sp{\HOLNumLit{3}}(\HOLFreeVar{x})}}.

For a function $f$ from set $S$ to set $T$,
we write \HOLinline{\HOLFreeVar{f}\;\ensuremath{:}\;\HOLFreeVar{S}\;\ensuremath{\leftrightarrow}\;\HOLFreeVar{T}} to mean a bijection.
The empty set is denoted by \HOLinline{\HOLSymConst{\HOLTokenEmpty{}}},
and a finite set, denoted by \HOLinline{\HOLConst{\HOLConst{finite}}\;\HOLFreeVar{S}},
has cardinality \HOLinline{\ensuremath{|}\HOLFreeVar{S}\ensuremath{|}}.

The set of natural numbers is denoted by $\mathbb{N}$, counting from $0$,
and \HOLinline{\HOLConst{count}\;\HOLFreeVar{n}}\;\HOLTokenDefEquality{}\;\HOLinline{\HOLTokenLeftbrace{}\HOLBoundVar{x}\;\HOLTokenBar{}\;\HOLBoundVar{x}\;\HOLSymConst{\HOLTokenLt{}}\;\HOLFreeVar{n}\HOLTokenRightbrace{}},
where \HOLTokenDefEquality{} means `equality by definition'.
For a natural number $n \in \mathbb{N}$,
\HOLinline{\HOLConst{square}\;\HOLFreeVar{n}} means it is a square: \HOLinline{\HOLSymConst{\HOLTokenExists{}}\HOLBoundVar{k}.\;\HOLFreeVar{n}\;\HOLSymConst{=}\;\HOLBoundVar{k}\HOLSymConst{\ensuremath{\sp{2}}}},
\HOLinline{\HOLConst{prime}\;\HOLFreeVar{n}} means it is a prime, and
\HOLinline{\HOLConst{\HOLConst{even}}\;\HOLFreeVar{n}} or \HOLinline{\HOLConst{\HOLConst{odd}}\;\HOLFreeVar{n}} denotes its parity.
The integer quotient and remainder of $m$ divided by $n$ are written as \HOLinline{\HOLFreeVar{m}\;\HOLConst{\HOLConst{div}}\;\HOLFreeVar{n}} and \HOLinline{\HOLFreeVar{m}\;\HOLConst{\HOLConst{mod}}\;\HOLFreeVar{n}}, respectively.
We write \HOLinline{\HOLFreeVar{n}\;\HOLConst{\ensuremath{\mid}}\;\HOLFreeVar{m}} when $n$ divides $m$,
which is equivalent to \HOLinline{\HOLFreeVar{m}\;\ensuremath{\equiv}\;\HOLNumLit{0}\;\ensuremath{(}\ensuremath{\bmod}\;\HOLFreeVar{n}\ensuremath{)}} when \HOLinline{\HOLFreeVar{n}\;\HOLSymConst{\HOLTokenNotEqual{}}\;\HOLNumLit{0}}.

These are basic notations. Others will be introduced as they first appear.


\paragraph*{HOL4 Sources}
\ifdefined\READY
Proof scripts are located in a repository at {\small\url{https://bitbucket.org/jhlchan/project/src/master/fermat/twosq/}}.
The scripts are compiled using HOL4, version \texttt{af01322db666}.
In this paper, each theorem has \emph{\script{windmill}{197}}, 
which is hyperlinked to the appropriate line of the corresponding proof script in repository.
\else
Proof scripts of all theorems are located in a repository (omitted for anonymous review).
The scripts are compiled using HOL4, version \texttt{6dcb52a09341}.
In this paper, each theorem has \emph{\script{windmill}{197}}, 
which is hyperlinked to the appropriate line of the corresponding proof script in repository.
(For anonymous review, this feature has been removed. See Appendix~\ref{app:cross-reference-theorems} for a cross-reference of theorems in this paper and the proof scripts in supplementary material.)
\fi

\section{Sum of Two Squares}
\label{sec:sum-of-two-squares}
The only even prime is \HOLinline{\HOLNumLit{2}\;\HOLSymConst{=}\;\HOLNumLit{1}\ensuremath{{\sp{\HOLNumLit{2}}}}\;\HOLSymConst{\ensuremath{+}}\;\HOLNumLit{1}\ensuremath{{\sp{\HOLNumLit{2}}}}}, a sum of two squares.
An odd prime, upon division by 4, leaves a remainder of either $1$ or $3$.
Only an odd prime of the first type can be expressed as a sum of two squares,
as supported by numerical evidence from Table~\ref{tbl:odd-primes-sum}.

\begin{table}[h] 
\caption{Examples of odd primes that can be expressed as a sum of two squares.}
\Description{This table provides examples of odd primes that can be expressed as a sum of two squares.}
\label{tbl:odd-primes-sum}  
\[
\begin{array}{r@{\quad\ee\quad}l@{\quad\ee\quad}l}
     5 & 4(1) + 1  & 1^{2} + 2^{2}\\
    13 & 4(3) + 1  & 3^{2} + 2^{2}\\
    17 & 4(4) + 1  & 1^{2} + 4^{2}\\
    29 & 4(7) + 1  & 5^{2} + 2^{2}\\
    37 & 4(9) + 1  & 1^{2} + 6^{2}\\
    41 & 4(10) + 1  & 5^{2} + 4^{2}\\
    53 & 4(13) + 1  & 7^{2} + 2^{2}\\
    61 & 4(15) + 1  & 5^{2} + 6^{2}\\
\end{array}
\]
\end{table}

\noindent
Pierre de Fermat, in a letter to Marin Mersenne on Christmas day 1640,
claimed that he had an ``irrefutable'' proof of this:
\begin{theorem}[\textbf{Two Squares Theorem}]
\label{thm:fermat-two-squares-thm}
\script{twoSquares}{619}
A prime $n$ can be expressed uniquely as a sum of odd and even squares
if and only if \HOLinline{\HOLFreeVar{n}\;\HOLSymConst{=}\;\HOLNumLit{4}\HOLSymConst{\ensuremath{}}\HOLFreeVar{k}\;\HOLSymConst{\ensuremath{+}}\;\HOLNumLit{1}} for some $k$.
\begin{HOLmath}
\HOLTokenTurnstile{}\HOLConst{prime}\;\HOLFreeVar{n}\;\HOLSymConst{\HOLTokenImp{}}\\
\;\;\;\;\;\;\;(\HOLFreeVar{n}\;\ensuremath{\equiv}\;\HOLNumLit{1}\;\ensuremath{(}\ensuremath{\bmod}\;\HOLNumLit{4}\ensuremath{)}\;\HOLSymConst{\HOLTokenEquiv{}}\\
\;\;\;\;\;\;\;\;\;\;\HOLSymConst{\HOLTokenUnique{}}(\HOLBoundVar{u}\HOLSymConst{,}\HOLBoundVar{v}).\;\HOLConst{\HOLConst{odd}}\;\HOLBoundVar{u}\;\HOLSymConst{\HOLTokenConj{}}\;\HOLConst{\HOLConst{even}}\;\HOLBoundVar{v}\;\HOLSymConst{\HOLTokenConj{}}\;\HOLFreeVar{n}\;\HOLSymConst{=}\;\HOLBoundVar{u}\HOLSymConst{\ensuremath{\sp{2}}}\;\HOLSymConst{\ensuremath{+}}\;\HOLBoundVar{v}\HOLSymConst{\ensuremath{\sp{2}}})
\end{HOLmath}
\end{theorem}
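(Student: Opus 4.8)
The plan is to fix a prime $n$ and prove the biconditional in two directions, the forward direction (a representation exists $\Rightarrow n \equiv 1 \pmod 4$) being an easy congruence and the reverse direction (existence and uniqueness when $n \equiv 1 \pmod 4$) carrying all the weight. For the forward direction it suffices to use existence from the $\exists!$: writing $n = u^2 + v^2$ with $u$ odd and $v$ even gives $u^2 \equiv 1$ and $v^2 \equiv 0 \pmod 4$, hence $n \equiv 1 \pmod 4$. For the reverse direction, assume $n$ is prime with $n = 4k+1$ and set $S = \{(x,y,z) \in \mathbb{N}^3 \mid n = x^2 + 4yz\}$ as in Zagier's proof. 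Primality forces every $(x,y,z)\in S$ to have all coordinates at least $1$ (a zero coordinate would make $n$ a perfect square or a multiple of $4$), so $x^2 < n$ and $yz < n$ bound $S$ and it is finite.

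For existence I would follow the one-sentence proof verbatim. First, Zagier's map $\zeta$ of~\eqref{eqn:zagier-map} is a well-defined involution on $S$: on elements of $S$ the three guards $x < y-z$, $y-z < x < 2y$, $x > 2y$ are exhaustive and mutually exclusive --- the borderline equalities $x = y-z$ and $x = 2y$ cannot occur, since the first makes $n = (y+z)^2$ and the second makes $4 \mid n$ --- and in each branch a short algebraic check shows the image is again in $S$ and that $\zeta(\zeta(x,y,z)) = (x,y,z)$; the windmill interpretation (Section~\ref{sec:windmills}) is what makes this case analysis intelligible. Second, $\zeta$ has exactly one fixed point: a fixed point must lie in the middle branch with $x = y$, whence $n = x(x+4z)$, and primality forces $x = 1$ and $z = k$, giving the single triple $(1,1,k)$. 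Since an involution on a finite set has cardinality congruent mod $2$ to its number of fixed points, $|S|$ is odd. Finally the swap $\tau \colon (x,y,z) \mapsto (x,z,y)$ is obviously an involution on $S$, so it too has an odd --- in particular nonzero --- number of fixed points; any fixed point $(x,y,y)$ yields $n = x^2 + (2y)^2$, and since $n$ is odd, $x$ is odd and $2y$ is even, giving the desired representation.

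For uniqueness, suppose $n = a^2 + b^2 = c^2 + d^2$ with $a,c$ odd and $b,d$ even; all four numbers are positive because $n$ is prime. From the Diophantus identity $n^2 = (ac+bd)^2 + (ad-bc)^2 = (ac-bd)^2 + (ad+bc)^2$ one reads off $|ad-bc| < n$ (strictly, as $ac+bd > 0$) and $0 < ad+bc \le n$; and since $(ad-bc)(ad+bc) = a^2d^2 - b^2c^2 = n(d^2 - b^2)$, the prime $n$ divides one of the two factors. Also $\gcd(a,b) = \gcd(c,d) = 1$, since a common divisor $g$ would give $g^2 \mid n$, forcing $g = 1$. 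If $n \mid ad - bc$ then $ad - bc = 0$, so $ad = bc$, and coprimality forces $a = c$, hence $b = d$ --- consistent with the parity split. If instead $n \mid ad + bc$ then $ad + bc = n$, so $ac - bd = 0$ from the second identity, and coprimality forces $a = d$, which is impossible since $a$ is odd and $d$ is even. Hence the pair $(u,v)$ is unique. (Alternatively, uniqueness follows from unique factorisation in $\ZI$, but the elementary descent above avoids developing that theory.)

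The main obstacle is the involution property of $\zeta$: checking that the guards partition $S$, that each branch maps back into $S$, and that $\zeta$ is self-inverse is the ``black magic'' case analysis, and the real work in the formalisation is discharging the nonlinear arithmetic side conditions in a controlled way rather than by brute force --- this is exactly where the windmill geometry earns its keep.
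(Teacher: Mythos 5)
Your proposal is correct and follows essentially the same route as the paper: existence via Zagier's two involutions on $S$ and the parity of fixed points, uniqueness via the elementary Diophantus-identity/divisibility argument (which the paper delegates to Theorem~\ref{thm:fermat-two-squares-unique} without reproducing), and the converse via the mod-$4$ congruence (the paper cites Theorem~\ref{thm:mod-4-not-squares}, you derive it directly from the parities of $u$ and $v$ --- an immaterial variant). The only difference is that you inline the proofs of the three supporting results that the paper's own three-line proof merely cites.
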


\noindent
This paper concentrates on formalising an elementary proof of this result by Roger Heath-Brown, later simplified by Don Zagier.
As shown in his one-sentence proof in Section~\ref{sec:introduction},
the idea is this: look at the representations of $n$ not by squares, but in another form.
Consider the following set $S_{n}$ of triples $(x,y,z)$:
\begin{equation}
\label{eqn:mills-set}
S_{n} = \{(x,y,z) \in \mathbb{N}\times{\mathbb{N}}\times{\mathbb{N}}
\mid \HOLinline{\HOLFreeVar{n}\;\HOLSymConst{=}\;\HOLFreeVar{x}\HOLSymConst{\ensuremath{\sp{2}}}\;\HOLSymConst{\ensuremath{+}}\;\HOLNumLit{4}\HOLSymConst{\ensuremath{}}\HOLFreeVar{y}\HOLSymConst{\ensuremath{}}\HOLFreeVar{z}} \}.
\end{equation}
For a prime $n$ of the form $4k + 1$, we have $(1,1,k) \in S_{n}$.
Thus the set $S_{n}$ is non-empty, and there are only finitely many triples in $S_{n}$.
A triple with \HOLinline{\HOLFreeVar{y}\;\HOLSymConst{=}\;\HOLFreeVar{z}} will give \HOLinline{\HOLFreeVar{n}\;\HOLSymConst{=}\;\HOLFreeVar{x}\HOLSymConst{\ensuremath{\sp{2}}}\;\HOLSymConst{\ensuremath{+}}\;\HOLNumLit{4}\HOLSymConst{\ensuremath{}}\HOLFreeVar{y}\HOLSymConst{\ensuremath{\sp{2}}}}, \ie, a sum of two squares.
If we can show that $S_{n}$ has only one such triple,
we have a proof of Fermat's Theorem~\ref{thm:fermat-two-squares-thm}, with both existence and uniqueness.

Meanwhile, some general theories will be developed as an exercise in formal proofs, so that they can be applied to similar problems.
In addition, we extend the theories to establish not only an algorithm, but also a proof of its correctness, to compute the two unique squares for primes of the form \HOLinline{\HOLNumLit{4}\HOLSymConst{\ensuremath{}}\HOLFreeVar{k}\;\HOLSymConst{\ensuremath{+}}\;\HOLNumLit{1}}.

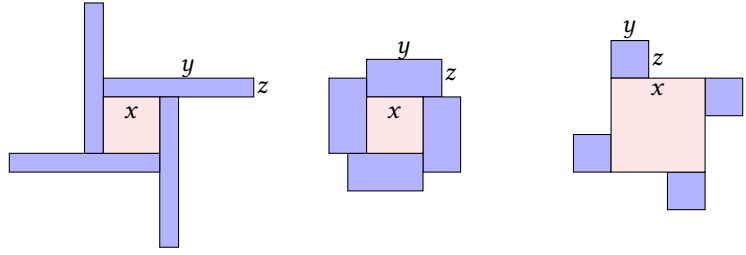
\begin{figure*}[h]  
\begin{center}
\begin{tikzpicture}[scale=0.25] 
\windmill{0}{0}{3}{8}{1}  
\node at (1.5,2.2) {$x$};
\node at (4.5,4.5) {$y$};
\node at (8.5,3.5) {$z$};
\windmill{14}{0}{3}{4}{2} 
\node at (15.5,2.2) {$x$};
\node at (16.0,5.5) {$y$};
\node at (18.5,4.2) {$z$};
\windmill{27}{-1}{5}{2}{2} 
\node at (29.5,3.4) {$x$};
\node at (28.0,6.6) {$y$};
\node at (29.5,5.0) {$z$};
\end{tikzpicture}
\caption{Typical windmills, where \HOLinline{\HOLConst{windmill}\;\HOLFreeVar{x}\;\HOLFreeVar{y}\;\HOLFreeVar{z}\;\HOLSymConst{=}\;\HOLFreeVar{x}\HOLSymConst{\ensuremath{\sp{2}}}\;\HOLSymConst{\ensuremath{+}}\;\HOLNumLit{4}\HOLSymConst{\ensuremath{}}\HOLFreeVar{y}\HOLSymConst{\ensuremath{}}\HOLFreeVar{z}}. The rightmost one has \HOLinline{\HOLFreeVar{y}\;\HOLSymConst{=}\;\HOLFreeVar{z}}.}
\Description{This figure shows typical windmills. The central square is x by x, the four rectangles arranged clockwise around the square are all y by z. The rightmost one has y and z equal.}
\label{fig:windmill-sample} 
\end{center}
\end{figure*}

\subsection{Windmills}
\label{sec:windmills}
The following expression will be our main focus:
\begin{definition}
\label{def:windmill-def}
A windmill consists of a central square with four identical rectangular arms.
\begin{HOLmath}
\;\;\HOLConst{windmill}\;\HOLFreeVar{x}\;\HOLFreeVar{y}\;\HOLFreeVar{z}\;\HOLTokenDefEquality{}\;\HOLFreeVar{x}\HOLSymConst{\ensuremath{\sp{2}}}\;\HOLSymConst{\ensuremath{+}}\;\HOLNumLit{4}\HOLSymConst{\ensuremath{}}\HOLFreeVar{y}\HOLSymConst{\ensuremath{}}\HOLFreeVar{z}
\end{HOLmath}
\end{definition}

\noindent
Some typical windmills are shown in Figure~\ref{fig:windmill-sample}.
The first term \HOLinline{\HOLFreeVar{x}\HOLSymConst{\ensuremath{\sp{2}}}} is given by a central square of side $x$,
and the second term \HOLinline{\HOLNumLit{4}\HOLSymConst{\ensuremath{}}\HOLFreeVar{y}\HOLSymConst{\ensuremath{}}\HOLFreeVar{z}} is given by four arms, each a rectangle of width $y$ and height $z$, arranged clockwise around the square.

Therefore each triple in the set $S_{n}$ of Equation~\eqref{eqn:mills-set} can be represented by a windmill, that is,
each triple $(x,y,z)$ satisfies \HOLinline{\HOLFreeVar{n}\;\HOLSymConst{=}\;\HOLConst{windmill}\;\HOLFreeVar{x}\;\HOLFreeVar{y}\;\HOLFreeVar{z}}.
Given a prime \HOLinline{\HOLFreeVar{n}\;\HOLSymConst{=}\;\HOLNumLit{4}\HOLSymConst{\ensuremath{}}\HOLFreeVar{k}\;\HOLSymConst{\ensuremath{+}}\;\HOLNumLit{1}},
we shall look for a windmill with four square arms (the one on the far right in Figure~\ref{fig:windmill-sample}),
\ie, \HOLinline{\HOLFreeVar{y}\;\HOLSymConst{=}\;\HOLFreeVar{z}}, so that \HOLinline{\HOLFreeVar{n}\;\HOLSymConst{=}\;\HOLFreeVar{x}\HOLSymConst{\ensuremath{\sp{2}}}\;\HOLSymConst{\ensuremath{+}}\;\HOLNumLit{4}\HOLSymConst{\ensuremath{}}\HOLFreeVar{y}\HOLSymConst{\ensuremath{}}\HOLFreeVar{y}} \ee \HOLinline{\HOLFreeVar{x}\HOLSymConst{\ensuremath{\sp{2}}}\;\HOLSymConst{\ensuremath{+}}\;(\HOLNumLit{2}\HOLSymConst{\ensuremath{}}\HOLFreeVar{y})\HOLSymConst{\ensuremath{\sp{2}}}}.
First, we collect all triples $(x,y,z)$ which are solutions of \HOLinline{\HOLFreeVar{n}\;\HOLSymConst{=}\;\HOLFreeVar{x}\HOLSymConst{\ensuremath{\sp{2}}}\;\HOLSymConst{\ensuremath{+}}\;\HOLNumLit{4}\HOLSymConst{\ensuremath{}}\HOLFreeVar{y}\HOLSymConst{\ensuremath{}}\HOLFreeVar{z}}:
\begin{definition}
\label{def:mills-def}
\noindent
The mills of a number is its set of windmills.
\begin{HOLmath}
\;\;\HOLConst{mills}\;\HOLFreeVar{n}\;\HOLTokenDefEquality{}\;\HOLTokenLeftbrace{}(\HOLBoundVar{x}\HOLSymConst{,}\HOLBoundVar{y}\HOLSymConst{,}\HOLBoundVar{z})\;\HOLTokenBar{}\;\HOLFreeVar{n}\;\HOLSymConst{=}\;\HOLConst{windmill}\;\HOLBoundVar{x}\;\HOLBoundVar{y}\;\HOLBoundVar{z}\HOLTokenRightbrace{}
\end{HOLmath}
\end{definition}
\noindent
This is the formal definition of the set $S_{n}$ of Equation~\eqref{eqn:mills-set}.
The conditions for a proper windmill, with all lengths nonzero, are:
\begin{HOLmath}
\HOLTokenTurnstile{}\HOLSymConst{\HOLTokenNeg{}}\HOLConst{square}\;\HOLFreeVar{n}\;\HOLSymConst{\HOLTokenConj{}}\;\HOLFreeVar{n}\;\ensuremath{\not\equiv}\;\HOLNumLit{0}\;\ensuremath{(}\ensuremath{\bmod}\;\HOLNumLit{4}\ensuremath{)}\;\HOLSymConst{\HOLTokenImp{}}\\
\;\;\;\;\;\;\;\HOLSymConst{\HOLTokenForall{}}\HOLBoundVar{x}\;\HOLBoundVar{y}\;\HOLBoundVar{z}.\;(\HOLBoundVar{x}\HOLSymConst{,}\HOLBoundVar{y}\HOLSymConst{,}\HOLBoundVar{z})\;\HOLSymConst{\HOLTokenIn{}}\;\HOLConst{mills}\;\HOLFreeVar{n}\;\HOLSymConst{\HOLTokenImp{}}\;\HOLBoundVar{x}\;\HOLSymConst{\HOLTokenNotEqual{}}\;\HOLNumLit{0}\;\HOLSymConst{\HOLTokenConj{}}\;\HOLBoundVar{y}\;\HOLSymConst{\HOLTokenNotEqual{}}\;\HOLNumLit{0}\;\HOLSymConst{\HOLTokenConj{}}\;\HOLBoundVar{z}\;\HOLSymConst{\HOLTokenNotEqual{}}\;\HOLNumLit{0}
\end{HOLmath}
\begin{equation}
\label{eqn:mills-triple-nonzero}
\end{equation}
When $n$ is a square, $\HOLinline{\HOLFreeVar{n}\;\HOLSymConst{=}\;\HOLFreeVar{x}\HOLSymConst{\ensuremath{\sp{2}}}\;\HOLSymConst{\ensuremath{+}}\;\HOLNumLit{4}\HOLSymConst{\ensuremath{}}\HOLFreeVar{y}}(0)$ for any value of $y$.
This would make \HOLinline{\HOLConst{mills}\;\HOLFreeVar{n}} infinite. Otherwise:
\begin{theorem}
\label{thm:mills-finite}
\script{windmill}{714}
The number of windmills for a number $n$ is finite if and only if $n$ is not a square.
\begin{HOLmath}
\HOLTokenTurnstile{}\HOLConst{\HOLConst{finite}}\;(\HOLConst{mills}\;\HOLFreeVar{n})\;\HOLSymConst{\HOLTokenEquiv{}}\;\HOLSymConst{\HOLTokenNeg{}}\HOLConst{square}\;\HOLFreeVar{n}
\end{HOLmath}
\end{theorem}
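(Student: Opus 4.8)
The plan is to establish the two implications of the biconditional separately. For the direction $\HOLConst{finite}\;(\HOLConst{mills}\;n) \Rightarrow \lnot\HOLConst{square}\;n$ I argue by contraposition: suppose $n = m^2$ is a square. Then for every $z \in \mathbb{N}$ the triple $(m,0,z)$ satisfies $\HOLConst{windmill}\;m\;0\;z = m^2 + 4\cdot 0\cdot z = n$, so $(m,0,z) \in \HOLConst{mills}\;n$. Since $z \mapsto (m,0,z)$ is injective, $\HOLConst{mills}\;n$ contains an infinite subset and is therefore infinite, contradicting finiteness.

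For the converse, $\lnot\HOLConst{square}\;n \Rightarrow \HOLConst{finite}\;(\HOLConst{mills}\;n)$, I prove directly that $\HOLConst{mills}\;n$ sits inside a manifestly finite set. Take $(x,y,z) \in \HOLConst{mills}\;n$, so $n = x^2 + 4yz$. Then $x^2 \le n$; because $n$ is not a square, $x^2 \ne n$, hence $x^2 < n$, and therefore $4yz = n - x^2 \ge 1$, which forces $y \ge 1$ and $z \ge 1$. From $x^2 < n$ we get $x \le n$ (immediately if $x = 0$, and $x \le x^2 < n$ otherwise); from $z \ge 1$ we get $y \le yz \le 4yz \le n$, and symmetrically $z \le n$. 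Hence $\HOLConst{mills}\;n \subseteq \HOLConst{count}\;(n+1) \times \HOLConst{count}\;(n+1) \times \HOLConst{count}\;(n+1)$, which is finite as a Cartesian product of finitely many finite sets, so $\HOLConst{mills}\;n$ is finite by the ``subset of a finite set'' lemma. (Note that $n = 0$ and $n = 1$ are both squares, so they never arise in this direction.)

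I expect the main obstacle to be nothing deep mathematically but rather the bookkeeping: one has to make sure non-squareness is genuinely exploited --- it is precisely what excludes the degenerate windmills with a zero arm ($4yz = 0$), which are exactly what make $\HOLConst{mills}\;n$ infinite --- and then in HOL4 one must chain together the arithmetic inequalities cleanly and invoke the right library lemmas (\texttt{SUBSET\_FINITE} together with finiteness of a product of \HOLConst{count} sets for the hard direction, and an injectivity/\texttt{INFINITE} lemma witnessing $z \mapsto (m,0,z)$ for the easy one).
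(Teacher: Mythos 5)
Your proof is correct and follows essentially the same route as the paper, which justifies the theorem by the same observation that a square $n = x^2 + 4y\cdot 0$ admits a degenerate windmill with a zero arm and a free parameter (you vary $z$ with $y=0$, the paper varies $y$ with $z=0$ --- symmetric and equivalent). The converse direction is not spelled out in the paper's prose, but your bounding of $x,y,z$ by $n$ and embedding into a finite Cartesian product is the standard argument the formalisation uses.
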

\noindent
Given an odd $n$ that is not a square, we can determine all its windmill triples $(x,y,z)$ by noting that, since \HOLinline{\HOLNumLit{4}\HOLSymConst{\ensuremath{}}\HOLFreeVar{y}\HOLSymConst{\ensuremath{}}\HOLFreeVar{z}} is even, $x$ must be odd, and $y$ and $z$ form the product \HOLinline{\HOLFreeVar{y}\HOLSymConst{\ensuremath{}}\HOLFreeVar{z}\;\HOLSymConst{=}\;(\HOLFreeVar{n}\;\HOLSymConst{\ensuremath{-}}\;\HOLFreeVar{x}\HOLSymConst{\ensuremath{\sp{2}}})\;\HOLConst{\HOLConst{div}}\;\HOLNumLit{4}}.
In Table~\ref{tbl:windmills-29} this is worked out for \HOLinline{\HOLFreeVar{n}\;\HOLSymConst{=}\;\HOLNumLit{29}}, using successive odd $x$ and factors for the product $yz$.
The corresponding windmills are shown in Figure~\ref{fig:windmills-29}.

\begin{table*}[h] 
\caption{Determine all the windmill triples of \HOLinline{\HOLFreeVar{n}\;\HOLSymConst{=}\;\HOLNumLit{29}}, by odd $x$ and factors of $yz$.}
\Description{This table shows how to detemine all the windmill triples for n = 29, using odd x and factors of the product yz.}
\label{tbl:windmills-29}  
\begin{tabular}{r@{\quad}@{\quad}r@{\quad\ee\quad}r@{\quad}@{\quad}l@{\quad}l}
odd $x$ & \HOLinline{\HOLFreeVar{n}\;\HOLSymConst{\ensuremath{-}}\;\HOLFreeVar{x}\HOLSymConst{\ensuremath{\sp{2}}}} & \HOLinline{\HOLNumLit{4}\HOLSymConst{\ensuremath{}}\HOLFreeVar{y}\HOLSymConst{\ensuremath{}}\HOLFreeVar{z}} & triple $(x,y,z)$ & comment\\
\hline
$1$ & $29 - 1^{2}$ & $28 \ee 4(7)$ & $(1,1,7), (1,7,1)$ & factors of $7$ are $1, 7$.\\
$3$ & $29 - 3^{2}$ & $20 \ee 4(5)$ & $(3,1,5), (3,5,1)$ & factors of $5$ are $1, 5$.\\
$5$ & $29 - 5^{2}$ &  $4 \ee 4(1)$ & $(5,1,1)$          & factor of $1$ is $1$.\\
\end{tabular}
\end{table*}

\begin{figure*}[h]  
\begin{center}
\begin{tikzpicture}[scale=0.22] 
\draw[step=1, color=white!60!black] (0,0) grid (65,17);
\windmill{8}{8}{1}{1}{7}  
\windmill{23}{7}{3}{1}{5} 
\windmill{34}{6}{5}{1}{1} 
\windmill{44}{7}{3}{5}{1} 
\windmill{57}{8}{1}{7}{1} 
\node at (4,1)  {$(1,1,7)$};
\node at (22,1) {$(3,1,5)$};
\node at (36,1) {$(5,1,1)$};
\node at (45,1) {$(3,5,1)$};
\node at (55,1) {$(1,7,1)$};
\end{tikzpicture}
\caption{All the windmills of \HOLinline{\HOLFreeVar{n}\;\HOLSymConst{=}\;\HOLNumLit{29}}, determined from Table~\ref{tbl:windmills-29}.}
\Description{This figure shows all the windmills of n = 29, determined from the previous table.}
\label{fig:windmills-29} 
\end{center}
\end{figure*}
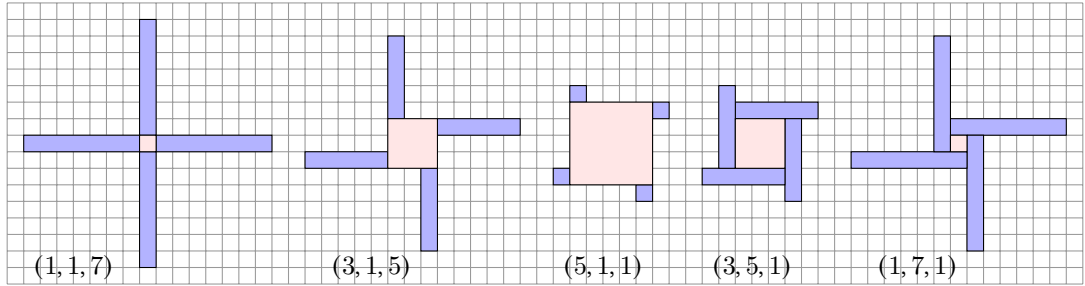

When a number $n$ has the form \HOLinline{\HOLNumLit{4}\HOLSymConst{\ensuremath{}}\HOLFreeVar{k}\;\HOLSymConst{\ensuremath{+}}\;\HOLNumLit{1}},
\[
n \ee 1^{2} + 4(1)k \ee \HOLinline{\HOLConst{windmill}\;\HOLNumLit{1}\;\HOLNumLit{1}\;\HOLFreeVar{k}},
\]
showing that its \HOLinline{\HOLConst{mills}\;\HOLFreeVar{n}\;\HOLSymConst{\HOLTokenNotEqual{}}\;\HOLSymConst{\HOLTokenEmpty{}}}:
\begin{HOLmath}
\HOLTokenTurnstile{}\HOLFreeVar{n}\;\ensuremath{\equiv}\;\HOLNumLit{1}\;\ensuremath{(}\ensuremath{\bmod}\;\HOLNumLit{4}\ensuremath{)}\;\HOLSymConst{\HOLTokenImp{}}\;(\HOLNumLit{1}\HOLSymConst{,}\HOLNumLit{1}\HOLSymConst{,}\HOLFreeVar{n}\;\HOLConst{\HOLConst{div}}\;\HOLNumLit{4})\;\HOLSymConst{\HOLTokenIn{}}\;\HOLConst{mills}\;\HOLFreeVar{n}
\end{HOLmath}
Moreover, when this form corresponds to a prime, this is the only triple $(x,y,z)$ with \HOLinline{\HOLFreeVar{x}\;\HOLSymConst{=}\;\HOLFreeVar{y}}:
\begin{theorem}
\label{thm:mills-trivial-prime}
\script{windmill}{428}
For a prime of the form \HOLinline{\HOLNumLit{4}\HOLSymConst{\ensuremath{}}\HOLFreeVar{k}\;\HOLSymConst{\ensuremath{+}}\;\HOLNumLit{1}}, the only windmill with the first and second parameters equal is \HOLinline{\HOLConst{windmill}\;\HOLNumLit{1}\;\HOLNumLit{1}\;\HOLFreeVar{k}}.
\begin{HOLmath}
\HOLTokenTurnstile{}\HOLConst{prime}\;\HOLFreeVar{n}\;\HOLSymConst{\HOLTokenConj{}}\;\HOLFreeVar{n}\;\ensuremath{\equiv}\;\HOLNumLit{1}\;\ensuremath{(}\ensuremath{\bmod}\;\HOLNumLit{4}\ensuremath{)}\;\HOLSymConst{\HOLTokenImp{}}\\
\;\;\;\;\;\;\;\HOLSymConst{\HOLTokenForall{}}\HOLBoundVar{x}\;\HOLBoundVar{z}.\;\HOLFreeVar{n}\;\HOLSymConst{=}\;\HOLConst{windmill}\;\HOLBoundVar{x}\;\HOLBoundVar{x}\;\HOLBoundVar{z}\;\HOLSymConst{\HOLTokenEquiv{}}\;\HOLBoundVar{x}\;\HOLSymConst{=}\;\HOLNumLit{1}\;\HOLSymConst{\HOLTokenConj{}}\;\HOLBoundVar{z}\;\HOLSymConst{=}\;\HOLFreeVar{n}\;\HOLConst{\HOLConst{div}}\;\HOLNumLit{4}
\end{HOLmath}
\end{theorem}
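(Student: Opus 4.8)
The plan is to reduce the windmill equation to a factorisation of $n$ and then invoke primality. The key algebraic observation, from Definition~\ref{def:windmill-def}, is
$\HOLConst{windmill}\;x\;x\;z \eq x^{2} \HOLSymConst{+} 4xz \eq x\,(x \HOLSymConst{+} 4z)$,
so the hypothesis $n \eq \HOLConst{windmill}\;x\;x\;z$ is literally $n \eq x\,(x \HOLSymConst{+} 4z)$. In particular $x \pdivides n$, and since $n$ is prime this forces $x \eq 1$ or $x \eq n$.

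For the forward direction I dispose of $x \eq n$ first: then $n\,(n \HOLSymConst{+} 4z) \eq n$, and cancelling the nonzero factor $n$ gives $n \HOLSymConst{+} 4z \eq 1$, impossible because $\HOLConst{prime}\;n$ gives $n \geq 2$. Hence $x \eq 1$, so $1 \HOLSymConst{+} 4z \eq n$, i.e. $4z \eq n \HOLSymConst{-} 1$. From $n \equiv 1 \pmod 4$ we have $4 \pdivides n \HOLSymConst{-} 1$ and $n \eq 4(n\;\HOLConst{div}\;4) \HOLSymConst{+} 1$, so dividing by $4$ yields $z \eq (n \HOLSymConst{-} 1)\;\HOLConst{div}\;4 \eq n\;\HOLConst{div}\;4$. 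This gives $x \eq 1 \land z \eq n\;\HOLConst{div}\;4$. For the backward direction, substitute $x \eq 1$, $z \eq n\;\HOLConst{div}\;4$ into $\HOLConst{windmill}\;1\;1\;z \eq 1 \HOLSymConst{+} 4z$ and use $n \equiv 1 \pmod 4$ once more to get $1 \HOLSymConst{+} 4(n\;\HOLConst{div}\;4) \eq 1 \HOLSymConst{+} (n \HOLSymConst{-} 1) \eq n$.

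The only mildly delicate points are the natural-number bookkeeping: one must know $4 \pdivides n \HOLSymConst{-} 1$ before rewriting $4z \eq n \HOLSymConst{-} 1$ as $z \eq (n \HOLSymConst{-} 1)\;\HOLConst{div}\;4$, and one must make explicit that $n > 1$ (from $\HOLConst{prime}\;n$) is exactly what kills the degenerate branch $x + 4z \eq 1$. Neither is a genuine obstacle: in HOL4 the identity $x^{2} + 4xz = x(x + 4z)$ and the prime case-split ($x \pdivides n \implies x \eq 1 \lor x \eq n$) are essentially one-liners, and the modular arithmetic is routine. I do not anticipate needing Theorem~\ref{thm:mills-finite} or any windmill machinery here — this lemma is pure divisibility.
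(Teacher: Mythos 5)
Your proposal is correct and follows essentially the same route as the paper: factor $n = x(x+4z)$, deduce $x \mid n$, and use primality to force $x = 1$; the only cosmetic difference is that you exclude $x = n$ via the impossible equation $n + 4z = 1$, whereas the paper notes $x^{2} \le n$ and $n$ not a square give $x < n$ directly. Both exclusions are sound and the remaining bookkeeping matches.
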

\begin{proof}
Note that \HOLinline{\HOLFreeVar{k}\;\HOLSymConst{=}\;\HOLFreeVar{n}\;\HOLConst{\HOLConst{div}}\;\HOLNumLit{4}} for prime \HOLinline{\HOLFreeVar{n}\;\HOLSymConst{=}\;\HOLNumLit{4}\HOLSymConst{\ensuremath{}}\HOLFreeVar{k}\;\HOLSymConst{\ensuremath{+}}\;\HOLNumLit{1}}.
Consider \HOLinline{(\HOLFreeVar{x}\HOLSymConst{,}\HOLFreeVar{y}\HOLSymConst{,}\HOLFreeVar{z})\;\HOLSymConst{\HOLTokenIn{}}\;\HOLConst{mills}\;\HOLFreeVar{n}} with \HOLinline{\HOLFreeVar{x}\;\HOLSymConst{=}\;\HOLFreeVar{y}}.
This implies,
\[
\HOLinline{\HOLFreeVar{n}\;\HOLSymConst{=}\;\HOLConst{windmill}\;\HOLFreeVar{x}\;\HOLFreeVar{x}\;\HOLFreeVar{z}} \ee \HOLinline{\HOLFreeVar{x}\HOLSymConst{\ensuremath{\sp{2}}}\;\HOLSymConst{\ensuremath{+}}\;\HOLNumLit{4}\HOLSymConst{\ensuremath{}}\HOLFreeVar{x}\HOLSymConst{\ensuremath{}}\HOLFreeVar{z}\;\HOLSymConst{=}\;\HOLFreeVar{x}\HOLSymConst{\ensuremath{}}(\HOLFreeVar{x}\;\HOLSymConst{\ensuremath{+}}\;\HOLNumLit{4}\HOLSymConst{\ensuremath{}}\HOLFreeVar{z})}.
\]
Therefore \HOLinline{\HOLFreeVar{x}\;\HOLConst{\ensuremath{\mid}}\;\HOLFreeVar{n}}.
As prime $n$ is not a square, \HOLinline{\HOLFreeVar{x}\;\HOLSymConst{\HOLTokenLt{}}\;\HOLFreeVar{n}}.
Hence \HOLinline{\HOLFreeVar{x}\;\HOLSymConst{=}\;\HOLNumLit{1}}, so \HOLinline{\HOLFreeVar{y}\;\HOLSymConst{=}\;\HOLNumLit{1}}, and \HOLinline{\HOLFreeVar{z}\;\HOLSymConst{=}\;\HOLFreeVar{k}}.  
\end{proof}

\subsection{Involution}
\label{sec:involution}
We are going to study involutions on \HOLinline{\HOLConst{mills}\;\HOLFreeVar{n}}, the set of windmills for $n$.
A function $f$ is an involution on a set $S$, denoted by \HOLinline{\HOLFreeVar{f}\;\HOLConst{involute}\;\HOLFreeVar{S}}, when it is its own inverse:
\begin{HOLmath}
\;\;\HOLFreeVar{f}\;\HOLConst{involute}\;\HOLFreeVar{S}\;\HOLTokenDefEquality{}\;\HOLSymConst{\HOLTokenForall{}}\HOLBoundVar{x}.\;\HOLBoundVar{x}\;\HOLSymConst{\HOLTokenIn{}}\;\HOLFreeVar{S}\;\HOLSymConst{\HOLTokenImp{}}\;\HOLFreeVar{f}\;\HOLBoundVar{x}\;\HOLSymConst{\HOLTokenIn{}}\;\HOLFreeVar{S}\;\HOLSymConst{\HOLTokenConj{}}\;\HOLFreeVar{f}\;(\HOLFreeVar{f}\;\HOLBoundVar{x})\;\HOLSymConst{=}\;\HOLBoundVar{x}
\end{HOLmath}
That is, $f$ is a bijection \HOLinline{\HOLFreeVar{f}\;\ensuremath{:}\;\HOLFreeVar{S}\;\ensuremath{\leftrightarrow}\;\HOLFreeVar{S}}, pairing up $x$ and~\HOLinline{\HOLFreeVar{f}\;\HOLFreeVar{x}}, both in $S$. When \HOLinline{\HOLFreeVar{x}\;\HOLSymConst{=}\;\HOLFreeVar{f}\;\HOLFreeVar{x}}, the element $x$ is fixed by the involution $f$.
We define the following sets:
\begin{definition}
\label{def:involute-pairs-fixes-def}
The pairs and fixes of an involution $f$ on a set $S$.
\begin{HOLmath}
\;\;\HOLConst{pairs}\;\HOLFreeVar{f}\;\HOLFreeVar{S}\;\HOLTokenDefEquality{}\;\HOLTokenLeftbrace{}\HOLBoundVar{x}\;\HOLTokenBar{}\;\HOLBoundVar{x}\;\HOLSymConst{\HOLTokenIn{}}\;\HOLFreeVar{S}\;\HOLSymConst{\HOLTokenConj{}}\;\HOLFreeVar{f}\;\HOLBoundVar{x}\;\HOLSymConst{\HOLTokenNotEqual{}}\;\HOLBoundVar{x}\HOLTokenRightbrace{}\\
\;\;\HOLConst{fixes}\;\HOLFreeVar{f}\;\HOLFreeVar{S}\;\HOLTokenDefEquality{}\;\HOLTokenLeftbrace{}\HOLBoundVar{x}\;\HOLTokenBar{}\;\HOLBoundVar{x}\;\HOLSymConst{\HOLTokenIn{}}\;\HOLFreeVar{S}\;\HOLSymConst{\HOLTokenConj{}}\;\HOLFreeVar{f}\;\HOLBoundVar{x}\;\HOLSymConst{=}\;\HOLBoundVar{x}\HOLTokenRightbrace{}\\
\end{HOLmath}
\end{definition}
\noindent
Clearly they are disjoint.
The subset $\HOLinline{\HOLConst{pairs}\;\HOLFreeVar{f}\;\HOLFreeVar{S}}$ consists of distinct involute pairs, so its cardinality is even:
\begin{HOLmath}
\HOLTokenTurnstile{}\HOLConst{\HOLConst{finite}}\;\HOLFreeVar{S}\;\HOLSymConst{\HOLTokenConj{}}\;\HOLFreeVar{f}\;\HOLConst{involute}\;\HOLFreeVar{S}\;\HOLSymConst{\HOLTokenImp{}}\;\HOLConst{\HOLConst{even}}\;\ensuremath{|}\HOLConst{pairs}\;\HOLFreeVar{f}\;\HOLFreeVar{S}\ensuremath{|}
\end{HOLmath}
So both \HOLinline{\ensuremath{|}\HOLFreeVar{S}\ensuremath{|}} and \HOLinline{\ensuremath{|}\HOLConst{fixes}\;\HOLFreeVar{f}\;\HOLFreeVar{S}\ensuremath{|}} have the same parity. This leads to:
\begin{theorem}
\label{thm:involute-two-fixes-both-odd}
\script{involuteFix}{1182}
If two involutions act on the same finite set $S$, their fixes have the same parity.
\begin{HOLmath}
\HOLTokenTurnstile{}\HOLConst{\HOLConst{finite}}\;\HOLFreeVar{S}\;\HOLSymConst{\HOLTokenConj{}}\;\HOLFreeVar{f}\;\HOLConst{involute}\;\HOLFreeVar{S}\;\HOLSymConst{\HOLTokenConj{}}\;\HOLFreeVar{g}\;\HOLConst{involute}\;\HOLFreeVar{S}\;\HOLSymConst{\HOLTokenImp{}}\\
\;\;\;\;\;\;\;(\HOLConst{\HOLConst{odd}}\;\ensuremath{|}\HOLConst{fixes}\;\HOLFreeVar{f}\;\HOLFreeVar{S}\ensuremath{|}\;\HOLSymConst{\HOLTokenEquiv{}}\;\HOLConst{\HOLConst{odd}}\;\ensuremath{|}\HOLConst{fixes}\;\HOLFreeVar{g}\;\HOLFreeVar{S}\ensuremath{|})
\end{HOLmath}
\end{theorem}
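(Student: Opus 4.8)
The plan is to reduce the two-involution statement to the single fact, already recorded just above the theorem, that the \emph{pairs} part of an involution on a finite set has even cardinality. First I would fix one involution $f$ on $S$ and observe that every $x \in S$ satisfies either $f\,x = x$ or $f\,x \neq x$, so $S$ is the disjoint union of $\mathrm{fixes}\ f\ S$ and $\mathrm{pairs}\ f\ S$. Since $S$ is finite, both subsets are finite, and additivity of cardinality over a disjoint union gives $|S| = |\mathrm{fixes}\ f\ S| + |\mathrm{pairs}\ f\ S|$.

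Next, invoking the stated fact that $|\mathrm{pairs}\ f\ S|$ is even (which holds because $S$ is finite and $f$ is an involution), the equation above shows that $|S|$ and $|\mathrm{fixes}\ f\ S|$ have the same parity, i.e. $\HOLConst{odd}\ |S| \HOLTokenEquiv{} \HOLConst{odd}\ |\mathrm{fixes}\ f\ S|$. The identical argument applied to the second involution $g$ yields $\HOLConst{odd}\ |S| \HOLTokenEquiv{} \HOLConst{odd}\ |\mathrm{fixes}\ g\ S|$. Chaining these two biconditionals through $\HOLConst{odd}\ |S|$ gives $\HOLConst{odd}\ |\mathrm{fixes}\ f\ S| \HOLTokenEquiv{} \HOLConst{odd}\ |\mathrm{fixes}\ g\ S|$, which is exactly the claim.

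There is essentially no obstacle: the only non-trivial ingredient, evenness of $|\mathrm{pairs}\ f\ S|$, is supplied by the preceding development (it follows from partitioning $\mathrm{pairs}\ f\ S$ into two-element orbits $\{x, f\,x\}$). In a formal proof the only minor care needed is the bookkeeping for the disjoint-union cardinality law --- discharging finiteness of the two pieces from finiteness of $S$, and recognising $\mathrm{fixes}\ f\ S$ and $\mathrm{pairs}\ f\ S$ as complementary subsets of $S$ whose union is all of $S$ --- all of which are routine set-theory library lemmas.
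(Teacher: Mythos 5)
Your proposal is correct and follows essentially the same route as the paper: the paper likewise splits $S$ into the disjoint sets $\mathrm{pairs}\ f\ S$ and $\mathrm{fixes}\ f\ S$, invokes the evenness of $|\mathrm{pairs}\ f\ S|$ to conclude that $|S|$ and $|\mathrm{fixes}\ f\ S|$ have the same parity, and then chains this observation for $f$ and $g$ through the parity of $|S|$. No gaps.
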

\noindent
We shall meet the two involutions on \HOLinline{\HOLConst{mills}\;\HOLFreeVar{n}}, a set which is finite for non-square $n$ (by Theorem~\ref{thm:mills-finite}).

\section{Windmill Involutions}
\label{sec:windmill-involutions}
Zagier's one-sentence proof is the interplay of two involutions on the set of windmills (\HOLinline{\HOLConst{mills}\;\HOLFreeVar{n}}) for a prime \HOLinline{\HOLFreeVar{n}\;\HOLSymConst{=}\;\HOLNumLit{4}\HOLSymConst{\ensuremath{}}\HOLFreeVar{k}\;\HOLSymConst{\ensuremath{+}}\;\HOLNumLit{1}}.

\subsection{Flip Map}
\label{sec:flip-map}
The first involution just swaps the $y$ and $z$ in the triple $(x,y,z)$:
\begin{definition}
\label{def:flip-def}
The flip map for a triple.
\begin{HOLmath}
\;\;\HOLConst{flip}\;(\HOLFreeVar{x}\HOLSymConst{,}\HOLFreeVar{y}\HOLSymConst{,}\HOLFreeVar{z})\;\HOLTokenDefEquality{}\;(\HOLFreeVar{x}\HOLSymConst{,}\HOLFreeVar{z}\HOLSymConst{,}\HOLFreeVar{y})
\end{HOLmath}
\end{definition}
\noindent
The set \HOLinline{\HOLFreeVar{S}\;\HOLSymConst{=}\;\HOLConst{mills}\;\HOLFreeVar{n}} of windmill triples of a number $n$ can be partitioned by $y, z$ into:
\[
\begin{array}{c}
S_{y < z} \ee \{(x,y,z) \in S \mid y < z\}\\
S_{y \ee z} \ee \{(x,y,z) \in S \mid y \ee z\}\\
S_{y > z} \ee \{(x,y,z) \in S, \mid y > z\}\\
\end{array}
\]
An example for \HOLinline{\HOLFreeVar{n}\;\HOLSymConst{=}\;\HOLNumLit{29}} is shown in Figure~\ref{fig:windmills-29-by-flip}. 
Clearly there is a bijection: $\HOLConst{flip}\colon S_{y < z} \leftrightarrow S_{y > z}$,
and $S_{y \ee z} \ee \HOLinline{\HOLConst{fixes}\;\HOLConst{flip}\;\HOLFreeVar{S}}$.
Thus the inverse of flip is itself:
\begin{HOLmath}
\HOLTokenTurnstile{}\HOLConst{flip}\;(\HOLConst{flip}\;(\HOLFreeVar{x}\HOLSymConst{,}\HOLFreeVar{y}\HOLSymConst{,}\HOLFreeVar{z}))\;\HOLSymConst{=}\;(\HOLFreeVar{x}\HOLSymConst{,}\HOLFreeVar{y}\HOLSymConst{,}\HOLFreeVar{z})
\end{HOLmath}
showing that:
\begin{theorem}
\label{thm:flip-involute-mills}
\script{windmill}{933}
The flip map is an involution on the set of windmills.
\begin{HOLmath}
\HOLTokenTurnstile{}\HOLConst{flip}\;\HOLConst{involute}\;\HOLConst{mills}\;\HOLFreeVar{n}
\end{HOLmath}
\end{theorem}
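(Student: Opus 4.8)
The plan is to unfold the definition of ``$f$ \HOLConst{involute} $S$'' and verify its two conjuncts for an arbitrary element of \HOLConst{mills}~$n$. Since that set consists of triples, I would first take an arbitrary member and destructure it as $(x,y,z)$; by Definitions~\ref{def:mills-def} and~\ref{def:windmill-def}, the membership $(x,y,z) \in \HOLConst{mills}~n$ then unfolds to the arithmetic equation $n = x^2 + 4yz$, which becomes the working hypothesis.

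For the closure conjunct I must show $\HOLConst{flip}\;(x,y,z) \in \HOLConst{mills}~n$. By Definition~\ref{def:flip-def} this is $(x,z,y) \in \HOLConst{mills}~n$, i.e.\ $n = x^2 + 4zy$, which is immediate from the hypothesis $n = x^2 + 4yz$ together with commutativity of multiplication on $\mathbb{N}$ (so $yz = zy$). For the self-inverse conjunct I must show $\HOLConst{flip}\;(\HOLConst{flip}\;(x,y,z)) = (x,y,z)$; this is exactly the equation displayed immediately before the theorem statement, and it is discharged by pure computation from Definition~\ref{def:flip-def}, since $\HOLConst{flip}\;(x,y,z) = (x,z,y)$ and $\HOLConst{flip}\;(x,z,y) = (x,y,z)$.

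There is essentially no obstacle here: the only piece of bookkeeping is the initial case-split of the arbitrary set element into the triple form $(x,y,z)$, which is what lets the definitions of \HOLConst{mills} and \HOLConst{windmill} fire; after that the first goal collapses to the single fact $4yz = 4zy$ and the second to a definitional unfolding. In practice a short tactic script — destruct the triple, rewrite with the \HOLConst{mills}, \HOLConst{windmill} and \HOLConst{flip} definitions, then finish the remaining arithmetic automatically — suffices.
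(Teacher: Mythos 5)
Your proposal is correct and matches the paper's (very terse) argument: the paper justifies the theorem by the computation $\HOLConst{flip}\;(\HOLConst{flip}\;(x,y,z)) = (x,y,z)$ together with the implicit observation that swapping $y$ and $z$ preserves membership in $\HOLConst{mills}\;n$ since $4yz = 4zy$. You simply make the closure conjunct explicit, which is the same direct verification.
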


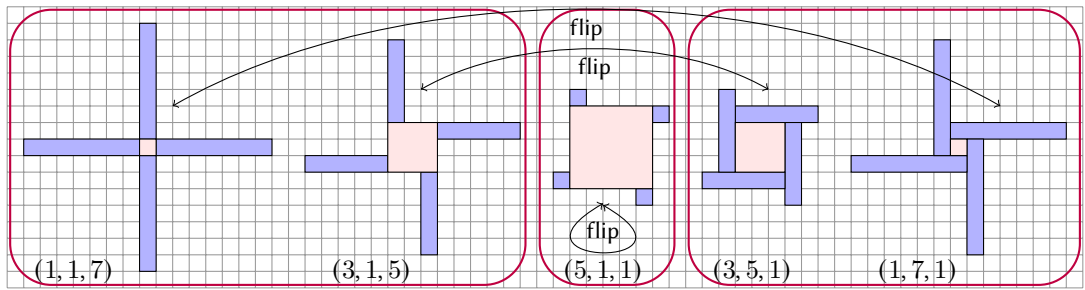
\begin{figure*}[h]  
\begin{center}
\begin{tikzpicture}[scale=0.22] 
\draw[step=1, color=white!60!black] (0,0) grid (65,17);
\windmill{8}{8}{1}{1}{7}  
\windmill{23}{7}{3}{1}{5} 
\windmill{34}{6}{5}{1}{1} 
\windmill{44}{7}{3}{5}{1} 
\windmill{57}{8}{1}{7}{1} 
\node at (4,1)  {$(1,1,7)$};
\node at (22,1) {$(3,1,5)$};
\node at (36,1) {$(5,1,1)$};
\node at (45,1) {$(3,5,1)$};
\node at (55,1) {$(1,7,1)$};
\coordinate (a) at (0.5,0.5);
\coordinate (b) at (31,16.5);
\node[draw,thick,color=purple,fit= (a) (b),rounded corners=.55cm,inner sep=2pt] {};
\coordinate (a) at (32.5,0.5);
\coordinate (b) at (40,16.5);
\node[draw,thick,color=purple,fit= (a) (b),rounded corners=.55cm,inner sep=2pt] {};
\coordinate (a) at (41.5,0.5);
\coordinate (b) at (64.5,16.5);
\node[draw,thick,color=purple,fit= (a) (b),rounded corners=.55cm,inner sep=2pt] {};
\begin{scope}[scale=10]
\coordinate (a) at (1.0,1.1); 
\coordinate (b) at (6.0,1.1); 
\draw[<->] (a) to [bend left,looseness=0.8] node[midway,below] {\HOLConst{flip}} (b);
\coordinate (a) at (2.5,1.2); 
\coordinate (b) at (4.6,1.2); 
\draw[<->] (a) to [bend left,looseness=0.8] node[midway,below] {\HOLConst{flip}} (b);
\coordinate (a) at (3.6,0.5); 
\coordinate (b) at (3.6,0.51); 
\draw[<->] (a) to [out=-30, in=-150, looseness=200]
          node[midway,above] {\HOLConst{flip}} (b); 
\end{scope}
\end{tikzpicture}
\caption{Partition of windmills of \HOLinline{\HOLFreeVar{n}\;\HOLSymConst{=}\;\HOLNumLit{29}} for \HOLConst{flip}: those with $y < z, y \ee z$, and $y > z$. Note left and right pairing.}
\Description{This figure shows a partition of the windmills of n = 29 for the flip map: those with y < z, y = z, and y > z. Note the left and right pairing between y < z and y > z.}
\label{fig:windmills-29-by-flip} 
\end{center}
\end{figure*}

\subsection{Zagier Map}
\label{sec:zagier-map}
The other involution is the one devised by Don Zagier, as shown in Equation~\eqref{eqn:zagier-map}:
\begin{definition}
\label{def:zagier-def}
The Zagier map for a triple.
\begin{HOLmath}
\;\;\HOLConst{zagier}\;(\HOLFreeVar{x}\HOLSymConst{,}\HOLFreeVar{y}\HOLSymConst{,}\HOLFreeVar{z})\;\HOLTokenDefEquality{}\\
\;\;\;\;\HOLKeyword{if}\;\HOLFreeVar{x}\;\HOLSymConst{\HOLTokenLt{}}\;\HOLFreeVar{y}\;\HOLSymConst{\ensuremath{-}}\;\HOLFreeVar{z}\;\HOLKeyword{then}\;(\HOLFreeVar{x}\;\HOLSymConst{\ensuremath{+}}\;\HOLNumLit{2}\HOLSymConst{\ensuremath{}}\HOLFreeVar{z}\HOLSymConst{,}\HOLFreeVar{z}\HOLSymConst{,}\HOLFreeVar{y}\;\HOLSymConst{\ensuremath{-}}\;\HOLFreeVar{z}\;\HOLSymConst{\ensuremath{-}}\;\HOLFreeVar{x})\\
\;\;\;\;\HOLKeyword{else}\;\HOLKeyword{if}\;\HOLFreeVar{x}\;\HOLSymConst{\HOLTokenLt{}}\;\HOLNumLit{2}\HOLSymConst{\ensuremath{}}\HOLFreeVar{y}\;\HOLKeyword{then}\;(\HOLNumLit{2}\HOLSymConst{\ensuremath{}}\HOLFreeVar{y}\;\HOLSymConst{\ensuremath{-}}\;\HOLFreeVar{x}\HOLSymConst{,}\HOLFreeVar{y}\HOLSymConst{,}\HOLFreeVar{x}\;\HOLSymConst{\ensuremath{+}}\;\HOLFreeVar{z}\;\HOLSymConst{\ensuremath{-}}\;\HOLFreeVar{y})\\
\;\;\;\;\HOLKeyword{else}\;(\HOLFreeVar{x}\;\HOLSymConst{\ensuremath{-}}\;\HOLNumLit{2}\HOLSymConst{\ensuremath{}}\HOLFreeVar{y}\HOLSymConst{,}\HOLFreeVar{x}\;\HOLSymConst{\ensuremath{+}}\;\HOLFreeVar{z}\;\HOLSymConst{\ensuremath{-}}\;\HOLFreeVar{y}\HOLSymConst{,}\HOLFreeVar{y})
\end{HOLmath}
\end{definition}
\noindent
Algebraically, this is indeed an involution, as HOL4 can verify without a blink:
\begin{HOLmath}
\HOLTokenTurnstile{}\HOLFreeVar{x}\;\HOLSymConst{\HOLTokenNotEqual{}}\;\HOLNumLit{0}\;\HOLSymConst{\HOLTokenConj{}}\;\HOLFreeVar{z}\;\HOLSymConst{\HOLTokenNotEqual{}}\;\HOLNumLit{0}\;\HOLSymConst{\HOLTokenImp{}}\;\HOLConst{zagier}\;(\HOLConst{zagier}\;(\HOLFreeVar{x}\HOLSymConst{,}\HOLFreeVar{y}\HOLSymConst{,}\HOLFreeVar{z}))\;\HOLSymConst{=}\;(\HOLFreeVar{x}\HOLSymConst{,}\HOLFreeVar{y}\HOLSymConst{,}\HOLFreeVar{z})
\end{HOLmath}
\begin{equation}
\label{eqn:zagier-involute}
\end{equation}
That HOL4 can verify this directly from definition is a showcase of its excellent algebraic simplifier, especially for natural numbers.
However, we would like to see the magic behind, in terms of the geometry of windmills.
Note that this definition differs slightly from Equation~\eqref{eqn:zagier-map} since the else-parts include boundary cases.
They actually correspond to improper windmills, and they are irrelevant for the values of $n$ satisfying Equation~\eqref{eqn:mills-triple-nonzero}.

\subsection{Mind of a Windmill}
\label{sec:windmill-mind}
The main purpose of introducing windmills is to read their minds.

\begin{figure*}[h] 
\begin{center}
\begin{tikzpicture}[scale=0.28] 
\windmill{0}{0}{3}{8}{1}  
\windmill{14}{0}{3}{8}{1} 
\windmill{27}{-1}{5}{1}{4} 
\node at (1.5,2.2) {$x$};
\node at (5.0,4.5) {$y$};
\node at (8.5,3.5) {$z$};
\draw[decorate,thick,decoration={brace,amplitude=5pt,mirror,raise=2pt}]
       (14,2.8) -- (17,2.8) node[midway,below,yshift=-7pt] {$x$};
\draw[decorate,thick,decoration={brace,amplitude=5pt,raise=2pt}]
       (13,4.2) -- (18,4.2) node[midway,above,yshift=7pt] {$x'$};
\node at (29.5,3.4) {$x'$};
\node at (27.5,8.6) {$y'$};
\node at (28.5,6.2) {$z'$};
\mind{13}{-1}{5}
\end{tikzpicture}
\caption{A typical \HOLinline{\HOLConst{windmill}\;\HOLFreeVar{x}\;\HOLFreeVar{y}\;\HOLFreeVar{z}\;\HOLSymConst{=}\;\HOLFreeVar{x}\HOLSymConst{\ensuremath{\sp{2}}}\;\HOLSymConst{\ensuremath{+}}\;\HOLNumLit{4}\HOLSymConst{\ensuremath{}}\HOLFreeVar{y}\HOLSymConst{\ensuremath{}}\HOLFreeVar{z}}, with a mind (in dashes) and transforms to another windmill.}
\Description{This figure shows a typical windmill, with a mind in dashes, on the left. The figure also illustrates how the left windmill transforms to another windmill on the right, with the same mind.}
\label{fig:windmill-mind} 
\end{center}
\end{figure*}
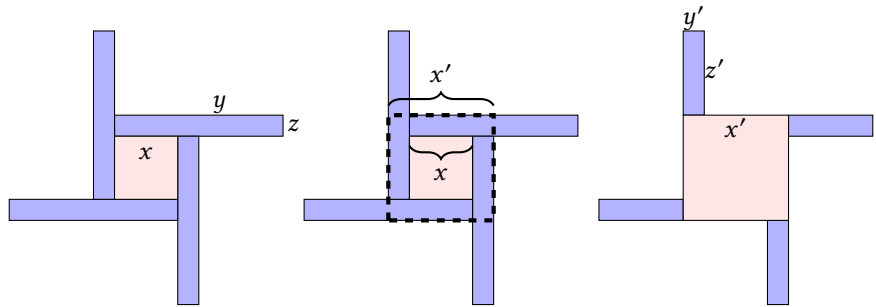

Referring to Figure~\ref{fig:windmill-mind}, a windmill has a mind (marked in dashes at middle), which is the maximum central square, with side $x'$, that can be fitted with the four arms.
When \HOLinline{\HOLFreeVar{x}\;\HOLSymConst{\HOLTokenLeq{}}\;\ensuremath{\HOLFreeVar{x}\sp{\prime{}}}}, the original square \HOLinline{\HOLFreeVar{x}\ensuremath{{\sp{\HOLNumLit{2}}}}} can grow to the mind \HOLinline{\ensuremath{\HOLFreeVar{x}\sp{\prime{}}}\ensuremath{{\sp{\HOLNumLit{2}}}}}, forming another windmill but keeping the overall shape (on the right).
Conversely, going from right to left, we can use the mind as a reference to shrink the square term from \HOLinline{\ensuremath{\HOLFreeVar{x}\sp{\prime{}}}\ensuremath{{\sp{\HOLNumLit{2}}}}} to \HOLinline{\HOLFreeVar{x}\ensuremath{{\sp{\HOLNumLit{2}}}}} by trimming four sides,
thereby restoring the arms to original.
Transforming a windmill's square term through the mind is the geometric interpretation of Equation~\eqref{eqn:zagier-map}.

\begin{table*}[h] 
\caption{The five cases of Zagier map, transforming a triple $(x,y,z)$ to $(x',y',z')$.}
\Description{This figure shows the five cases of Zagier map, transforming a triple (x,y,z) to (x',y',z').}
\label{tbl:zagier-map}  
\begin{tabular}{c@{\quad}l@{\quad}l@{\quad}r@{\quad}l@{\quad}r@{\quad}r@{\quad}r@{\quad}l@{\quad}l}
Case & Type & condition & Mind & Picture & $x'$ & $y'$ & $z'$ & condition & Type\\
\hline
$1$ & \multirow{ 2}{*}{$x < y$} & $x < y - z$   & $x + 2z$ & Figure~\ref{fig:zagier-map}~(a)
    & $x + 2z$ & $z$ & $y - x - z$ & $2y' < x'$ & \multirow{ 2}{*}{$y' < x'$}\\
$2$ &  & $y - z < x$ & $2y - x$ & Figure~\ref{fig:zagier-map}~(b)
    & $2y - x$ & $y$ & $x + z - y$ & $x' < 2y'$ & \\
\hline
$3$ & $x = y$ &         & $x$ & Figure~\ref{fig:zagier-map}~(c)
    & $x$      & $y$ & $z$         &  & $x' = y'$\\
\hline
$4$ & \multirow{ 2}{*}{$y < x$} & $x < 2y$  & $x$ & Figure~\ref{fig:zagier-map}~(d)
    & $2y - x$ & $y$ & $x + z - y$ & $y' - z' < x'$ & \multirow{ 2}{*}{$x' < y'$}\\
$5$ & & $2y < x$    & $x$ & Figure~\ref{fig:zagier-map}~(e)
    & $x - 2y$ & $x + z - y$ & $y$ & $x' < y' - z'$ & \\ 
\hline
\end{tabular}
\end{table*}

\begin{figure*}[htbp]  
\begin{center}
\begin{tikzpicture}[scale=0.3] 
\windmill{0}{0}{3}{8}{2}    
\windmill{15}{-2}{7}{2}{3}  
\mind{-2}{-2}{7}
\mind{15}{-2}{7}
\node at (1.5,2.2) {$x$};
\node at (4.0,5.6) {$y$};
\node at (8.5,4.0) {$z$};
\draw[->,thick] (10,3) -- (12,3) node[midway,above] {\HOLConst{zagier}};
\node at (18.5,4.0) {$x'$};
\node at (26.0,4.0) {$y'$};
\node at (23.5,1.5) {$z'$};
\draw[color=red,ultra thick] (5,3) -- (8,3);   
\draw[color=red,ultra thick] (22,3) -- (25,3); 
\draw[decorate,thick,decoration={brace,amplitude=5pt,mirror,raise=2pt}]
       (5,3) -- (8,3); 
\draw[decorate,thick,decoration={brace,amplitude=5pt,mirror,raise=2pt}]
       (22,3) -- (25,3); 
\node at (0.8,10) {\large{(a) Case $1$: \HOLinline{\HOLFreeVar{x}\;\HOLSymConst{\HOLTokenLt{}}\;\HOLFreeVar{y}\;\HOLSymConst{\ensuremath{-}}\;\HOLFreeVar{z}}.}};
\node at (33,3) {$\large{
\begin{array}{r@{\;=\;}l}
   x' & x + 2z\\
   y' & z\\
   z' & y - x - z\\
\end{array}
}$};
\end{tikzpicture}
\begin{tikzpicture}[scale=0.3] 
\windmill{0}{0}{3}{6}{4}    
\windmillr{15}{-3}{9}{6}{1}  
\mind{-3}{-3}{9}
\mind{15}{-3}{9}
\node at (1.5,2.2) {$x$};
\node at (3.5,7.5) {$y$};
\node at (6.5,4.5) {$z$};
\draw[->,thick] (10,3) -- (12,3) node[midway,above] {\HOLConst{zagier}};
\node at (19.5,5.0) {$x'$};
\node at (21.0,7.6) {$y'$};
\node at (17.0,6.8) {$z'$};
\draw[color=red,ultra thick] (0,6) -- (0,7); 
\draw[color=red,ultra thick] (18,6) -- (18,7); 
\draw[decorate,thick,decoration={brace,amplitude=2pt,raise=2pt}]
       (0,6) -- (0,7); 
\draw[decorate,thick,decoration={brace,amplitude=2pt,raise=2pt}]
       (18,6) -- (18,7); 
\node at (6,10) {}; 
\node at (6,9) {\large{(b) Case $2$: \HOLinline{\HOLFreeVar{y}\;\HOLSymConst{\ensuremath{-}}\;\HOLFreeVar{z}\;\HOLSymConst{\HOLTokenLt{}}\;\HOLFreeVar{x}} and \HOLinline{\HOLFreeVar{x}\;\HOLSymConst{\HOLTokenLt{}}\;\HOLFreeVar{y}}, so \HOLinline{\HOLFreeVar{x}\;\HOLSymConst{\HOLTokenLt{}}\;\HOLNumLit{2}\HOLSymConst{\ensuremath{}}\HOLFreeVar{y}}.}};
\node at (33,3) {$\large{
\begin{array}{r@{\;=\;}l}
   x' & 2y - x\\
   y' & y\\
   z' & x + z - y\\
\end{array}
}$};
\node at (0,-4) {}; 
\end{tikzpicture}
\begin{tikzpicture}[scale=0.3] 
\windmill{0}{0}{4}{4}{2}   
\windmill{16}{0}{4}{4}{2}  
\mind{0}{0}{4}
\mind{16}{0}{4}
\node at (2.2,3.5) {$x$};
\node at (2.2,6.5) {$y$};
\node at (4.5,5.2) {$z$};
\draw[->,thick] (10,3) -- (12,3) node[midway,above] {\HOLConst{zagier}};
\node at (18.0,3.5) {$x'$};
\node at (18.0,6.5) {$y'$};
\node at (14.8,5.2) {$z'$};
\draw[color=red,ultra thick] (0,4) -- (0,6); 
\draw[color=red,ultra thick] (16,4) -- (16,6); 
\draw[decorate,thick,decoration={brace,amplitude=4pt,raise=2pt}]
       (0,4) -- (0,6); 
\draw[decorate,thick,decoration={brace,amplitude=4pt,raise=2pt}]
       (16,4) -- (16,6); 
\node at (7,9) {}; 
\node at (6.5,8) {\large{(c) Case $3$: \HOLinline{\HOLFreeVar{y}\;\HOLSymConst{=}\;\HOLFreeVar{x}}, so \HOLinline{\HOLFreeVar{y}\;\HOLSymConst{\ensuremath{-}}\;\HOLFreeVar{z}\;\HOLSymConst{\HOLTokenLt{}}\;\HOLFreeVar{x}} and \HOLinline{\HOLFreeVar{x}\;\HOLSymConst{\HOLTokenLt{}}\;\HOLNumLit{2}\HOLSymConst{\ensuremath{}}\HOLFreeVar{y}}.}};
\node at (33,3) {$\large{
\begin{array}{r@{\;=\;}l}
   x' & 2y - x = x\\
   y' & y\\
   z' & x + z - y = z\\
\end{array}
}$};
\end{tikzpicture}
\begin{tikzpicture}[scale=0.3] 
\windmill{0}{0}{6}{4}{2}   
\windmillr{18}{2}{2}{4}{4}  
\mind{0}{0}{6}
\mind{16}{0}{6}
\node at (3.0,5.5) {$x$};
\node at (2.5,8.5) {$y$};
\node at (4.5,6.8) {$z$};
\draw[->,thick] (10,3) -- (12,3) node[midway,above] {\HOLConst{zagier}};
\node at (19.0,3.6) {$x'$};
\node at (18.0,8.6) {$y'$};
\node at (14.5,6.0) {$z'$};
\draw[color=red,ultra thick] (0,4) -- (0,8); 
\draw[color=red,ultra thick] (16,4) -- (16,8); 
\draw[decorate,thick,decoration={brace,amplitude=5pt,raise=2pt}]
       (0,4) -- (0,8); 
\draw[decorate,thick,decoration={brace,amplitude=5pt,raise=2pt}]
       (16,4) -- (16,8); 
\node at (3,11) {}; 
\node at (3,10) {\large{(d) Case $4$: \HOLinline{\HOLFreeVar{y}\;\HOLSymConst{\HOLTokenLt{}}\;\HOLFreeVar{x}}, but \HOLinline{\HOLFreeVar{x}\;\HOLSymConst{\HOLTokenLt{}}\;\HOLNumLit{2}\HOLSymConst{\ensuremath{}}\HOLFreeVar{y}}.}};
\node at (34,3) {$\large{
\begin{array}{r@{\;=\;}l}
   x' & 2y - x\\
   y' & y\\
   z' & x + z - y\\
\end{array}
}$};
\end{tikzpicture}
\begin{tikzpicture}[scale=0.3] 
\windmill{0}{0}{7}{3}{2}   
\windmill{19}{3}{1}{6}{3}  
\mind{0}{0}{7}
\mind{16}{0}{7}
\node at (3.5,6.2) {$x$};
\node at (1.5,9.5) {$y$};
\node at (3.5,8.0) {$z$};
\draw[->,thick] (10,3) -- (12,3) node[midway,above] {\HOLConst{zagier}};
\node at (19.5,3.5) {$x'$};
\node at (22.5,7.8) {$y'$};
\node at (25.6,5.8) {$z'$};
\draw[color=yellow,ultra thick] (0,9) -- (3,9); 
\draw[color=yellow,ultra thick] (4,-2) -- (7,-2); 
\draw[color=yellow,ultra thick] (16,3) -- (19,3); 
\draw[color=yellow,ultra thick] (20,4) -- (23,4); 
\draw[decorate,thick,decoration={brace,amplitude=5pt,mirror,raise=2pt}]
       (0,9) -- (3,9); 
\draw[decorate,thick,decoration={brace,amplitude=5pt,mirror,raise=2pt}]
       (4,-2) -- (7,-2); 
\draw[decorate,thick,decoration={brace,amplitude=5pt,mirror,raise=2pt}]
       (16,3) -- (19,3); 
\draw[decorate,thick,decoration={brace,amplitude=5pt,mirror,raise=2pt}]
       (20,4) -- (23,4); 
\node at (3,12) {}; 
\node at (3,11) {\large{(e) Case $5$: \HOLinline{\HOLNumLit{2}\HOLSymConst{\ensuremath{}}\HOLFreeVar{y}\;\HOLSymConst{\HOLTokenLt{}}\;\HOLFreeVar{x}}, so \HOLinline{\HOLFreeVar{y}\;\HOLSymConst{\HOLTokenLt{}}\;\HOLFreeVar{x}}.}};
\node at (34,3) {$\large{
\begin{array}{r@{\;=\;}l}
   x' & x - 2y\\
   y' & x + z - y\\
   z' & y\\
\end{array}
}$};
\end{tikzpicture}
\caption{All five cases of the Zagier map, from $(x,y,z)$ to $(x',y',z')$ through the mind of a windmill.}
\Description{This figure shows all the 5 cases of the Zagier map, transform through the mind of a windmill.}
\label{fig:zagier-map} 
\end{center}
\end{figure*}

The Zagier map transforms $(x,y,z)$ to $(x',y',z')$ via the mind of the windmill, keeping its overall shape.
There are three types, depending on whether $x < y$, $x = y$, or $y < x$.
Both the first and last types are divided into two cases, as the geometry for the mind is different.
Altogether there are five cases, as analysed in Table~\ref{tbl:zagier-map}, and illustrated in Figure~\ref{fig:zagier-map}.\footnote{Dubach and Muehlboeck~\cite{Dubach-Muehlboeck-2021-acm} also identified five types for windmills.}

Although five cases of Zagier map have been identified, note that the transformation rule:
\[
(x',y',z') \ee (2y - x, y, x + z - y)
\]
happens to be the same for case $2$ and case $4$.
The same rule actually applies to case $3$, which has \HOLinline{\HOLFreeVar{x}\;\HOLSymConst{=}\;\HOLFreeVar{y}}.
Thus the Zagier map can be succinctly expressed as in Definition~\ref{def:zagier-def} with only three branches.

Moreover, we can define the mind of a windmill triple as (see Table~\ref{tbl:zagier-map}):
\begin{HOLmath}
\;\;\HOLConst{mind}\;(\HOLFreeVar{x}\HOLSymConst{,}\HOLFreeVar{y}\HOLSymConst{,}\HOLFreeVar{z})\;\HOLTokenDefEquality{}\\
\;\;\;\;\HOLKeyword{if}\;\HOLFreeVar{x}\;\HOLSymConst{\HOLTokenLt{}}\;\HOLFreeVar{y}\;\HOLSymConst{\ensuremath{-}}\;\HOLFreeVar{z}\;\HOLKeyword{then}\;\HOLFreeVar{x}\;\HOLSymConst{\ensuremath{+}}\;\HOLNumLit{2}\HOLSymConst{\ensuremath{}}\HOLFreeVar{z}\\
\;\;\;\;\HOLKeyword{else}\;\HOLKeyword{if}\;\HOLFreeVar{x}\;\HOLSymConst{\HOLTokenLt{}}\;\HOLFreeVar{y}\;\HOLKeyword{then}\;\HOLNumLit{2}\HOLSymConst{\ensuremath{}}\HOLFreeVar{y}\;\HOLSymConst{\ensuremath{-}}\;\HOLFreeVar{x}\\
\;\;\;\;\HOLKeyword{else}\;\HOLFreeVar{x}
\end{HOLmath}
and verify that the mind is an invariant under the Zagier map for any triple:
\begin{HOLmath}
\HOLTokenTurnstile{}\HOLConst{mind}\;(\HOLConst{zagier}\;(\HOLFreeVar{x}\HOLSymConst{,}\HOLFreeVar{y}\HOLSymConst{,}\HOLFreeVar{z}))\;\HOLSymConst{=}\;\HOLConst{mind}\;(\HOLFreeVar{x}\HOLSymConst{,}\HOLFreeVar{y}\HOLSymConst{,}\HOLFreeVar{z})
\end{HOLmath}

Referring again to Table~\ref{tbl:zagier-map},
the windmills in \HOLinline{\HOLFreeVar{S}\;\HOLSymConst{=}\;\HOLConst{mills}\;\HOLFreeVar{n}} can be partitioned into three triple types:
\[
\begin{array}{c@{\quad}l}
S_{x < y} \ee \{(x,y,z) \in S \mid x < y\}
& \text{covering cases $1$ and $2$}\\
S_{x \ee y} \ee \{(x,y,z) \in S \mid x \ee y\}
& \text{covering case $3$}\\
S_{x > y} \ee \{(x,y,z) \in S \mid x > y\}
& \text{covering cases $4$ and $5$}\\
\end{array}
\]
Such a partition for \HOLinline{\HOLFreeVar{n}\;\HOLSymConst{=}\;\HOLNumLit{29}} is shown in Figure~\ref{fig:windmills-29-by-zagier}.
Table~\ref{tbl:zagier-map} also shows that, for triples with proper windmills:
\begin{itemize}[leftmargin=*] 
\item a triple of case $1$ maps to case $5$ and vice versa,
\item a triple of case $2$ maps to case $4$ and vice versa, and
\item a triple of case $3$ maps to itself.
\end{itemize}
Therefore the Zagier map is its own inverse for proper triples.
Combining Equation~\eqref{eqn:zagier-involute} and Equation~\eqref{eqn:mills-triple-nonzero}
for the windmills of a prime, we have:
\begin{theorem}
\label{thm:zagier-involute-mills-prime}
\script{windmill}{1475}
The Zagier map is an involution on \HOLinline{\HOLConst{mills}\;\HOLFreeVar{n}} for a prime $n$.
\begin{HOLmath}
\HOLTokenTurnstile{}\HOLConst{prime}\;\HOLFreeVar{n}\;\HOLSymConst{\HOLTokenImp{}}\;\HOLConst{zagier}\;\HOLConst{involute}\;\HOLConst{mills}\;\HOLFreeVar{n}
\end{HOLmath}
\end{theorem}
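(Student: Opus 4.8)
The plan is to unfold the definition of \HOLConst{involute}: for a prime $n$ I must show that $\HOLConst{zagier}$ sends each $(x,y,z) \in \HOLConst{mills}\;n$ back into $\HOLConst{mills}\;n$ (closure), and that $\HOLConst{zagier}\,(\HOLConst{zagier}\,(x,y,z)) = (x,y,z)$ (self-inverse). First I would record two elementary facts about a prime $n$: it is not a perfect square, and $n \not\equiv 0 \pmod 4$. The former holds because $n = k^2$ would force $k \mid n$ with $1 < k < n$, unless $k \in \{0,1\}$, and those give $n \in \{0,1\}$, neither prime; the latter because $4 \mid n$ would force $2 \mid n$ with $n \neq 2$, impossible for a prime. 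Both are standard library lemmas. With $\neg\HOLConst{square}\;n$ and $n \not\equiv 0 \pmod 4$ established, Equation~\eqref{eqn:mills-triple-nonzero} then guarantees that every $(x,y,z) \in \HOLConst{mills}\;n$ has $x \neq 0$, $y \neq 0$ and $z \neq 0$.

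The self-inverse part is now immediate: Equation~\eqref{eqn:zagier-involute} already proves $\HOLConst{zagier}\,(\HOLConst{zagier}\,(x,y,z)) = (x,y,z)$ under exactly the hypotheses $x \neq 0$ and $z \neq 0$, which we have just supplied. For closure, note that lying in $\HOLConst{mills}\;n$ depends only on the value $\HOLConst{windmill}\,x\,y\,z = x^2 + 4yz$, so it is enough to prove the algebraic identity $\HOLConst{windmill}\,(\HOLConst{zagier}\,(x,y,z)) = \HOLConst{windmill}\,(x,y,z)$. I would prove this by the obvious three-way case split on the guards of Definition~\ref{def:zagier-def}: in each branch the guard is precisely what stops the natural-number subtractions from truncating, so the goal reduces to a polynomial cancellation — for instance in the first branch $(x + 2z)^2 + 4z(y - z - x) = x^2 + 4yz$ — which HOL4's arithmetic normaliser disposes of at once, just as it already does for Equation~\eqref{eqn:zagier-involute}. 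Closure and self-inverse together, fed into the definition of \HOLConst{involute}, yield $\HOLConst{zagier}\;\HOLConst{involute}\;\HOLConst{mills}\;n$.

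The one place that needs a little attention is the closure identity: the case analysis must be matched carefully against the truncating subtraction of natural numbers so that each polynomial equation holds exactly, not merely up to a boundary correction. In practice this is routine, since the structurally identical Equation~\eqref{eqn:zagier-involute} is verified automatically; everything else is bookkeeping with elementary facts about primes. I would also note that, unlike the later applications of the two involutions, this statement does not require finiteness of $\HOLConst{mills}\;n$ (Theorem~\ref{thm:mills-finite}) — being an involution is a pointwise condition.
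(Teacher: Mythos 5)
Your proposal is correct and follows essentially the same route as the paper: combine the nonzero-components property of Equation~\eqref{eqn:mills-triple-nonzero} (justified by a prime being neither a square nor $\equiv 0 \pmod 4$) with the algebraic self-inverse fact of Equation~\eqref{eqn:zagier-involute}, plus the windmill-preserving closure identity checked branch by branch. The only difference is that you make the closure step explicit where the paper leaves it implicit, which is a point in your favour rather than a divergence.
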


\begin{figure*}[h]  
\begin{center}
\begin{tikzpicture}[scale=0.22] 
\draw[step=1, color=white!60!black] (0,0) grid (65,17);
\windmill{8}{8}{1}{1}{7}  
\windmill{24}{8}{1}{7}{1} 
\windmill{35}{7}{3}{5}{1} 
\windmill{47}{7}{3}{1}{5} 
\windmill{58}{6}{5}{1}{1} 
\node at (4,1)  {$(1,1,7)$};
\node at (22,1) {$(1,7,1)$};
\node at (36,1) {$(3,5,1)$};
\node at (46,1) {$(3,1,5)$};
\node at (60,1) {$(5,1,1)$};
\coordinate (a) at (0.5,0.5);
\coordinate (b) at (16.5,16.5);
\node[draw,thick,color=purple,fit= (a) (b),rounded corners=.55cm,inner sep=2pt] {};
\coordinate (a) at (18,0.5);
\coordinate (b) at (40,16.5);
\node[draw,thick,color=purple,fit= (a) (b),rounded corners=.55cm,inner sep=2pt] {};
\coordinate (a) at (41.5,0.5);
\coordinate (b) at (64.5,16.5);
\node[draw,thick,color=purple,fit= (a) (b),rounded corners=.55cm,inner sep=2pt] {};
\mind[ultra thick]{8}{8}{1}
\mind{23}{7}{3}
\mind{34}{6}{5}
\mind{47}{7}{3}
\mind{58}{6}{5}
\begin{scope}[scale=10]
\coordinate (a) at (2.5,1.2); 
\coordinate (b) at (4.6,1.2); 
\draw[<->] (a) to [bend left] node[midway,below] {\HOLConst{zagier}} (b);
\coordinate (a) at (3.4,0.5); 
\coordinate (b) at (6.0,0.5); 
\draw[<->] (a) to [bend right, looseness=0.8] node[midway,above] {\HOLConst{zagier}} (b);
\coordinate (a) at (1.2,0.6); 
\coordinate (b) at (1.2,0.61); 
\draw[<->] (a) to [out=-30, in=-150, looseness=250]
          node[midway,above] {\HOLConst{zagier}} (b); 
\end{scope}
\end{tikzpicture}
\caption{Partition of windmills of \HOLinline{\HOLFreeVar{n}\;\HOLSymConst{=}\;\HOLNumLit{29}} for \HOLConst{zagier}: those with $x \ee y, x < y$, and $x > y$. Note pairing by minds.}
\Description{This figure shows a partition of windmills of n = 29 for the Zagier map, those with x = y, x < y, and x > y. Note the pairing of minds between those x < y and x > y.}
\label{fig:windmills-29-by-zagier} 
\end{center}
\end{figure*}
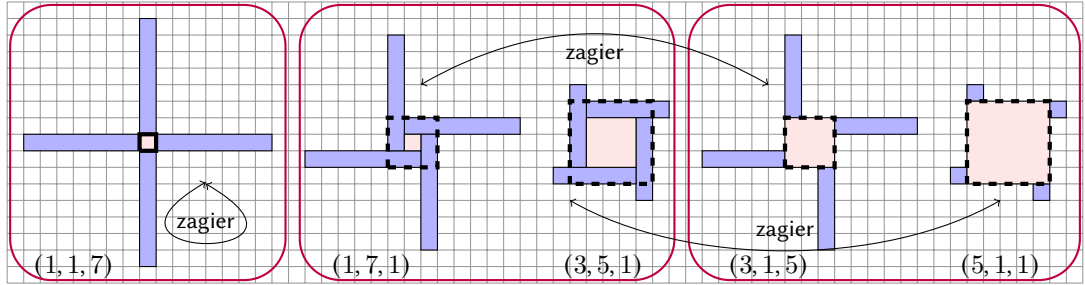

\section{Two Squares Theorem}
\label{sec:two-squares-theorem}
Now we have enough tools to formalise Fermat's two squares theorem.

\subsection{Existence of Two Squares}
\label{sec:existence}
For the Zagier map,
it is straightforward to verify, as indicated in Table~\ref{tbl:zagier-map}, that only a triple of case $3$ can map to itself:
\begin{HOLmath}
\HOLTokenTurnstile{}\HOLFreeVar{x}\;\HOLSymConst{\HOLTokenNotEqual{}}\;\HOLNumLit{0}\;\HOLSymConst{\HOLTokenImp{}}\;(\HOLConst{zagier}\;(\HOLFreeVar{x}\HOLSymConst{,}\HOLFreeVar{y}\HOLSymConst{,}\HOLFreeVar{z})\;\HOLSymConst{=}\;(\HOLFreeVar{x}\HOLSymConst{,}\HOLFreeVar{y}\HOLSymConst{,}\HOLFreeVar{z})\;\HOLSymConst{\HOLTokenEquiv{}}\;\HOLFreeVar{x}\;\HOLSymConst{=}\;\HOLFreeVar{y})
\end{HOLmath}
Hence $S_{x \ee y} \ee \HOLinline{\HOLConst{fixes}\;\HOLConst{zagier}\;(\HOLConst{mills}\;\HOLFreeVar{n})}$.
Applying Theorem~\ref{thm:mills-trivial-prime} which characterises such triples, for certain primes $S_{x \ee y}$ is a singleton:
\begin{theorem}
\label{thm:zagier-fixes-prime}
\script{twoSquares}{162}
A prime of the form \HOLinline{\HOLNumLit{4}\HOLSymConst{\ensuremath{}}\HOLFreeVar{k}\;\HOLSymConst{\ensuremath{+}}\;\HOLNumLit{1}} has only $(1,1,k)$ fixed by the Zagier map.
\begin{HOLmath}
\HOLTokenTurnstile{}\HOLConst{prime}\;\HOLFreeVar{n}\;\HOLSymConst{\HOLTokenConj{}}\;\HOLFreeVar{n}\;\ensuremath{\equiv}\;\HOLNumLit{1}\;\ensuremath{(}\ensuremath{\bmod}\;\HOLNumLit{4}\ensuremath{)}\;\HOLSymConst{\HOLTokenImp{}}\\
\;\;\;\;\;\;\;\HOLConst{fixes}\;\HOLConst{zagier}\;(\HOLConst{mills}\;\HOLFreeVar{n})\;\HOLSymConst{=}\;\HOLTokenLeftbrace{}(\HOLNumLit{1}\HOLSymConst{,}\HOLNumLit{1}\HOLSymConst{,}\HOLFreeVar{n}\;\HOLConst{\HOLConst{div}}\;\HOLNumLit{4})\HOLTokenRightbrace{}
\end{HOLmath}
\end{theorem}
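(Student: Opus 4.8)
The plan is to reduce the statement to two facts already established: the algebraic characterisation of Zagier-fixed triples stated just above the theorem, and Theorem~\ref{thm:mills-trivial-prime}. First I would record that a prime $n \equiv 1 \pmod 4$ is odd, hence $n \not\equiv 0 \pmod 4$ and (being prime) not a square; in particular every $(x,y,z) \in \HOLConst{mills}\;n$ satisfies $n = x^2 + 4yz$ with $n$ odd, forcing $x^2$ — and therefore $x$ — to be odd, so $x \neq 0$. Equation~\eqref{eqn:mills-triple-nonzero} delivers the same conclusion, but the parity argument is self-contained.

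Next I would unfold $\HOLConst{fixes}\;\HOLConst{zagier}\;(\HOLConst{mills}\;n)$ by Definition~\ref{def:involute-pairs-fixes-def}: it is the set of $(x,y,z) \in \HOLConst{mills}\;n$ with $\HOLConst{zagier}\;(x,y,z) = (x,y,z)$. Since $x \neq 0$ for every triple in $\HOLConst{mills}\;n$, the characterisation $x \neq 0 \Rightarrow (\HOLConst{zagier}\;(x,y,z) = (x,y,z) \Leftrightarrow x = y)$ lets me rewrite the fixed-point condition as $x = y$. Hence $(x,y,z)$ lies in $\HOLConst{fixes}\;\HOLConst{zagier}\;(\HOLConst{mills}\;n)$ exactly when $x = y$ and $n = \HOLConst{windmill}\;x\;x\;z$.

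Finally I would invoke Theorem~\ref{thm:mills-trivial-prime}, whose hypotheses ($n$ prime, $n \equiv 1 \pmod 4$) are precisely what we have, and which states that $n = \HOLConst{windmill}\;x\;x\;z$ holds iff $x = 1$ and $z = n\;\HOLConst{div}\;4$. Substituting, the membership condition collapses to $x = y = 1 \wedge z = n\;\HOLConst{div}\;4$, so $\HOLConst{fixes}\;\HOLConst{zagier}\;(\HOLConst{mills}\;n)$ is the singleton $\{(1,1,n\;\HOLConst{div}\;4)\}$ by set extensionality (membership of this point in $\HOLConst{mills}\;n$ being the forward direction of the same theorem, also noted in the excerpt). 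There is no real obstacle here — the whole proof is a short chain of rewrites — and the only place demanding a little care is supplying $x \neq 0$ for triples in $\HOLConst{mills}\;n$ before applying the Zagier-fixed-point lemma, and threading the primality and $\pmod 4$ hypotheses into Theorem~\ref{thm:mills-trivial-prime}.
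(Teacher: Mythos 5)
Your proposal is correct and follows essentially the same route as the paper: the paper also combines the displayed characterisation $x \neq 0 \Rightarrow (\HOLConst{zagier}\;(x,y,z) = (x,y,z) \Leftrightarrow x = y)$ with Theorem~\ref{thm:mills-trivial-prime}, after noting that triples in $\HOLConst{mills}\;n$ have nonzero components for such $n$. Your extra care in justifying $x \neq 0$ (via Equation~\eqref{eqn:mills-triple-nonzero} or the parity argument) is exactly the detail the paper relies on implicitly.
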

\noindent
The fixed points of two involutions play crucial roles in the existence of two squares for Theorem~\ref{thm:fermat-two-squares-thm}:
\begin{theorem}[\textbf{Two Squares Existence}]
\label{thm:fermat-two-squares-exists}
\script{twoSquares}{441}
A prime of the form \HOLinline{\HOLNumLit{4}\HOLSymConst{\ensuremath{}}\HOLFreeVar{k}\;\HOLSymConst{\ensuremath{+}}\;\HOLNumLit{1}} is a sum of two squares of different parity.
\begin{HOLmath}
\HOLTokenTurnstile{}\HOLConst{prime}\;\HOLFreeVar{n}\;\HOLSymConst{\HOLTokenConj{}}\;\HOLFreeVar{n}\;\ensuremath{\equiv}\;\HOLNumLit{1}\;\ensuremath{(}\ensuremath{\bmod}\;\HOLNumLit{4}\ensuremath{)}\;\HOLSymConst{\HOLTokenImp{}}\\
\;\;\;\;\;\;\;\HOLSymConst{\HOLTokenExists{}}(\HOLBoundVar{u}\HOLSymConst{,}\HOLBoundVar{v}).\;\HOLConst{\HOLConst{odd}}\;\HOLBoundVar{u}\;\HOLSymConst{\HOLTokenConj{}}\;\HOLConst{\HOLConst{even}}\;\HOLBoundVar{v}\;\HOLSymConst{\HOLTokenConj{}}\;\HOLFreeVar{n}\;\HOLSymConst{=}\;\HOLBoundVar{u}\HOLSymConst{\ensuremath{\sp{2}}}\;\HOLSymConst{\ensuremath{+}}\;\HOLBoundVar{v}\HOLSymConst{\ensuremath{\sp{2}}}
\end{HOLmath}
\end{theorem}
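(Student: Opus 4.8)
The plan is to transfer information between the two involutions $\HOLConst{flip}$ and $\HOLConst{zagier}$ on the finite set $\HOLConst{mills}\;n$ by means of the parity argument of Theorem~\ref{thm:involute-two-fixes-both-odd}. First I would note that a prime is never a square (the values $0,1$ are not prime, and $k^{2}$ with $k\geq 2$ is composite), so by Theorem~\ref{thm:mills-finite} the set $\HOLConst{mills}\;n$ is finite. By Theorem~\ref{thm:flip-involute-mills} the flip map is an involution on it, and by Theorem~\ref{thm:zagier-involute-mills-prime} so is the Zagier map, since $n$ is prime. Hence Theorem~\ref{thm:involute-two-fixes-both-odd} applies, and $\HOLConst{fixes}\;\HOLConst{flip}\;(\HOLConst{mills}\;n)$ and $\HOLConst{fixes}\;\HOLConst{zagier}\;(\HOLConst{mills}\;n)$ have the same parity.

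Next, by Theorem~\ref{thm:zagier-fixes-prime}, for a prime $n = 4k+1$ the set $\HOLConst{fixes}\;\HOLConst{zagier}\;(\HOLConst{mills}\;n)$ is the singleton $\{(1,1,k)\}$, so its cardinality is odd. Therefore $\HOLConst{fixes}\;\HOLConst{flip}\;(\HOLConst{mills}\;n)$ has odd cardinality as well; in particular it is nonempty. Picking any triple $(x,y,z)$ in it, being fixed by $\HOLConst{flip}$ forces $(x,z,y)=(x,y,z)$, that is, $y=z$, while membership in $\HOLConst{mills}\;n$ gives $n = x^{2}+4yz = x^{2}+4y^{2} = x^{2}+(2y)^{2}$.

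Finally I would read off the parities: $n$ is odd, $(2y)^{2}$ is even, so $x^{2}=n-(2y)^{2}$ is odd and hence $x$ is odd, whereas $2y$ is even; taking $u=x$ and $v=2y$ yields the claimed representation of $n$ as a sum of two squares of different parity. I do not expect a genuine obstacle, since the argument is a short assembly of the quoted results; the only points needing a little care are the elementary fact that a prime is not a square (needed to invoke finiteness and hence Theorem~\ref{thm:involute-two-fixes-both-odd}) and the observation that what actually propagates from $\HOLConst{zagier}$ to $\HOLConst{flip}$ is the \emph{oddness} of the number of fixed points, so it is essential that Theorem~\ref{thm:zagier-fixes-prime} produces an exact singleton rather than merely a nonempty set.
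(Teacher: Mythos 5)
Your proposal is correct and follows essentially the same route as the paper: finiteness of $\HOLConst{mills}\;n$ via non-squareness of a prime, the two involution theorems, the singleton of Zagier fixes giving odd cardinality, parity transfer to the flip fixes, and then reading off $u=x$, $v=2y$ from a flip-fixed triple. The paper's proof is the same argument, stated slightly more tersely.
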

\begin{proof}
A prime is not a square, so \HOLinline{\HOLConst{mills}\;\HOLFreeVar{n}} is finite by Theorem~\ref{thm:mills-finite}.
, and both Zagier and flip maps are involutions on \HOLinline{\HOLConst{mills}\;\HOLFreeVar{n}},
by Theorem~\ref{thm:zagier-involute-mills-prime} and Theorem~\ref{thm:flip-involute-mills}.
Note that Zagier map has a single fixed point by Theorem~\ref{thm:zagier-fixes-prime}.
Thus \HOLinline{\ensuremath{|}\HOLConst{fixes}\;\HOLConst{zagier}\;(\HOLConst{mills}\;\HOLFreeVar{n})\ensuremath{|}\;\HOLSymConst{=}\;\HOLNumLit{1}},
so \HOLinline{\ensuremath{|}\HOLConst{fixes}\;\HOLConst{flip}\;(\HOLConst{mills}\;\HOLFreeVar{n})\ensuremath{|}} is odd by Theorem~\ref{thm:involute-two-fixes-both-odd}.
Hence \HOLinline{\HOLConst{fixes}\;\HOLConst{flip}\;(\HOLConst{mills}\;\HOLFreeVar{n})\;\HOLSymConst{\HOLTokenNotEqual{}}\;\HOLSymConst{\HOLTokenEmpty{}}}, containing a triple $(x,y,y)$.
Thus \HOLinline{\HOLFreeVar{n}\;\HOLSymConst{=}\;\HOLConst{windmill}\;\HOLFreeVar{x}\;\HOLFreeVar{y}\;\HOLFreeVar{y}} $\ee$ \HOLinline{\HOLFreeVar{x}\HOLSymConst{\ensuremath{\sp{2}}}\;\HOLSymConst{\ensuremath{+}}\;\HOLNumLit{4}\HOLSymConst{\ensuremath{}}\HOLFreeVar{y}\HOLSymConst{\ensuremath{\sp{2}}}}.
Take \HOLinline{\HOLFreeVar{u}\;\HOLSymConst{=}\;\HOLFreeVar{x}}, and \HOLinline{\HOLFreeVar{v}\;\HOLSymConst{=}\;\HOLNumLit{2}\HOLSymConst{\ensuremath{}}\HOLFreeVar{y}}, then \HOLinline{\HOLFreeVar{n}\;\HOLSymConst{=}\;\HOLFreeVar{u}\HOLSymConst{\ensuremath{\sp{2}}}\;\HOLSymConst{\ensuremath{+}}\;\HOLFreeVar{v}\HOLSymConst{\ensuremath{\sp{2}}}}.
Evidently \HOLinline{\HOLFreeVar{v}} is even, and $u$ is odd since $n$ is odd.
\end{proof}
\noindent
Current formalisations of Zagier's proof (HOL Light~\cite{Harrison-2010}, NASA PVS~\cite{Narkawicz-2012-acm} and Coq~\cite{Dubach-Muehlboeck-2021-acm}), or its close relative Heath-Brown's proof (Mizar~\cite{Riccardi-2009} and ProofPower~\cite{Arthan-2016}), stop at just showing the existence of two squares for the primes in \text{Fermat's} Theorem~\ref{thm:fermat-two-squares-thm},
most likely because this already meets the Formalizing 100 Theorems challenge~\cite{Wiedijk-2020}.
See also related work in Section~\ref{sec:related-work}.

\subsection{Uniqueness of Two Squares}
\label{sec:uniqueness}
The uniqueness of the two squares in Fermat's Theorem~\ref{thm:fermat-two-squares-thm}
is a consequence of the following property of a prime:
\begin{theorem}[\textbf{Two Squares Uniquenss}]
\label{thm:fermat-two-squares-unique}
\script{twoSquares}{205}
If a prime $n$ can be expressed as a sum of two squares, the expression is unique up to commutativity.
\begin{HOLmath}
\HOLTokenTurnstile{}\HOLConst{prime}\;\HOLFreeVar{n}\;\HOLSymConst{\HOLTokenConj{}}\;\HOLFreeVar{n}\;\HOLSymConst{=}\;\HOLFreeVar{a}\HOLSymConst{\ensuremath{\sp{2}}}\;\HOLSymConst{\ensuremath{+}}\;\HOLFreeVar{b}\HOLSymConst{\ensuremath{\sp{2}}}\;\HOLSymConst{\HOLTokenConj{}}\;\HOLFreeVar{n}\;\HOLSymConst{=}\;\HOLFreeVar{c}\HOLSymConst{\ensuremath{\sp{2}}}\;\HOLSymConst{\ensuremath{+}}\;\HOLFreeVar{d}\HOLSymConst{\ensuremath{\sp{2}}}\;\HOLSymConst{\HOLTokenImp{}}\\
\;\;\;\;\;\;\;\HOLTokenLeftbrace{}\HOLFreeVar{a};\;\HOLFreeVar{b}\HOLTokenRightbrace{}\;\HOLSymConst{=}\;\HOLTokenLeftbrace{}\HOLFreeVar{c};\;\HOLFreeVar{d}\HOLTokenRightbrace{}
\end{HOLmath}
\end{theorem}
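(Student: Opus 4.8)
The plan is to run the classical elementary uniqueness argument built on the Brahmagupta--Fibonacci identity. Write the two representations as $n = a^2 + b^2 = c^2 + d^2$. Two preliminary facts set the stage. First, a prime is never a perfect square, so none of $a,b,c,d$ is zero; in particular all four are at least $1$. Second, if a prime $q$ divided both $a$ and $b$ then $q^2 \mid a^2 + b^2 = n$, which is impossible for prime $n$, so $\gcd(a,b) = 1$, and symmetrically $\gcd(c,d) = 1$. These coprimality facts are the workhorses of the final step.

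Next I would record the cross-difference identity: using $a^2 = n - b^2$ and $c^2 = n - d^2$,
\[
a^2 d^2 - b^2 c^2 \;=\; (n - b^2)d^2 - b^2(n - d^2) \;=\; n\,(d^2 - b^2),
\]
so $n$ divides $(ad - bc)(ad + bc) = a^2 d^2 - b^2 c^2$. Since $n$ is prime, $n \mid ad - bc$ or $n \mid ad + bc$. To control the size of these quantities I would invoke the two shapes of the Brahmagupta--Fibonacci identity,
\[
n^2 \;=\; (ac + bd)^2 + (ad - bc)^2 \;=\; (ac - bd)^2 + (ad + bc)^2 ,
\]
which yield $(ad - bc)^2 \le n^2$ and $(ad + bc)^2 \le n^2$.

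Now the two cases. If $n \mid ad - bc$ and $(ad - bc)^2 \le n^2$, then $ad - bc \in \{-n, 0, n\}$; the values $\pm n$ would force $ac + bd = 0$, impossible since $a,c \ge 1$, so $ad = bc$ (and incidentally $ac + bd = n$). From $ad = bc$ together with $\gcd(a,b) = 1$ we get $a \mid c$, and with $\gcd(c,d) = 1$ we get $c \mid a$; hence $a = c$ and therefore $b = d$, so $\{a,b\} = \{c,d\}$. If instead $n \mid ad + bc$, positivity gives $ad + bc = n$ directly (it is a positive multiple of $n$ whose square is at most $n^2$), hence $ac - bd = 0$, i.e.\ $ac = bd$; then $\gcd(a,b) = 1$ gives $a \mid d$ and $\gcd(c,d) = 1$ gives $d \mid a$, so $a = d$ and $b = c$, and again $\{a,b\} = \{c,d\}$.

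The main obstacle in a mechanised proof is not the mathematics but the truncated subtraction of the natural numbers: the quantities $ad - bc$ and $ac - bd$ may be \emph{negative}, so both Brahmagupta--Fibonacci identities and the divisibility $n \mid ad - bc$ are cleanest recast over $\mathbb{Z}$ (or, alternatively, handled by an explicit split on whether $bc \le ad$ and whether $bd \le ac$). Once that plumbing is in place, every remaining step is a routine \texttt{gcd}/primality manipulation within reach of HOL4's arithmetic libraries; the only genuinely creative inputs are the cross-difference identity above and the decision to pair each divisibility alternative with the matching form of the Brahmagupta--Fibonacci identity. (A slicker route would factor $n$ in the Gaussian integers $\mathbb{Z}[i]$ and appeal to unique factorisation, but that presupposes a UFD library not developed here.)
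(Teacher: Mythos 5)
Your argument is correct and is exactly the kind of proof the paper indicates for this theorem: the paper gives no details beyond saying the proof is "purely number-theoretic" via algebraic identities and divisibility (following Th\'ery's Coq formalisation), and the Brahmagupta--Fibonacci identity combined with the cross-difference divisibility $n \mid (ad-bc)(ad+bc)$ and the coprimality of each pair is precisely that classical route. Your remark about truncated natural-number subtraction is also a genuine formalisation concern in HOL4, but it does not affect the mathematics.
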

\noindent
The proof is purely number-theoretic, which has also been formalised by Laurent Th{\'e}ry in Coq~\cite{Thery-2004}.
Moreover, we have:
\begin{theorem}
\label{thm:mod-4-not-squares}
\script{helperTwosq}{419}
A number of the form \HOLinline{\HOLNumLit{4}\HOLSymConst{\ensuremath{}}\HOLFreeVar{k}\;\HOLSymConst{\ensuremath{+}}\;\HOLNumLit{3}} cannot be expressed as a sum of two squares.
\begin{HOLmath}
\HOLTokenTurnstile{}\HOLFreeVar{n}\;\ensuremath{\equiv}\;\HOLNumLit{3}\;\ensuremath{(}\ensuremath{\bmod}\;\HOLNumLit{4}\ensuremath{)}\;\HOLSymConst{\HOLTokenImp{}}\;\HOLSymConst{\HOLTokenForall{}}\HOLBoundVar{u}\;\HOLBoundVar{v}.\;\HOLFreeVar{n}\;\HOLSymConst{\HOLTokenNotEqual{}}\;\HOLBoundVar{u}\HOLSymConst{\ensuremath{\sp{2}}}\;\HOLSymConst{\ensuremath{+}}\;\HOLBoundVar{v}\HOLSymConst{\ensuremath{\sp{2}}}
\end{HOLmath}
\end{theorem}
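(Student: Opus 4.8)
The plan is to argue by reduction modulo $4$, exploiting the elementary fact that every perfect square is congruent to $0$ or $1$ modulo $4$. First I would establish the auxiliary lemma that for any natural number $m$ we have $m^2 \bmod 4 \in \{0,1\}$. Writing $m = 2q + r$ with $r = m \bmod 2 \in \{0,1\}$, the expansion $m^2 = 4(q^2 + qr) + r^2$ shows $m^2 \equiv r^2 \pmod 4$, and $r^2$ is either $0$ or $1$. In HOL4 this is a short arithmetic computation (it may even be available as a library fact about squaring under $\bmod\ 4$), reducible by splitting on the parity of $m$ and simplifying.

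Next I would combine two instances of this lemma, one for $u$ and one for $v$. Using the congruence-preservation of addition, $(u^2 + v^2) \bmod 4 = \big((u^2 \bmod 4) + (v^2 \bmod 4)\big) \bmod 4$, and a four-way case split on the values of $u^2 \bmod 4$ and $v^2 \bmod 4$ — each lying in $\{0,1\}$ — yields $(u^2 + v^2) \bmod 4 \in \{0,1,2\}$. In particular $(u^2 + v^2) \bmod 4 \neq 3$. Finally, the hypothesis $n \equiv 3 \pmod 4$ unfolds to $n \bmod 4 = 3$; if $n = u^2 + v^2$ held, then $n \bmod 4$ would equal $(u^2 + v^2) \bmod 4 \neq 3$, a contradiction, so $n \neq u^2 + v^2$ for all $u,v$.

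I do not expect a genuine obstacle here: the only mild care needed is to stay entirely within natural-number modular arithmetic as HOL4 implements it, invoking the right congruence lemmas for addition and the distributivity used in expanding $m^2$. The case analysis is finite and mechanical, so once the squares-mod-$4$ lemma is in hand the whole argument should be a handful of lines, most of which an arithmetic decision procedure can discharge automatically.
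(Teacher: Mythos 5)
Your proposal is correct and matches the paper's own argument exactly: the paper likewise observes that a square leaves remainder $0$ or $1$ upon division by $4$, so the sum of two squares can never leave remainder $3$. Your parity-split derivation of the squares-mod-$4$ lemma is just the explicit justification of that observation, so there is nothing genuinely different here.
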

\noindent
This is an elementary result from possible remainders after division by 4:
while a number, such as $u$ or $v$, may have a remainer $0, 1, 2$, or $3$,
a square, such as $u^{2}$ or $v^{2}$, can only have a remainder $0$ or $1$.
Thus the sum of such remainders can never be $3$.

Now we can complete the proof of Fermat's two squares Theorem~\ref{thm:fermat-two-squares-thm}:
\begin{HOLmath}
\HOLTokenTurnstile{}\HOLConst{prime}\;\HOLFreeVar{n}\;\HOLSymConst{\HOLTokenImp{}}\\
\;\;\;\;\;\;\;(\HOLFreeVar{n}\;\ensuremath{\equiv}\;\HOLNumLit{1}\;\ensuremath{(}\ensuremath{\bmod}\;\HOLNumLit{4}\ensuremath{)}\;\HOLSymConst{\HOLTokenEquiv{}}\\
\;\;\;\;\;\;\;\;\;\;\HOLSymConst{\HOLTokenUnique{}}(\HOLBoundVar{u}\HOLSymConst{,}\HOLBoundVar{v}).\;\HOLConst{\HOLConst{odd}}\;\HOLBoundVar{u}\;\HOLSymConst{\HOLTokenConj{}}\;\HOLConst{\HOLConst{even}}\;\HOLBoundVar{v}\;\HOLSymConst{\HOLTokenConj{}}\;\HOLFreeVar{n}\;\HOLSymConst{=}\;\HOLBoundVar{u}\HOLSymConst{\ensuremath{\sp{2}}}\;\HOLSymConst{\ensuremath{+}}\;\HOLBoundVar{v}\HOLSymConst{\ensuremath{\sp{2}}})
\end{HOLmath}
\begin{proof}
For the if part $(\Rightarrow)$,
existence is given by Theorem~\ref{thm:fermat-two-squares-exists}, and
uniqueness is provided by Theorem~\ref{thm:fermat-two-squares-unique}.
For the only-if part $(\Leftarrow)$,
an odd prime with \HOLinline{\HOLFreeVar{n}\;\ensuremath{\not\equiv}\;\HOLNumLit{1}\;\ensuremath{(}\ensuremath{\bmod}\;\HOLNumLit{4}\ensuremath{)}} cannot be a sum of two squares by Theorem~\ref{thm:mod-4-not-squares}.
\end{proof}

\section{Two Squares Algorithm}
\label{sec:algorithm}
To make Zagier's proof constructive, we need to compute that single triple fixed by flip map.

Let $n$ be a prime of the form \HOLinline{\HOLNumLit{4}\HOLSymConst{\ensuremath{}}\HOLFreeVar{k}\;\HOLSymConst{\ensuremath{+}}\;\HOLNumLit{1}}.
By Theorem~\ref{thm:zagier-fixes-prime}, the only Zagier fixed point is \HOLinline{\HOLFreeVar{u}\;\HOLSymConst{=}\;(\HOLNumLit{1}\HOLSymConst{,}\HOLNumLit{1}\HOLSymConst{,}\HOLFreeVar{k})}, meaning \HOLinline{\HOLConst{zagier}\;\HOLFreeVar{u}\;\HOLSymConst{=}\;\HOLFreeVar{u}}.
To change the triple $u$, applying \HOLConst{flip} is the obvious choice.
To keep changing the triple, \HOLConst{zagier} should be applied.
Thus by applying the composition \HOLinline{\HOLConst{zagier}\;\HOLSymConst{\HOLTokenCompose}\;\HOLConst{flip}} repeatedly from the known Zagier fixed point, there is hope that the chain will lead to the only flip fixed point.
Figure~\ref{fig:zagier-flip-29} shows that this is indeed the case for \HOLinline{\HOLFreeVar{n}\;\HOLSymConst{=}\;\HOLNumLit{29}}.

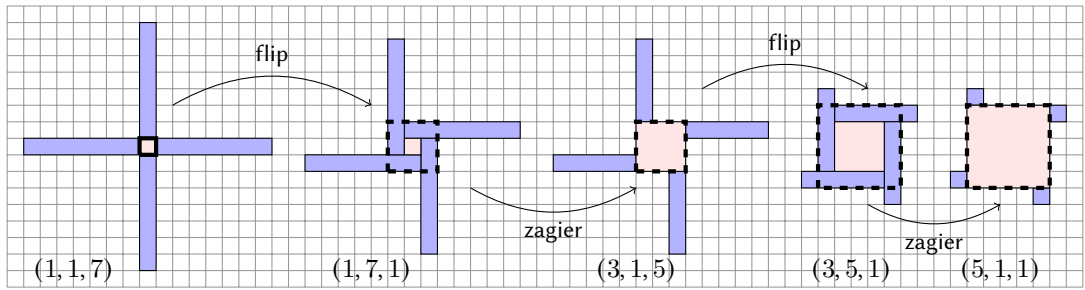
\begin{figure*}[h]  
\begin{center}
\begin{tikzpicture}[scale=0.22] 
\draw[step=1, color=white!60!black] (0,0) grid (65,17);
\windmill{8}{8}{1}{1}{7}  
\windmill{24}{8}{1}{7}{1} 
\windmill{38}{7}{3}{1}{5} 
\windmill{50}{7}{3}{5}{1} 
\windmill{58}{6}{5}{1}{1} 
\node at (4,1)  {$(1,1,7)$};
\node at (22,1) {$(1,7,1)$};
\node at (38,1) {$(3,1,5)$};
\node at (51,1) {$(3,5,1)$};
\node at (60,1) {$(5,1,1)$};
\mind[ultra thick]{8}{8}{1}
\mind{23}{7}{3}
\mind{38}{7}{3}
\mind{49}{6}{5}
\mind{58}{6}{5}
\begin{scope}[scale=10]
\coordinate (a) at (1.0,1.1); 
\coordinate (b) at (2.2,1.1); 
\draw[->] (a) to [bend left] node[midway,above] {\HOLConst{flip}} (b);
\coordinate (a) at (2.8,0.6); 
\coordinate (b) at (3.8,0.6); 
\draw[->] (a) to [bend right] node[midway,below] {\HOLConst{zagier}} (b);
\coordinate (a) at (4.2,1.2); 
\coordinate (b) at (5.2,1.2); 
\draw[->] (a) to [bend left] node[midway,above] {\HOLConst{flip}} (b);
\coordinate (a) at (5.2,0.5); 
\coordinate (b) at (6.0,0.5); 
\draw[->] (a) to [bend right] node[midway,below] {\HOLConst{zagier}} (b);
\end{scope}
\end{tikzpicture}
\caption{The iteration chain of \HOLinline{\HOLFreeVar{n}\;\HOLSymConst{=}\;\HOLNumLit{29}} by the composition \HOLinline{\HOLConst{zagier}\;\HOLSymConst{\HOLTokenCompose}\;\HOLConst{flip}}, from Zagier fix to flip fix.}
\Description{This figure show the iteration chain of n = 29, by the composition of first flip then Zagier. The chain starts from Zagier fix, ends in flip fix.}
\label{fig:zagier-flip-29} 
\end{center}
\end{figure*}
In terms of windmills, the flip map keeps the central square, but flips the arms of rectangles from $y$-by-$z$ to $z$-by-$y$. This generally changes the mind of the windmill. The Zagier map keeps the mind, but changes the central square.
Similar to the mind being an invariant of the Zagier map,
the absolute difference $\left|y - z\right|$ is an invariant of the flip map.
If the Zagier map can reduce this difference, the successive iterations of \HOLinline{\HOLConst{zagier}\;\HOLSymConst{\HOLTokenCompose}\;\HOLConst{flip}} will be able to locate the flip fixed point.

\subsection{Flip Fix Search}
\label{sec:flip-fix-search}
To find the fixed point of the flip map, we can experiment with this pseudo-code:

\bigskip
\fbox{\begin{minipage}{0.36\textwidth}
\begin{list}{$\circ$}{}
\item \emph{Input}: a number \HOLinline{\HOLFreeVar{n}\;\HOLSymConst{=}\;\HOLNumLit{4}\HOLSymConst{\ensuremath{}}\HOLFreeVar{k}\;\HOLSymConst{\ensuremath{+}}\;\HOLNumLit{1}}.
\item \emph{Output}: a triple fixed by the flip map.
\item \emph{Method}:
\item start with \HOLinline{\HOLFreeVar{u}\;\HOLSymConst{=}\;(\HOLNumLit{1}\HOLSymConst{,}\HOLNumLit{1}\HOLSymConst{,}\HOLFreeVar{k})}, the Zagier fix.
\item while ($u$ is not a flip fix) :
\item \qquad $u \leftarrow \HOLinline{(\HOLConst{zagier}\;\HOLSymConst{\HOLTokenCompose}\;\HOLConst{flip})\;\HOLFreeVar{u}}$
\item end while.
\end{list}
\end{minipage}}

\bigskip
\noindent
In an HOL4 interactive session, this pseudo-code can be implemented directly as:\footnote{This pseudo-code can be implemented directly in any programming language that supports while-loops and tuples.}
\begin{definition}
\label{def:two-sq-def}
Computing the flip fixed point of \HOLinline{\HOLFreeVar{n}\;\HOLSymConst{=}\;\HOLNumLit{4}\HOLSymConst{\ensuremath{}}\HOLFreeVar{k}\;\HOLSymConst{\ensuremath{+}}\;\HOLNumLit{1}} using a \HOLConst{WHILE} loop.
\begin{HOLmath}
\;\;\HOLConst{two_sq}\;\HOLFreeVar{n}\;\HOLTokenDefEquality{}\\
\;\;\;\;\HOLConst{WHILE}\;((\HOLSymConst{\HOLTokenNeg{}})\;\HOLSymConst{\HOLTokenCompose}\;\HOLConst{found})\;(\HOLConst{zagier}\;\HOLSymConst{\HOLTokenCompose}\;\HOLConst{flip})\;(\HOLNumLit{1}\HOLSymConst{,}\HOLNumLit{1}\HOLSymConst{,}\HOLFreeVar{n}\;\HOLConst{\HOLConst{div}}\;\HOLNumLit{4}),\\
\quad\textrm{where}
\;\;\HOLConst{found}\;(\HOLFreeVar{x}\HOLSymConst{,}\HOLFreeVar{y}\HOLSymConst{,}\HOLFreeVar{z})\;\HOLTokenDefEquality{}\;\HOLFreeVar{y}\;\HOLSymConst{=}\;\HOLFreeVar{z}
\end{HOLmath}
\end{definition}
\noindent
This simple while-loop may or may not terminate. We shall take up this issue in Section~\ref{sec:termination}.
For primes of the form \HOLinline{\HOLNumLit{4}\HOLSymConst{\ensuremath{}}\HOLFreeVar{k}\;\HOLSymConst{\ensuremath{+}}\;\HOLNumLit{1}}, it terminates and seems to work.
To prove its correctness, we shall develop a theory of permutation iteration, then apply the theory to this algorithm.

\section{Permutation Orbits}
\label{sec:orbits}
In general, the composition of two involutions is no longer an involution, but just a permutation.
Let $\varphi \colon S \rightarrow S$ be a permutation,
a bijection on the set $S$, denoted by \HOLinline{\HOLFreeVar{\varphi}\;\HOLConst{\HOLConst{permutes}}\;\HOLFreeVar{S}}.
For an element \HOLinline{\HOLFreeVar{x}\;\HOLSymConst{\HOLTokenIn{}}\;\HOLFreeVar{S}} 
the iteration sequence $\varphi(x)$, \HOLinline{\HOLFreeVar{\varphi}\ensuremath{\sp{\HOLNumLit{2}}(\HOLFreeVar{x})}}, \HOLinline{\HOLFreeVar{\varphi}\ensuremath{\sp{\HOLNumLit{3}}(\HOLFreeVar{x})}}, \etc, form its \emph{orbit}.
The smallest positive index $n$ such that \HOLinline{\HOLFreeVar{\varphi}\ensuremath{\sp{\HOLFreeVar{n}}(\HOLFreeVar{x})}\;\HOLSymConst{=}\;\HOLFreeVar{x}} is called the \emph{period} of $x$ under $\varphi$.
If such a positive index does not exist, the period is defined to be $0$.
In HOL4, the definition makes use of \HOLConst{OLEAST}, the optional \HOLConst{LEAST} operator:
\begin{definition}
\label{def:period-def}
The period of function iteration of an element is the least nonzero index for the element iterate to wrap around, otherwise zero.
\begin{HOLmath}
\;\;\HOLConst{\HOLConst{period}}\;\HOLFreeVar{\varphi}\;\HOLFreeVar{x}\;\HOLTokenDefEquality{}\\
\;\;\;\;\HOLKeyword{case}\;\HOLConst{OLEAST}\;\HOLBoundVar{k}.\;\HOLNumLit{0}\;\HOLSymConst{\HOLTokenLt{}}\;\HOLBoundVar{k}\;\HOLSymConst{\HOLTokenConj{}}\;\HOLFreeVar{\varphi}\ensuremath{\sp{\HOLBoundVar{k}}(\HOLFreeVar{x})}\;\HOLSymConst{=}\;\HOLFreeVar{x}\;\HOLKeyword{of}\\
\;\;\;\;\HOLTokenBar{}\;\HOLConst{\HOLConst{none}}\;\ensuremath{\triangleright}\;\HOLNumLit{0}\\
\;\;\;\;\HOLTokenBar{}\;\HOLConst{\HOLConst{some}}\;\HOLBoundVar{k}\;\ensuremath{\triangleright}\;\HOLBoundVar{k}
\end{HOLmath}
\end{definition}
\noindent
When the set $S$ is finite, the iterates cannot be always distinct.
Thus the permutation orbit of any $x \in S$ is finite, with a nonzero period,
denoted by \HOLinline{\HOLFreeVar{p}\;\HOLSymConst{=}\;\HOLConst{period}\;\HOLFreeVar{\varphi}\;\HOLFreeVar{x}}:
\begin{HOLmath}
\HOLTokenTurnstile{}\HOLConst{\HOLConst{finite}}\;\HOLFreeVar{S}\;\HOLSymConst{\HOLTokenConj{}}\;\HOLFreeVar{\varphi}\;\HOLConst{\HOLConst{permutes}}\;\HOLFreeVar{S}\;\HOLSymConst{\HOLTokenConj{}}\;\HOLFreeVar{x}\;\HOLSymConst{\HOLTokenIn{}}\;\HOLFreeVar{S}\;\HOLSymConst{\HOLTokenImp{}}\\
\;\;\;\;\;\;\;\HOLSymConst{\HOLTokenExists{}}\HOLBoundVar{p}.\;\HOLNumLit{0}\;\HOLSymConst{\HOLTokenLt{}}\;\HOLBoundVar{p}\;\HOLSymConst{\HOLTokenConj{}}\;\HOLBoundVar{p}\;\HOLSymConst{=}\;\HOLConst{\HOLConst{period}}\;\HOLFreeVar{\varphi}\;\HOLFreeVar{x}
\end{HOLmath}
and by definition the period is minimal, which means that there is no wrap around for element iterates when the index is less than the period:
\begin{HOLmath}
\HOLTokenTurnstile{}\HOLNumLit{0}\;\HOLSymConst{\HOLTokenLt{}}\;\HOLFreeVar{j}\;\HOLSymConst{\HOLTokenConj{}}\;\HOLFreeVar{j}\;\HOLSymConst{\HOLTokenLt{}}\;\HOLConst{\HOLConst{period}}\;\HOLFreeVar{\varphi}\;\HOLFreeVar{x}\;\HOLSymConst{\HOLTokenImp{}}\;\HOLFreeVar{\varphi}\ensuremath{\sp{\HOLFreeVar{j}}(\HOLFreeVar{x})}\;\HOLSymConst{\HOLTokenNotEqual{}}\;\HOLFreeVar{x}
\end{HOLmath}
This implies a criterion for an exponent index to be divisible by period:
\begin{theorem}
\label{thm:iterate-period-mod}
\script{iteration}{653}
For a nonzero period $p$ of $x$, $x$ is fixed by the $k$-th iterate of $\varphi$ if and only if $k$ is a multiple of period~$p$.
\begin{HOLmath}
\HOLTokenTurnstile{}\HOLNumLit{0}\;\HOLSymConst{\HOLTokenLt{}}\;\HOLFreeVar{p}\;\HOLSymConst{\HOLTokenConj{}}\;\HOLFreeVar{p}\;\HOLSymConst{=}\;\HOLConst{\HOLConst{period}}\;\HOLFreeVar{\varphi}\;\HOLFreeVar{x}\;\HOLSymConst{\HOLTokenImp{}}\\
\;\;\;\;\;\;\;(\HOLFreeVar{\varphi}\ensuremath{\sp{\HOLFreeVar{k}}(\HOLFreeVar{x})}\;\HOLSymConst{=}\;\HOLFreeVar{x}\;\HOLSymConst{\HOLTokenEquiv{}}\;\HOLFreeVar{k}\;\ensuremath{\equiv}\;\HOLNumLit{0}\;\ensuremath{(}\ensuremath{\bmod}\;\HOLFreeVar{p}\ensuremath{)})
\end{HOLmath}
\end{theorem}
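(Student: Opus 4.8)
The plan is to reduce the statement to the two facts already recorded immediately above it: that the period $p$ satisfies $\varphi^p(x) = x$ (which follows from the $\HOLConst{OLEAST}$ characterisation of $\HOLConst{period}\;\varphi\;x$ once we know $0 < p$), and that $p$ is minimal with this property, i.e. $\varphi^j(x) \neq x$ whenever $0 < j < p$. This is exactly the pattern of the classical fact that the order of an element divides every exponent that fixes it.

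First I would establish the auxiliary claim that $\varphi^{qp}(x) = x$ for every natural number $q$, by induction on $q$: the base case $q = 0$ is $\varphi^0(x) = x$, and the inductive step uses the exponent-addition law for iterated functions, $\varphi^{(q+1)p}(x) = \varphi^{qp}(\varphi^p(x)) = \varphi^{qp}(x) = x$. This claim immediately discharges the ($\Leftarrow$) direction: if $k \equiv 0 \pmod p$ then $k = qp$ for some $q$, hence $\varphi^k(x) = x$.

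For the ($\Rightarrow$) direction, assume $\varphi^k(x) = x$ and divide $k$ by $p$, writing $k = qp + r$ with $0 \le r < p$. Then, again by the exponent-addition law and the auxiliary claim, $x = \varphi^k(x) = \varphi^r(\varphi^{qp}(x)) = \varphi^r(x)$. Since $r < p$, minimality of $p$ rules out the case $0 < r < p$, so $r = 0$; therefore $p \mid k$, that is, $k \equiv 0 \pmod p$.

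The only mild subtlety is bookkeeping rather than a genuine obstacle: keeping the composition order straight in the exponent law $\varphi^{a+b} = \varphi^{a} \circ \varphi^{b}$, and invoking the right HOL4 lemmas to extract both $\varphi^p(x) = x$ and the minimality property from $p = \HOLConst{period}\;\varphi\;x$ together with $0 < p$. Both of these are already stated in the excerpt, so no additional development is required.
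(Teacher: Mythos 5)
Your proposal is correct and matches the paper's (implicit) argument: the paper presents this theorem as a direct consequence of the two facts it records just beforehand --- that $\varphi^{p}(x)=x$ with $p$ the least such positive index --- and the standard division-algorithm step you give ($k = qp + r$, then minimality forces $r=0$) is exactly how that consequence is drawn. No gaps; the exponent-addition bookkeeping you flag is indeed the only care point.
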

\noindent
Moreover, the period is the same for all iterates in the same orbit:
\begin{HOLmath}
\HOLTokenTurnstile{}\HOLConst{\HOLConst{finite}}\;\HOLFreeVar{S}\;\HOLSymConst{\HOLTokenConj{}}\;\HOLFreeVar{\varphi}\;\HOLConst{\HOLConst{permutes}}\;\HOLFreeVar{S}\;\HOLSymConst{\HOLTokenConj{}}\;\HOLFreeVar{x}\;\HOLSymConst{\HOLTokenIn{}}\;\HOLFreeVar{S}\;\HOLSymConst{\HOLTokenConj{}}\;\HOLFreeVar{y}\;\HOLSymConst{=}\;\HOLFreeVar{\varphi}\ensuremath{\sp{\HOLFreeVar{j}}(\HOLFreeVar{x})}\;\HOLSymConst{\HOLTokenImp{}}\\
\;\;\;\;\;\;\;\HOLConst{\HOLConst{period}}\;\HOLFreeVar{\varphi}\;\HOLFreeVar{y}\;\HOLSymConst{=}\;\HOLConst{\HOLConst{period}}\;\HOLFreeVar{\varphi}\;\HOLFreeVar{x}
\end{HOLmath}

\subsection{Involution Composition}
\label{sec:involution-composition}
When the permutation \HOLinline{\HOLFreeVar{\varphi}\;\HOLSymConst{=}\;\HOLFreeVar{f}\;\HOLSymConst{\HOLTokenCompose}\;\HOLFreeVar{g}}, a composition of two involutions $f$ and \HOLinline{\HOLFreeVar{g}}, we shall investigate whether their fixed points are connected by a chain of composition iterations.
Note the following pattern of function application:
\begin{equation*}
\label{eqn:function-assoc}
\begin{split}
f\ \circ\ (\HOLinline{\HOLFreeVar{g}}\ \circ\ f)\ \circ\ (\HOLinline{\HOLFreeVar{g}}\ \circ\ f)\ \circ\ (\HOLinline{\HOLFreeVar{g}}\ \circ\ f)\\
\ee (f\ \circ\ \HOLinline{\HOLFreeVar{g}})\ \circ\ (f\ \circ\ \HOLinline{\HOLFreeVar{g}})\ \circ\ (f\ \circ\ \HOLinline{\HOLFreeVar{g}})\ \circ\ f
\end{split}
\end{equation*}
by associativity.
Also, $(\HOLinline{\HOLFreeVar{f}\;\HOLSymConst{\HOLTokenCompose}\;\HOLFreeVar{g}})^{-1} \ee \HOLinline{\HOLFreeVar{g}}^{-1}\ \circ\ f^{-1} \ee \HOLinline{\HOLFreeVar{g}\;\HOLSymConst{\HOLTokenCompose}\;\HOLFreeVar{f}}$ for involutions,
so inverse is just reversal of application order in this case.
Let \HOLinline{\HOLFreeVar{p}\;\HOLSymConst{=}\;\HOLConst{\HOLConst{period}}\;(\HOLFreeVar{f}\;\HOLSymConst{\HOLTokenCompose}\;\HOLFreeVar{g})\;\HOLFreeVar{x}} for \HOLinline{\HOLFreeVar{x}\;\HOLSymConst{\HOLTokenIn{}}\;\HOLFreeVar{S}}.
With these notations, we can establish some basic results:
\begin{theorem}
\label{thm:involute-period-1}
\script{iterateCompose}{558}
When $f$ fixes $x$, the period for $x$ is $1$ if and only if \HOLinline{\HOLFreeVar{g}} also fixes $x$.
\begin{HOLmath}
\HOLTokenTurnstile{}\HOLFreeVar{f}\;\HOLConst{involute}\;\HOLFreeVar{S}\;\HOLSymConst{\HOLTokenConj{}}\;\HOLFreeVar{g}\;\HOLConst{involute}\;\HOLFreeVar{S}\;\HOLSymConst{\HOLTokenConj{}}\;\HOLFreeVar{x}\;\HOLSymConst{\HOLTokenIn{}}\;\HOLConst{fixes}\;\HOLFreeVar{f}\;\HOLFreeVar{S}\;\HOLSymConst{\HOLTokenConj{}}\\
\;\;\;\;\;\HOLFreeVar{p}\;\HOLSymConst{=}\;\HOLConst{\HOLConst{period}}\;(\HOLFreeVar{f}\;\HOLSymConst{\HOLTokenCompose}\;\HOLFreeVar{g})\;\HOLFreeVar{x}\;\HOLSymConst{\HOLTokenImp{}}\\
\;\;\;\;\;\;\;(\HOLFreeVar{p}\;\HOLSymConst{=}\;\HOLNumLit{1}\;\HOLSymConst{\HOLTokenEquiv{}}\;\HOLFreeVar{x}\;\HOLSymConst{\HOLTokenIn{}}\;\HOLConst{fixes}\;\HOLFreeVar{g}\;\HOLFreeVar{S})
\end{HOLmath}
\end{theorem}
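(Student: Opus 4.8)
The plan is to collapse the statement to a single algebraic identity for $f \circ g$ and then discharge it from the two involution laws.

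The first step is an auxiliary observation: for any $\varphi$ and any $x$, the period $\HOLConst{period}\;\varphi\;x$ equals $1$ exactly when $\varphi(x) = x$. This is read straight off Definition~\ref{def:period-def}: the \HOLConst{OLEAST} picks the least positive $k$ with $\varphi^{k}(x) = x$, and since no positive integer lies below $1$, it returns $1$ precisely when $\varphi(x) = x$; conversely $p = 1$ forces $\varphi^{1}(x) = x$. Equivalently, once $p$ is known to be nonzero this is just Theorem~\ref{thm:iterate-period-mod} with $k = 1$, since $1 \equiv 0 \pmod p$ iff $p = 1$. Instantiating $\varphi$ with $f \circ g$, the goal then reduces to showing
\[
f(g(x)) = x \iff g(x) = x,
\]
since $p = 1 \iff (f \circ g)(x) = x \iff f(g(x)) = x$.

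Next I would prove the two directions of this equivalence. The direction $(\Leftarrow)$ is one line: if $g(x) = x$ then $f(g(x)) = f(x) = x$, using $x \in \HOLConst{fixes}\;f\;S$. For $(\Rightarrow)$, assume $f(g(x)) = x$; from $x \in \HOLConst{fixes}\;f\;S$ we have $x \in S$, hence $g(x) \in S$ because $g$ \HOLConst{involute} $S$, so the involution law for $f$ on $S$ gives $f(f(g(x))) = g(x)$; rewriting the left-hand side first by the assumption and then by $f(x) = x$ yields $g(x) = x$, i.e.\ $x \in \HOLConst{fixes}\;g\;S$.

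I do not anticipate a genuine obstacle here --- the whole argument is three rewrites together with the \HOLConst{OLEAST} unfolding. The only points needing attention are tracking the set-membership side conditions ($x \in S$, and hence $g(x) \in S$) so that the \HOLConst{involute} equations are applicable, and handling the \HOLConst{OLEAST} case split in Definition~\ref{def:period-def} to obtain the ``$\HOLConst{period} = 1$'' characterisation; neither is deep. Note in particular that finiteness of $S$ is not needed for this result, in contrast to the surrounding orbit theorems.
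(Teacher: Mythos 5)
Your proposal is correct: the reduction of $p=1$ to $(f\circ g)(x)=x$ via the \HOLConst{OLEAST} definition, followed by the two-line equivalence $f(g(x))=x \iff g(x)=x$ using $f(x)=x$ and the involution law applied to $g(x)\in S$, is exactly the argument needed (the paper states this theorem without a prose proof, deferring to the script, and your route is the evident one it takes). Your observation that finiteness of $S$ is not required is also consistent with the theorem's stated hypotheses.
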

\noindent
Pick an element $x$ in the set $S$.
For involutions, an iterate of (\HOLinline{\HOLFreeVar{f}\;\HOLSymConst{\HOLTokenCompose}\;\HOLFreeVar{g}}) can be equal to another iterate of (\HOLinline{\HOLFreeVar{g}\;\HOLSymConst{\HOLTokenCompose}\;\HOLFreeVar{f}}):
\begin{theorem}
\label{thm:involute-mod-period}
\script{iterateCompose}{401}
The $i$-th iterate of (\HOLinline{\HOLFreeVar{f}\;\HOLSymConst{\HOLTokenCompose}\;\HOLFreeVar{g}}) equals the $j$-th iterate of (\HOLinline{\HOLFreeVar{g}\;\HOLSymConst{\HOLTokenCompose}\;\HOLFreeVar{f}}) if and only if (\HOLinline{\HOLFreeVar{i}\;\HOLSymConst{\ensuremath{+}}\;\HOLFreeVar{j}}) is a multiple of period $p$.
\begin{HOLmath}
\HOLTokenTurnstile{}\HOLConst{\HOLConst{finite}}\;\HOLFreeVar{S}\;\HOLSymConst{\HOLTokenConj{}}\;\HOLFreeVar{f}\;\HOLConst{involute}\;\HOLFreeVar{S}\;\HOLSymConst{\HOLTokenConj{}}\;\HOLFreeVar{g}\;\HOLConst{involute}\;\HOLFreeVar{S}\;\HOLSymConst{\HOLTokenConj{}}\;\HOLFreeVar{x}\;\HOLSymConst{\HOLTokenIn{}}\;\HOLFreeVar{S}\;\HOLSymConst{\HOLTokenConj{}}\\
\;\;\;\;\;\HOLFreeVar{p}\;\HOLSymConst{=}\;\HOLConst{\HOLConst{period}}\;(\HOLFreeVar{f}\;\HOLSymConst{\HOLTokenCompose}\;\HOLFreeVar{g})\;\HOLFreeVar{x}\;\HOLSymConst{\HOLTokenImp{}}\\
\;\;\;\;\;\;\;((\HOLFreeVar{f}\;\HOLSymConst{\HOLTokenCompose}\;\HOLFreeVar{g})\ensuremath{\sp{\HOLFreeVar{i}}(\HOLFreeVar{x})}\;\HOLSymConst{=}\;(\HOLFreeVar{g}\;\HOLSymConst{\HOLTokenCompose}\;\HOLFreeVar{f})\ensuremath{\sp{\HOLFreeVar{j}}(\HOLFreeVar{x})}\;\HOLSymConst{\HOLTokenEquiv{}}\;\HOLFreeVar{i}\;\HOLSymConst{\ensuremath{+}}\;\HOLFreeVar{j}\;\ensuremath{\equiv}\;\HOLNumLit{0}\;\ensuremath{(}\ensuremath{\bmod}\;\HOLFreeVar{p}\ensuremath{)})
\end{HOLmath}
\end{theorem}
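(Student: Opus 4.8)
The plan is to reduce the biconditional to Theorem~\ref{thm:iterate-period-mod} by exploiting that $g \circ f$ is the inverse of $f \circ g$ on $S$. Write $\varphi = f \circ g$ and $\psi = g \circ f$.

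First I would record the structural facts. Since $f$ and $g$ are involutions on $S$, both $\varphi$ and $\psi$ map $S$ into itself, and for every $s \in S$ we have $\varphi(\psi(s)) = s$ and $\psi(\varphi(s)) = s$: unfold the compositions and cancel using $f(f(\,\cdot\,)) = \cdot$ and $g(g(\,\cdot\,)) = \cdot$, with closure of $f$ and $g$ on $S$ keeping everything inside $S$ at each step. In particular $\varphi$ permutes $S$, so by the earlier orbit result (finite $S$, $\varphi$ permutes $S$, $x \in S$) the period $p = \HOLConst{period}\;\varphi\;x$ is nonzero, which is the hypothesis Theorem~\ref{thm:iterate-period-mod} needs.

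Next comes the key lemma, proved by induction on $j$: for all $s \in S$,
\[
\varphi^{j}(\psi^{j}(s)) \;=\; s .
\]
The base case is trivial; the step rewrites $\varphi^{j+1}\circ\psi^{j+1}$ as $\varphi^{j}\circ(\varphi\circ\psi)\circ\psi^{j}$, collapses the middle $\varphi\circ\psi$ to the identity on $S$, and applies the induction hypothesis — the only care needed is that the iterates of $s$ under $\varphi$ and under $\psi$ never leave $S$. With this lemma I would then establish the pivotal equivalence
\[
\varphi^{i}(x) = \psi^{j}(x) \quad\Longleftrightarrow\quad \varphi^{i+j}(x) = x :
\]
for the forward direction, apply $\varphi^{j}$ to both sides, turning the left side into $\varphi^{i+j}(x)$ and, by the key lemma, the right side into $x$; for the backward direction, note that $\varphi^{i+j}(x) = x$ reads $\varphi^{j}(\varphi^{i}(x)) = x$, while the key lemma gives $\varphi^{j}(\psi^{j}(x)) = x$, and since $\varphi^{j}$ is injective on $S$ and both $\varphi^{i}(x)$ and $\psi^{j}(x)$ lie in $S$, the two arguments must coincide. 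Finally, Theorem~\ref{thm:iterate-period-mod} applied with the nonzero period $p$ rewrites $\varphi^{i+j}(x) = x$ as $i + j \equiv 0 \pmod{p}$, which is exactly the claim.

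I expect the main obstacle to be the key lemma's induction together with the injectivity appeal: both rest on the invariant that every iterate of $x$ under $\varphi$ and under $\psi$ stays in $S$, so in HOL4 the lemma is best stated with an explicit membership hypothesis and discharged using closure of $f$, $g$, and their compositions on $S$; the ``period is constant along the orbit'' fact and Theorem~\ref{thm:iterate-period-mod} are then used off the shelf. Everything else is routine iterate-composition bookkeeping, chiefly $\varphi^{a+b} = \varphi^{a}\circ\varphi^{b}$.
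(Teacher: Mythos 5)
Your proposal is correct and follows exactly the route the paper sets up in the paragraph preceding the theorem: it uses $(f\circ g)^{-1}=g\circ f$ (your lemma $\varphi^{j}(\psi^{j}(s))=s$ is just the exponent-free way of saying $\psi^{j}=\varphi^{-j}$ on $S$) to convert $\varphi^{i}(x)=\psi^{j}(x)$ into $\varphi^{i+j}(x)=x$, and then closes with Theorem~\ref{thm:iterate-period-mod} after noting the period is nonzero on a finite set. The injectivity appeal and the membership bookkeeping you flag are exactly the right places to be careful in HOL4; no gaps.
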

\noindent
When $f$ fixes point $x$,
the iterates \HOLinline{(\HOLFreeVar{f}\;\HOLSymConst{\HOLTokenCompose}\;\HOLFreeVar{g})\ensuremath{\sp{\HOLFreeVar{i}}(\HOLFreeVar{x})}} and \HOLinline{(\HOLFreeVar{f}\;\HOLSymConst{\HOLTokenCompose}\;\HOLFreeVar{g})\ensuremath{\sp{\HOLFreeVar{j}}(\HOLFreeVar{x})}} are related when the sum (\HOLinline{\HOLFreeVar{i}\;\HOLSymConst{\ensuremath{+}}\;\HOLFreeVar{j}}) is special:
\begin{theorem}
\label{thm:involute-two-fix-orbit-1}
\script{iterateCompose}{583}
When $f$ fixes $x$, the $i$-th and $j$-th iterate of (\HOLinline{\HOLFreeVar{f}\;\HOLSymConst{\HOLTokenCompose}\;\HOLFreeVar{g}}) differ by one $f$ application if and only if (\HOLinline{\HOLFreeVar{i}\;\HOLSymConst{\ensuremath{+}}\;\HOLFreeVar{j}}) is a multiple of period $p$.
\begin{HOLmath}
\HOLTokenTurnstile{}\HOLConst{\HOLConst{finite}}\;\HOLFreeVar{S}\;\HOLSymConst{\HOLTokenConj{}}\;\HOLFreeVar{f}\;\HOLConst{involute}\;\HOLFreeVar{S}\;\HOLSymConst{\HOLTokenConj{}}\;\HOLFreeVar{g}\;\HOLConst{involute}\;\HOLFreeVar{S}\;\HOLSymConst{\HOLTokenConj{}}\\
\;\;\;\;\;\HOLFreeVar{x}\;\HOLSymConst{\HOLTokenIn{}}\;\HOLConst{fixes}\;\HOLFreeVar{f}\;\HOLFreeVar{S}\;\HOLSymConst{\HOLTokenConj{}}\;\HOLFreeVar{p}\;\HOLSymConst{=}\;\HOLConst{\HOLConst{period}}\;(\HOLFreeVar{f}\;\HOLSymConst{\HOLTokenCompose}\;\HOLFreeVar{g})\;\HOLFreeVar{x}\;\HOLSymConst{\HOLTokenImp{}}\\
\;\;\;\;\;\;\;((\HOLFreeVar{f}\;\HOLSymConst{\HOLTokenCompose}\;\HOLFreeVar{g})\ensuremath{\sp{\HOLFreeVar{i}}(\HOLFreeVar{x})}\;\HOLSymConst{=}\;\HOLFreeVar{f}\;((\HOLFreeVar{f}\;\HOLSymConst{\HOLTokenCompose}\;\HOLFreeVar{g})\ensuremath{\sp{\HOLFreeVar{j}}(\HOLFreeVar{x})})\;\HOLSymConst{\HOLTokenEquiv{}}\;\HOLFreeVar{i}\;\HOLSymConst{\ensuremath{+}}\;\HOLFreeVar{j}\;\ensuremath{\equiv}\;\HOLNumLit{0}\;\ensuremath{(}\ensuremath{\bmod}\;\HOLFreeVar{p}\ensuremath{)})
\end{HOLmath}
\end{theorem}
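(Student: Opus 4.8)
The plan is to deduce this from Theorem~\ref{thm:involute-mod-period} by rewriting the right-hand side into an iterate of $g \circ f$. First I would establish the conjugation identity $f \circ (f \circ g)^{j} = (g \circ f)^{j} \circ f$ on $S$, by induction on $j$. The base case is immediate. For the inductive step it suffices to know $f \circ (f \circ g) = (g \circ f) \circ f$ on $S$, which holds because both composites send $s \in S$ to $g(s)$: since $f$ and $g$ preserve $S$, we have $f(f(g(s))) = g(s)$ and $g(f(f(s))) = g(s)$. This small lemma may already be part of the iteration theory of Section~\ref{sec:orbits}; if not, its proof is a short induction of the kind hinted at by the associativity picture just before Theorem~\ref{thm:involute-period-1}.

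Granting the identity, and using that $f$ fixes $x$, I obtain
\[
f\bigl((f \circ g)^{j}(x)\bigr) = (g \circ f)^{j}(f(x)) = (g \circ f)^{j}(x).
\]
So the equation $(f \circ g)^{i}(x) = f\bigl((f \circ g)^{j}(x)\bigr)$ becomes literally $(f \circ g)^{i}(x) = (g \circ f)^{j}(x)$. All hypotheses of Theorem~\ref{thm:involute-mod-period} are in hand: $S$ is finite, $f$ and $g$ are involutions on $S$, $x \in S$ (since $x$ is fixed by $f$, hence lies in $S$), and $p$ is the period of $f \circ g$ at $x$. That theorem then gives exactly $i + j \equiv 0 \pmod{p}$, which is the claimed equivalence.

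The only genuine work is the conjugation identity together with the bookkeeping that every intermediate iterate stays inside $S$, so the involution law $f \circ f = \mathrm{id}$ on $S$ can legitimately be applied; the rest is a direct substitution, with no case analysis on $i$, $j$, or $p$, and finiteness is needed only insofar as Theorem~\ref{thm:involute-mod-period} (and the nonzero period $p$) requires it.
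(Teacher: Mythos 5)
Your proposal is correct and matches the paper's intended argument: the displayed associativity pattern and the observation that $(f\circ g)^{-1}=g\circ f$, given just before Theorem~\ref{thm:involute-period-1}, are precisely the conjugation identity $f\circ(f\circ g)^{j}=(g\circ f)^{j}\circ f$ on $S$ that you prove by induction, and combining it with $f(x)=x$ reduces the claim to Theorem~\ref{thm:involute-mod-period} exactly as you describe. The care you take that intermediate iterates remain in $S$ (so that $f\circ f=\mathrm{id}$ may be applied) is the right bookkeeping for the formal version.
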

\noindent
There is a related result, with a similar proof:
\begin{theorem}
\label{thm:involute-two-fix-orbit-2}
\script{iterateCompose}{689}
When $f$ fixes $x$, the $i$-th and $j$-th iterate of (\HOLinline{\HOLFreeVar{f}\;\HOLSymConst{\HOLTokenCompose}\;\HOLFreeVar{g}}) differ by one \HOLinline{\HOLFreeVar{g}} application if and only if (\HOLinline{\HOLFreeVar{i}\;\HOLSymConst{\ensuremath{+}}\;\HOLFreeVar{j}\;\HOLSymConst{\ensuremath{+}}\;\HOLNumLit{1}}) is a multiple of period $p$.
\begin{HOLmath}
\HOLTokenTurnstile{}\HOLConst{\HOLConst{finite}}\;\HOLFreeVar{S}\;\HOLSymConst{\HOLTokenConj{}}\;\HOLFreeVar{f}\;\HOLConst{involute}\;\HOLFreeVar{S}\;\HOLSymConst{\HOLTokenConj{}}\;\HOLFreeVar{g}\;\HOLConst{involute}\;\HOLFreeVar{S}\;\HOLSymConst{\HOLTokenConj{}}\\
\;\;\;\;\;\HOLFreeVar{x}\;\HOLSymConst{\HOLTokenIn{}}\;\HOLConst{fixes}\;\HOLFreeVar{f}\;\HOLFreeVar{S}\;\HOLSymConst{\HOLTokenConj{}}\;\HOLFreeVar{p}\;\HOLSymConst{=}\;\HOLConst{\HOLConst{period}}\;(\HOLFreeVar{f}\;\HOLSymConst{\HOLTokenCompose}\;\HOLFreeVar{g})\;\HOLFreeVar{x}\;\HOLSymConst{\HOLTokenImp{}}\\
\;\;\;\;\;\;\;((\HOLFreeVar{f}\;\HOLSymConst{\HOLTokenCompose}\;\HOLFreeVar{g})\ensuremath{\sp{\HOLFreeVar{i}}(\HOLFreeVar{x})}\;\HOLSymConst{=}\;\HOLFreeVar{g}\;((\HOLFreeVar{f}\;\HOLSymConst{\HOLTokenCompose}\;\HOLFreeVar{g})\ensuremath{\sp{\HOLFreeVar{j}}(\HOLFreeVar{x})})\;\HOLSymConst{\HOLTokenEquiv{}}\\
\;\;\;\;\;\;\;\;\;\;\HOLFreeVar{i}\;\HOLSymConst{\ensuremath{+}}\;\HOLFreeVar{j}\;\HOLSymConst{\ensuremath{+}}\;\HOLNumLit{1}\;\ensuremath{\equiv}\;\HOLNumLit{0}\;\ensuremath{(}\ensuremath{\bmod}\;\HOLFreeVar{p}\ensuremath{)})
\end{HOLmath}
\end{theorem}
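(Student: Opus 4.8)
The plan is to derive this from Theorem~\ref{thm:involute-two-fix-orbit-1} via a short algebraic identity, rather than repeating that theorem's argument. Write $\varphi = f \circ g$. Since $f$ is an involution on $S$ and $g$ sends $S$ into $S$, for every $w \in S$ we have $f(\varphi(w)) = f(f(g(w))) = g(w)$; that is, $g$ coincides with $f \circ \varphi$ on $S$.

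First I would fix $i$ and $j$ and put $w = \varphi^j(x)$. As $\varphi$ is a composition of two involutions of $S$ it permutes $S$, and $x \in S$ because $x \in \HOLConst{fixes}\;f\;S$; hence $w \in S$, so the identity above gives $g(\varphi^j(x)) = f(\varphi(w)) = f(\varphi^{j+1}(x))$. Next I would instantiate Theorem~\ref{thm:involute-two-fix-orbit-1} with its index $j$ replaced by $j+1$ (all hypotheses are unchanged), obtaining $\varphi^i(x) = f(\varphi^{j+1}(x)) \iff i + (j+1) \equiv 0 \pmod{p}$. Chaining the two gives $\varphi^i(x) = g(\varphi^j(x)) \iff i + j + 1 \equiv 0 \pmod{p}$, which is the claim.

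The only delicate point is the bookkeeping of set membership, needed so that $f \circ f = \mathrm{id}$ and the permutation-closure of $\varphi$ actually apply; this is routine once \HOLConst{involute} and \HOLConst{permutes} are unfolded, and I expect no real obstacle. A self-contained alternative would mirror the proof of Theorem~\ref{thm:involute-two-fix-orbit-1}: using that $f$ fixes $x$, one shows $g(\varphi^j(x)) = \varphi^{-(j+1)}(x)$ and then invokes Theorem~\ref{thm:iterate-period-mod}; but this forces rewriting inverse iterates as positive powers on the orbit (via $\varphi^p = \mathrm{id}$ there), so the reduction to Theorem~\ref{thm:involute-two-fix-orbit-1} is the cleaner route. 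In HOL4 the whole argument should come down to a few lines of rewriting.
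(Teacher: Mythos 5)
Your proof is correct. The identity $f(\varphi(w)) = f(f(g(w))) = g(w)$ for $w \in S$ is valid because $g$ maps $S$ into $S$ and $f$ is self-inverse on $S$; the membership $\varphi^j(x) \in S$ follows since $\varphi = f \circ g$ permutes $S$ and $x \in S$; and Theorem~\ref{thm:involute-two-fix-orbit-1} is stated for arbitrary indices, so instantiating it at $j+1$ is legitimate. Chaining the two equivalences gives exactly the stated biconditional.

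Your route does differ from the paper's. The paper offers no displayed proof for this theorem, saying only that it is ``a related result, with a similar proof'' to Theorem~\ref{thm:involute-two-fix-orbit-1} --- i.e.\ a parallel argument, presumably instantiating Theorem~\ref{thm:involute-mod-period} at index $j+1$ and using the regrouping $(g \circ f)^{j+1} = g \circ (f \circ g)^{j} \circ f$ together with $f(x)=x$ to identify $(g \circ f)^{j+1}(x)$ with $g((f\circ g)^{j}(x))$. You instead treat Theorem~\ref{thm:involute-two-fix-orbit-1} as a black box and convert $g(\varphi^{j}(x))$ into $f(\varphi^{j+1}(x))$ pointwise. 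Both proofs turn on the same index shift by one; yours avoids re-deriving anything from Theorem~\ref{thm:involute-mod-period} and makes the relationship between the two orbit theorems explicit, at the cost of a small amount of set-membership bookkeeping, which you correctly identify as the only point needing care.
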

\noindent
These theorems are useful in the study of iteration orbits starting from fixed points.

\subsection{Period Parity}
\label{sec:period-parity}
Given a finite set $S$, and an element $x \in S$, the iterates \HOLinline{(\HOLFreeVar{f}\;\HOLSymConst{\HOLTokenCompose}\;\HOLFreeVar{g})\ensuremath{\sp{\HOLFreeVar{j}}(\HOLFreeVar{x})}} form an orbit, with length equal to the period \HOLinline{\HOLFreeVar{p}\;\HOLSymConst{=}\;\HOLConst{\HOLConst{period}}\;(\HOLFreeVar{f}\;\HOLSymConst{\HOLTokenCompose}\;\HOLFreeVar{g})\;\HOLFreeVar{x}}. 
Figure~\ref{fig:orbits-even-odd} shows two orbits, one with an even period, the other with an odd period. 

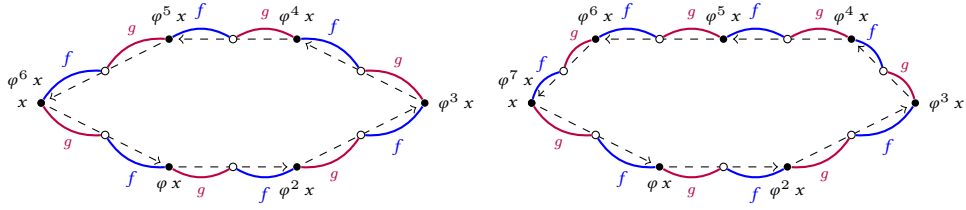
\begin{figure*}[h]
\centering
\begin{tikzpicture}[scale=1.7, 
    ele/.style={fill=black,circle,minimum width=.8pt,inner sep=1pt},
    every fit/.style={ellipse,draw,inner sep=5pt}]
  \node[ele,label=left:{\tiny{$x$}}] (a) at (0,0.5) {};    
  \node[ele,label=below:{\tiny{$\varphi\ x$}}] (b) at (1,0) {};
  \node[ele,label=below:{\tiny{$\varphi^{2}\ x$}}] (c) at (2,0) {};    
  \node[ele,label=right:{\tiny{$\varphi^{3}\ x$}}] (d) at (3,0.5) {};
  \node[ele,label=above:{\tiny{$\varphi^{4}\ x$}}] (e) at (2,1) {};
  \node[ele,label=above:{\tiny{$\varphi^{5}\ x$}}] (f) at (1,1) {};
  \node[ele,label=above:{\tiny{$\varphi^{6}\ x\qquad$}}] (g) at (0,0.5) {};
  \draw[->,dashed,shorten <=2pt,shorten >=2] (a) -- (b);
  \draw[->,dashed,shorten <=2pt,shorten >=2] (b) -- (c);
  \draw[->,dashed,shorten <=2pt,shorten >=2] (c) -- (d);
  \draw[->,dashed,shorten <=2pt,shorten >=2] (d) -- (e);
  \draw[->,dashed,shorten <=2pt,shorten >=2] (e) -- (f);
  \draw[->,dashed,shorten <=2pt,shorten >=2] (f) -- (g);
  \node[ele,draw,fill=white] (ab) at ($(a)!0.5!(b)$) {};
  \node[ele,draw,fill=white] (bc) at ($(b)!0.5!(c)$) {};
  \node[ele,draw,fill=white] (cd) at ($(c)!0.5!(d)$) {};
  \node[ele,draw,fill=white] (de) at ($(d)!0.5!(e)$) {};
  \node[ele,draw,fill=white] (ef) at ($(e)!0.5!(f)$) {};
  \node[ele,draw,fill=white] (fg) at ($(f)!0.5!(g)$) {};
  \draw[thick,color=purple] (a) to [bend right] node[midway,below] {\tiny{\HOLinline{\HOLFreeVar{g}}}} (ab);
  \draw[thick,color=blue] (ab) to [bend right] node[midway,below] {\tiny{$f$}} (b);
  \draw[thick,color=purple] (b) to [bend right] node[midway,below] {\tiny{\HOLinline{\HOLFreeVar{g}}}} (bc);
  \draw[thick,color=blue] (bc) to [bend right] node[midway,below] {\tiny{$f$}} (c);
  \draw[thick,color=purple] (c) to [bend right] node[midway,below] {\tiny{\HOLinline{\HOLFreeVar{g}}}} (cd);
  \draw[thick,color=blue] (cd) to [bend right] node[midway,below] {\tiny{$f$}} (d);
  \draw[thick,color=purple] (d) to [bend right] node[midway,above] {\tiny{\HOLinline{\HOLFreeVar{g}}}} (de);
  \draw[thick,color=blue] (de) to [bend right] node[midway,above] {\tiny{$f$}} (e);
  \draw[thick,color=purple] (e) to [bend right] node[midway,above] {\tiny{\HOLinline{\HOLFreeVar{g}}}} (ef);
  \draw[thick,color=blue] (ef) to [bend right] node[midway,above] {\tiny{$f$}} (f);
  \draw[thick,color=purple] (f) to [bend right] node[midway,above] {\tiny{\HOLinline{\HOLFreeVar{g}}}} (fg);
  \draw[thick,color=blue] (fg) to [bend right] node[midway,above] {\tiny{$f$}} (g);
\end{tikzpicture}
\begin{tikzpicture}[scale=1.7, 
    ele/.style={fill=black,circle,minimum width=.8pt,inner sep=1pt},
    every fit/.style={ellipse,draw,inner sep=5pt}]
  \node[ele,label=left:{\tiny{$x$}}] (a) at (0,0.5) {};    
  \node[ele,label=below:{\tiny{$\varphi\ x$}}] (b) at (1,0) {};
  \node[ele,label=below:{\tiny{$\varphi^{2}\ x$}}] (c) at (2,0) {};    
  \node[ele,label=right:{\tiny{$\varphi^{3}\ x$}}] (d) at (3,0.5) {};
  \node[ele,label=above:{\tiny{$\varphi^{4}\ x$}}] (e) at (2.5,1) {};
  \node[ele,label=above:{\tiny{$\varphi^{5}\ x$}}] (f) at (1.5,1) {};
  \node[ele,label=above:{\tiny{$\varphi^{6}\ x$}}] (g) at (0.5,1) {};
  \node[ele,label=above:{\tiny{$\varphi^{7}\ x\qquad$}}] (h) at (0,0.5) {};
  \draw[->,dashed,shorten <=2pt,shorten >=2] (a) -- (b);
  \draw[->,dashed,shorten <=2pt,shorten >=2] (b) -- (c);
  \draw[->,dashed,shorten <=2pt,shorten >=2] (c) -- (d);
  \draw[->,dashed,shorten <=2pt,shorten >=2] (d) -- (e);
  \draw[->,dashed,shorten <=2pt,shorten >=2] (e) -- (f);
  \draw[->,dashed,shorten <=2pt,shorten >=2] (f) -- (g);
  \draw[->,dashed,shorten <=2pt,shorten >=2] (g) -- (h);
  \node[ele,draw,fill=white] (ab) at ($(a)!0.5!(b)$) {};
  \node[ele,draw,fill=white] (bc) at ($(b)!0.5!(c)$) {};
  \node[ele,draw,fill=white] (cd) at ($(c)!0.5!(d)$) {};
  \node[ele,draw,fill=white] (de) at ($(d)!0.5!(e)$) {};
  \node[ele,draw,fill=white] (ef) at ($(e)!0.5!(f)$) {};
  \node[ele,draw,fill=white] (fg) at ($(f)!0.5!(g)$) {};
  \node[ele,draw,fill=white] (gh) at ($(g)!0.5!(h)$) {};
  \draw[thick,color=purple] (a) to [bend right] node[midway,below] {\tiny{\HOLinline{\HOLFreeVar{g}}}} (ab);
  \draw[thick,color=blue] (ab) to [bend right] node[midway,below] {\tiny{$f$}} (b);
  \draw[thick,color=purple] (b) to [bend right] node[midway,below] {\tiny{\HOLinline{\HOLFreeVar{g}}}} (bc);
  \draw[thick,color=blue] (bc) to [bend right] node[midway,below] {\tiny{$f$}} (c);
  \draw[thick,color=purple] (c) to [bend right] node[midway,below] {\tiny{\HOLinline{\HOLFreeVar{g}}}} (cd);
  \draw[thick,color=blue] (cd) to [bend right] node[midway,below] {\tiny{$f$}} (d);
  \draw[thick,color=purple] (d) to [bend right] node[midway,above] {\tiny{\HOLinline{\HOLFreeVar{g}}}} (de);
  \draw[thick,color=blue] (de) to [bend right] node[midway,above] {\tiny{$f$}} (e);
  \draw[thick,color=purple] (e) to [bend right] node[midway,above] {\tiny{\HOLinline{\HOLFreeVar{g}}}} (ef);
  \draw[thick,color=blue] (ef) to [bend right] node[midway,above] {\tiny{$f$}} (f);
  \draw[thick,color=purple] (f) to [bend right] node[midway,above] {\tiny{\HOLinline{\HOLFreeVar{g}}}} (fg);
  \draw[thick,color=blue] (fg) to [bend right] node[midway,above] {\tiny{$f$}} (g);
  \draw[thick,color=purple] (g) to [bend right] node[midway,above] {\tiny{\HOLinline{\HOLFreeVar{g}}}} (gh);
  \draw[thick,color=blue] (gh) to [bend right] node[midway,above] {\tiny{$f$}} (h);
\end{tikzpicture}
\caption{Orbits of \HOLinline{\HOLFreeVar{\varphi}\;\HOLSymConst{=}\;\HOLFreeVar{f}\;\HOLSymConst{\HOLTokenCompose}\;\HOLFreeVar{g}} for point $x$. Left one has even period $6$, right one has odd period $7$.}
\Description{This figure shows the orbits for point x of the composition: first g than f. Left one has even period 6, right one has odd period 7.}
\label{fig:orbits-even-odd} 
\end{figure*}

In the figure, black dots indicate iterates of \HOLinline{\HOLFreeVar{\varphi}\;\HOLSymConst{=}\;\HOLFreeVar{f}\;\HOLSymConst{\HOLTokenCompose}\;\HOLFreeVar{g}}, in dashes, and white dots indicate the intermediates, with \HOLinline{\HOLFreeVar{g}} first, then $f$, through the arcs.
Since $f$ and \HOLinline{\HOLFreeVar{g}} are involutions, the arcs can go both ways: forward or backward.

Let $\alpha$ denote a fixed point of $f$, and $\beta$ denote a fixed point of \HOLinline{\HOLFreeVar{g}},
\ie, \HOLinline{\HOLFreeVar{f}\;\HOLFreeVar{\alpha}\;\HOLSymConst{=}\;\HOLFreeVar{\alpha}}, and \HOLinline{\HOLFreeVar{g}\;\HOLFreeVar{\beta}\;\HOLSymConst{=}\;\HOLFreeVar{\beta}}.
We shall look at how these fixed points are related, which is crucial in the correctness proof of our algorithm (see Definition~\ref{def:two-sq-def}).

\subsection{Fixed Point Period Even}
\label{sec:fixed-point-period-even}
Consider an orbit with even period starting with $\alpha$, a fixed point of $f$.
Figure~\ref{fig:orbit-fix-even} shows one on the left, and its real picture on the right.

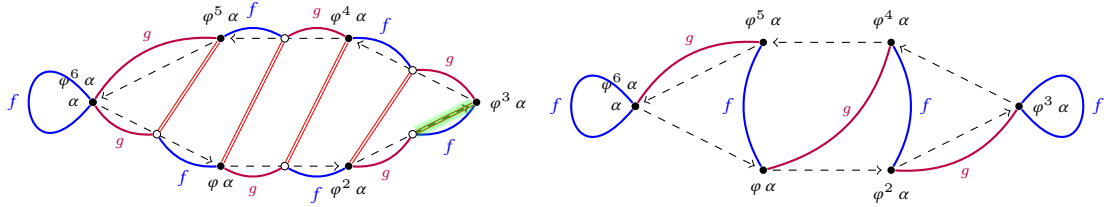
\begin{figure*}[h]
\centering
\begin{tikzpicture}[scale=1.7, 
    ele/.style={fill=black,circle,minimum width=.8pt,inner sep=1pt},
    every fit/.style={ellipse,draw,inner sep=5pt}]
  \node[ele,label=left:{\tiny{$\alpha$}}] (a) at (0,0.5) {};    
  \node[ele,label=below:{\tiny{$\varphi\ \alpha$}}] (b) at (1,0) {};
  \node[ele,label=below:{\tiny{$\varphi^{2}\ \alpha$}}] (c) at (2,0) {};    
  \node[ele,label=right:{\tiny{$\varphi^{3}\ \alpha$}}] (d) at (3,0.5) {};
  \node[ele,label=above:{\tiny{$\varphi^{4}\ \alpha$}}] (e) at (2,1) {};
  \node[ele,label=above:{\tiny{$\varphi^{5}\ \alpha$}}] (f) at (1,1) {};
  \node[ele,label=above:{\tiny{$\varphi^{6}\ \alpha\qquad$}}] (g) at (0,0.5) {};
  \draw[->,dashed,shorten <=2pt,shorten >=2] (a) -- (b);
  \draw[->,dashed,shorten <=2pt,shorten >=2] (b) -- (c);
  \draw[->,dashed,shorten <=2pt,shorten >=2] (c) -- (d);
  \draw[->,dashed,shorten <=2pt,shorten >=2] (d) -- (e);
  \draw[->,dashed,shorten <=2pt,shorten >=2] (e) -- (f);
  \draw[->,dashed,shorten <=2pt,shorten >=2] (f) -- (g);
  \node[ele,draw,fill=white] (ab) at ($(a)!0.5!(b)$) {};
  \node[ele,draw,fill=white] (bc) at ($(b)!0.5!(c)$) {};
  \node[ele,draw,fill=white] (cd) at ($(c)!0.5!(d)$) {};
  \node[ele,draw,fill=white] (de) at ($(d)!0.5!(e)$) {};
  \node[ele,draw,fill=white] (ef) at ($(e)!0.5!(f)$) {};
  \draw[thick,color=blue] (a) to [out=130, in=-130,looseness=50]
       node[midway,left] {\tiny{$f$}} (a);
  \draw[thick,color=purple] (a) to [bend right] node[midway,below] {\tiny{\HOLinline{\HOLFreeVar{g}}}} (ab);
  \draw[thick,color=blue] (ab) to [bend right] node[midway,below] {\tiny{$f$}} (b);
  \draw[thick,color=purple] (b) to [bend right] node[midway,below] {\tiny{\HOLinline{\HOLFreeVar{g}}}} (bc);
  \draw[thick,color=blue] (bc) to [bend right] node[midway,below] {\tiny{$f$}} (c);
  \draw[thick,color=purple] (c) to [bend right] node[midway,below] {\tiny{\HOLinline{\HOLFreeVar{g}}}} (cd);
  \draw[thick,color=blue] (cd) to [bend right] node[midway,below] {\tiny{$f$}} (d);
  \draw[thick,color=purple] (d) to [bend right] node[midway,above] {\tiny{\HOLinline{\HOLFreeVar{g}}}} (de);
  \draw[thick,color=blue] (de) to [bend right] node[midway,above] {\tiny{$f$}} (e);
  \draw[thick,color=purple] (e) to [bend right] node[midway,above] {\tiny{\HOLinline{\HOLFreeVar{g}}}} (ef);
  \draw[thick,color=blue] (ef) to [bend right] node[midway,above] {\tiny{$f$}} (f);
  \draw[thick,color=purple] (f) to [bend right] node[midway,above] {\tiny{\HOLinline{\HOLFreeVar{g}}}} (g);
  \DoubleLine[0.3pt]{f}{ab}{red}{red}
  \DoubleLine[0.3pt]{ef}{b}{red}{red}
  \DoubleLine[0.3pt]{e}{bc}{red}{red}
  \DoubleLine[0.3pt]{de}{c}{red}{red}
  \DoubleLine[0.3pt]{cd}{d}{red}{red}
  \path[ultra thick, glow=green] (cd) -- (d);
\end{tikzpicture}
\begin{tikzpicture}[scale=1.7, 
    ele/.style={fill=black,circle,minimum width=.8pt,inner sep=1pt},
    every fit/.style={ellipse,draw,inner sep=5pt}]
  \node[ele,label=left:{\tiny{$\alpha$}}] (a) at (0,0.5) {};    
  \node[ele,label=below:{\tiny{$\varphi\ \alpha$}}] (b) at (1,0) {};
  \node[ele,label=below:{\tiny{$\varphi^{2}\ \alpha$}}] (c) at (2,0) {};    
  \node[ele,label=right:{\tiny{$\varphi^{3}\ \alpha$}}] (d) at (3,0.5) {};
  \node[ele,label=above:{\tiny{$\varphi^{4}\ \alpha$}}] (e) at (2,1) {};
  \node[ele,label=above:{\tiny{$\varphi^{5}\ \alpha$}}] (f) at (1,1) {};
  \node[ele,label=above:{\tiny{$\varphi^{6}\ \alpha\qquad$}}] (g) at (0,0.5) {};
  \draw[->,dashed,shorten <=2pt,shorten >=2] (a) -- (b);
  \draw[->,dashed,shorten <=2pt,shorten >=2] (b) -- (c);
  \draw[->,dashed,shorten <=2pt,shorten >=2] (c) -- (d);
  \draw[->,dashed,shorten <=2pt,shorten >=2] (d) -- (e);
  \draw[->,dashed,shorten <=2pt,shorten >=2] (e) -- (f);
  \draw[->,dashed,shorten <=2pt,shorten >=2] (f) -- (g);
  \draw[thick,color=blue] (a) to [out=130, in=-130,looseness=50]
       node[midway,left] {\tiny{$f$}} (a);
  \draw[thick,color=blue] (f) to [bend right] node[midway,left] {\tiny{$f$}} (b);
  \draw[thick,color=blue] (e) to [bend left] node[midway,right] {\tiny{$f$}} (c);
  \draw[thick,color=blue] (d) to [out=50, in=-50,looseness=50]
       node[midway,right] {\tiny{$f$}} (d);
  \draw[thick,color=purple] (c) to [bend right] node[midway,below] {\tiny{\HOLinline{\HOLFreeVar{g}}}} (d);
  \draw[thick,color=purple] (e) to [bend left] node[midway,above] {\tiny{\HOLinline{\HOLFreeVar{g}}}} (b);
  \draw[thick,color=purple] (f) to [bend right] node[midway,above] {\tiny{\HOLinline{\HOLFreeVar{g}}}} (g);
\end{tikzpicture}
\caption{Orbit from an $f$ fixed point $\alpha$ with even period $6$. Identical points on the left (marked by two parallel lines) are merged on the right (move white dot to black dot). In particular, on the left the two vertices of the shaded line are the same, forming a fixed point of $f$ on the right.}
\Description{This figure shows an orbit from an f fixed point alpha, with even period 6. Identical points on the left, marked by two parallel lines, are merged on the right, by moving white dot to black dot. In particular, on the left the two vertices of the shaded line are the same, forming a fixed point of f on the right.}
\label{fig:orbit-fix-even} 
\end{figure*}

This orbit is formed by taking the left diagram of Figure~\ref{fig:orbits-even-odd},
but identifying the black dot on $\alpha$ (the leftmost one) with its preceding white dot from $f$, since \HOLinline{\HOLFreeVar{f}\;\HOLFreeVar{\alpha}\;\HOLSymConst{=}\;\HOLFreeVar{\alpha}}, giving the left $f$-loop.
This node $\alpha$ is now preceded by two \HOLinline{\HOLFreeVar{g}}-arcs, one from a black dot and one from a white dot. However, \HOLinline{\HOLFreeVar{g}} is an involution, which is injective, so the two dots are identical. The same reasoning shows that all the dots linked by double lines are identical, so that the orbit on the left can be simplified to the one on the right, taking only black dots.

Moreover, the rightmost black dot and a preceding white dot from $f$ must be the same, due to \HOLinline{\HOLFreeVar{g}}-arcs from identical dots.
This means the half-period iterate, the rightmost black dot, is another fixed point of $f$, say $\alpha'$.
Note that $\alpha' \ne \alpha$, for otherwise the period will be affected.
This example motivates the following:
\begin{theorem}
\label{thm:involute-two-fixes-even}
\script{iterateCompose}{884}
When $f$ fixes $x$, and (\HOLinline{\HOLFreeVar{f}\;\HOLSymConst{\HOLTokenCompose}\;\HOLFreeVar{g}}) has an even period $p$ for $x$,
then $f$ also fixes \HOLinline{(\HOLFreeVar{f}\;\HOLSymConst{\HOLTokenCompose}\;\HOLFreeVar{g})\ensuremath{\sp{\HOLFreeVar{p}\;\HOLConst{div}\;2}(\HOLFreeVar{x})}}, which is not $x$ itself.
\begin{HOLmath}
\HOLTokenTurnstile{}\HOLConst{\HOLConst{finite}}\;\HOLFreeVar{S}\;\HOLSymConst{\HOLTokenConj{}}\;\HOLFreeVar{f}\;\HOLConst{involute}\;\HOLFreeVar{S}\;\HOLSymConst{\HOLTokenConj{}}\;\HOLFreeVar{g}\;\HOLConst{involute}\;\HOLFreeVar{S}\;\HOLSymConst{\HOLTokenConj{}}\\
\;\;\;\;\;\HOLFreeVar{x}\;\HOLSymConst{\HOLTokenIn{}}\;\HOLConst{fixes}\;\HOLFreeVar{f}\;\HOLFreeVar{S}\;\HOLSymConst{\HOLTokenConj{}}\;\HOLFreeVar{p}\;\HOLSymConst{=}\;\HOLConst{\HOLConst{period}}\;(\HOLFreeVar{f}\;\HOLSymConst{\HOLTokenCompose}\;\HOLFreeVar{g})\;\HOLFreeVar{x}\;\HOLSymConst{\HOLTokenConj{}}\\
\;\;\;\;\;\HOLFreeVar{y}\;\HOLSymConst{=}\;(\HOLFreeVar{f}\;\HOLSymConst{\HOLTokenCompose}\;\HOLFreeVar{g})\ensuremath{\sp{\HOLFreeVar{p}\;\HOLConst{div}\;2}(\HOLFreeVar{x})}\;\HOLSymConst{\HOLTokenConj{}}\;\HOLConst{\HOLConst{even}}\;\HOLFreeVar{p}\;\HOLSymConst{\HOLTokenImp{}}\\
\;\;\;\;\;\;\;\HOLFreeVar{y}\;\HOLSymConst{\HOLTokenIn{}}\;\HOLConst{fixes}\;\HOLFreeVar{f}\;\HOLFreeVar{S}\;\HOLSymConst{\HOLTokenConj{}}\;\HOLFreeVar{y}\;\HOLSymConst{\HOLTokenNotEqual{}}\;\HOLFreeVar{x}
\end{HOLmath}
\end{theorem}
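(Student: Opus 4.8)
The plan is to reduce the claim to the orbit-counting lemmas already proved, principally Theorem~\ref{thm:involute-two-fix-orbit-1}. I would first dispatch the easy bookkeeping. Since $f$ and $g$ are involutions on $S$, each restricts to a bijection $S \leftrightarrow S$, so $f \circ g$ permutes $S$; as $S$ is finite and $x \in S$, the period $p = \HOLConst{period}\;(f \circ g)\;x$ is nonzero (the orbit-finiteness fact stated just after Definition~\ref{def:period-def}). Together with $\HOLConst{even}\;p$ this gives $p \ge 2$, hence $0 < p/2 < p$ and $p/2 + p/2 = p$ (integer division being exact here because $p$ is even). Also $y = (f \circ g)^{p/2}(x) \in S$, since iterating a permutation of $S$ from a point of $S$ stays inside $S$, and $x \in \HOLConst{fixes}\;f\;S \subseteq S$.

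For $y \ne x$: by minimality of the period --- the no-wrap-around property displayed just before Theorem~\ref{thm:iterate-period-mod}, equivalently Theorem~\ref{thm:iterate-period-mod} instantiated at $k = p/2$ --- we have $(f \circ g)^{j}(x) \ne x$ for every $j$ with $0 < j < p$; taking $j = p/2$ yields $y \ne x$.

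For $f$ fixing $y$, that is $f\;\bigl((f \circ g)^{p/2}(x)\bigr) = (f \circ g)^{p/2}(x)$: I would apply Theorem~\ref{thm:involute-two-fix-orbit-1} with $i = j = p/2$. Its hypotheses --- $S$ finite, $f$ and $g$ involutions on $S$, $x \in \HOLConst{fixes}\;f\;S$, and $p = \HOLConst{period}\;(f \circ g)\;x$ --- are exactly our standing assumptions, so it gives $(f \circ g)^{i}(x) = f\;\bigl((f \circ g)^{j}(x)\bigr) \iff i + j \equiv 0 \pmod p$. Since $i + j = p \equiv 0 \pmod p$, the equality holds; reading it as $y = f\;y$ and combining with $y \in S$ gives $y \in \HOLConst{fixes}\;f\;S$, completing the proof.

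The argument is essentially mechanical once Theorem~\ref{thm:involute-two-fix-orbit-1} is available --- this is precisely the diagram-merging observation illustrated in Figure~\ref{fig:orbit-fix-even}, where the half-period black dot becomes the new $f$-fixed point. The only points needing a little care are the elementary arithmetic $\HOLConst{even}\;p \wedge 0 < p \Rightarrow 0 < p/2 < p \wedge 2(p/2) = p$, and verifying that $p$ is genuinely nonzero, which in turn rests on $f \circ g$ being a true permutation of the finite set $S$; I foresee no conceptual obstacle beyond this.
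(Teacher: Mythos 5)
Your proposal is correct and follows essentially the same route as the paper: both obtain $y = f\,y$ by instantiating Theorem~\ref{thm:involute-two-fix-orbit-1} at $i = j = p\;\HOLConst{\HOLConst{div}}\;2$ (using $p$ even so that $i+j=p\equiv 0 \pmod p$), and both derive $y\neq x$ from period minimality via Theorem~\ref{thm:iterate-period-mod}, the paper phrasing it as a contradiction ($p \mid p\;\HOLConst{\HOLConst{div}}\;2$ forces $p=1$, not even) and you phrasing it directly through the no-wrap-around property with $0 < p\;\HOLConst{\HOLConst{div}}\;2 < p$. The two are logically equivalent, and your bookkeeping (nonzero period from finiteness, $y\in S$ from $f\circ g$ permuting $S$) matches the paper's.
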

\begin{proof}
First we show that $f$ fixes $y$.
Let \HOLinline{\HOLFreeVar{h}\;\HOLSymConst{=}\;\HOLFreeVar{p}\;\HOLConst{\HOLConst{div}}\;\HOLNumLit{2}}.
Since period $p$ is even, $p \ee \HOLinline{\HOLNumLit{2}\HOLSymConst{\ensuremath{}}\HOLFreeVar{h}\;\HOLSymConst{=}\;\HOLFreeVar{h}\;\HOLSymConst{\ensuremath{+}}\;\HOLFreeVar{h}}$.
This implies that \HOLinline{\HOLFreeVar{h}\;\HOLSymConst{\ensuremath{+}}\;\HOLFreeVar{h}\;\ensuremath{\equiv}\;\HOLNumLit{0}\;\ensuremath{(}\ensuremath{\bmod}\;\HOLFreeVar{p}\ensuremath{)}},
so \HOLinline{\HOLFreeVar{y}\;\HOLSymConst{=}\;\HOLFreeVar{f}\;\HOLFreeVar{y}} by Theorem~\ref{thm:involute-two-fix-orbit-1}.
Since (\HOLinline{\HOLFreeVar{f}\;\HOLSymConst{\HOLTokenCompose}\;\HOLFreeVar{g}}) is a permutation, \HOLinline{\HOLFreeVar{y}\;\HOLSymConst{\HOLTokenIn{}}\;\HOLFreeVar{S}}, so \HOLinline{\HOLFreeVar{y}\;\HOLSymConst{\HOLTokenIn{}}\;\HOLConst{fixes}\;\HOLFreeVar{f}\;\HOLFreeVar{S}}.
Next we show that \HOLinline{\HOLFreeVar{y}\;\HOLSymConst{\HOLTokenNotEqual{}}\;\HOLFreeVar{x}}. 
Suppose \HOLinline{\HOLFreeVar{y}\;\HOLSymConst{=}\;\HOLFreeVar{x}}.
Since for finite $S$ the period \HOLinline{\HOLFreeVar{p}\;\HOLSymConst{\HOLTokenNotEqual{}}\;\HOLNumLit{0}},
Theorem~\ref{thm:iterate-period-mod} shows that
$p$ divides \HOLinline{\HOLFreeVar{h}\;\HOLSymConst{=}\;\HOLFreeVar{p}\;\HOLConst{\HOLConst{div}}\;\HOLNumLit{2}}.
Hence \HOLinline{\HOLFreeVar{p}\;\HOLSymConst{=}\;\HOLNumLit{1}}, which is not even.
\end{proof}
\noindent
Therefore if a fixed point of $f$ has an even period under \HOLinline{\HOLFreeVar{\varphi}\;\HOLSymConst{=}\;\HOLFreeVar{f}\;\HOLSymConst{\HOLTokenCompose}\;\HOLFreeVar{g}}, it is not alone. This leads directly to:
\begin{corollary}
\label{cor:involute-fix-singleton-odd}
\script{iterateCompute}{1009}
If $f$ fixes only a single $x$, then \HOLinline{\HOLFreeVar{f}\;\HOLSymConst{\HOLTokenCompose}\;\HOLFreeVar{g}} has an odd period for $x$.
\begin{HOLmath}
\HOLTokenTurnstile{}\HOLConst{\HOLConst{finite}}\;\HOLFreeVar{S}\;\HOLSymConst{\HOLTokenConj{}}\;\HOLFreeVar{f}\;\HOLConst{involute}\;\HOLFreeVar{S}\;\HOLSymConst{\HOLTokenConj{}}\;\HOLFreeVar{g}\;\HOLConst{involute}\;\HOLFreeVar{S}\;\HOLSymConst{\HOLTokenConj{}}\\
\;\;\;\;\;\HOLConst{fixes}\;\HOLFreeVar{f}\;\HOLFreeVar{S}\;\HOLSymConst{=}\;\HOLTokenLeftbrace{}\HOLFreeVar{x}\HOLTokenRightbrace{}\;\HOLSymConst{\HOLTokenConj{}}\;\HOLFreeVar{p}\;\HOLSymConst{=}\;\HOLConst{\HOLConst{period}}\;(\HOLFreeVar{f}\;\HOLSymConst{\HOLTokenCompose}\;\HOLFreeVar{g})\;\HOLFreeVar{x}\;\HOLSymConst{\HOLTokenImp{}}\\
\;\;\;\;\;\;\;\HOLConst{\HOLConst{odd}}\;\HOLFreeVar{p}
\end{HOLmath}
\end{corollary}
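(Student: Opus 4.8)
The plan is to derive this immediately from Theorem~\ref{thm:involute-two-fixes-even} by contraposition on the parity of the period. Since $f$ and $g$ are involutions on $S$, each is a bijection $S \leftrightarrow S$, so the composition $f \circ g$ is a permutation of $S$; and since $x \in \text{fixes}\ f\ S \subseteq S$, the orbit of $x$ under $f \circ g$ is finite with a nonzero period $p = \text{period}\ (f \circ g)\ x$ (by the period-existence result stated just before Theorem~\ref{thm:iterate-period-mod}).

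Now suppose, for contradiction, that $p$ is not odd; for natural numbers this means $p$ is even. Set $y = (f \circ g)^{p\ \text{div}\ 2}(x)$. All the hypotheses of Theorem~\ref{thm:involute-two-fixes-even} are then in place --- $S$ finite, $f$ and $g$ involutions on $S$, $x$ a fixed point of $f$, $p$ the period of $f \circ g$ at $x$, $y$ the half-period iterate, and $p$ even --- so that theorem yields both $y \in \text{fixes}\ f\ S$ and $y \neq x$. But the hypothesis $\text{fixes}\ f\ S = \{x\}$ forces $y = x$, contradicting $y \neq x$. Hence $p$ must be odd.

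There is no hard step here: the corollary merely repackages Theorem~\ref{thm:involute-two-fixes-even}, whose content --- that an even period produces a \emph{second} fixed point of $f$ in the same orbit, distinct from the starting one --- is exactly what a singleton $\text{fixes}\ f\ S$ rules out. The only minor points to watch in the formalisation are the (routine) equivalence between $\lnot\,\text{odd}\ p$ and $\text{even}\ p$ for natural numbers, and confirming that the period is well-defined and nonzero so that $p\ \text{div}\ 2$ behaves as intended; both follow from finiteness of $S$ together with the fact that $f \circ g$ permutes $S$.
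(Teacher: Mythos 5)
Your proposal is correct and matches the paper's own reasoning exactly: the paper derives this corollary "directly" from Theorem~\ref{thm:involute-two-fixes-even}, observing that an even period would produce a second, distinct fixed point of $f$ in the orbit, which the singleton hypothesis $\HOLConst{fixes}\;f\;S = \{x\}$ rules out. Your added remarks about the period being nonzero and the parity dichotomy are the right routine side conditions to check in the formalisation.
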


\subsection{Fixed Point Period Odd}
\label{sec:fixed-point-period-odd}
Now consider an orbit with odd period starting with $\alpha$, a fixed point of $f$.
Figure~\ref{fig:orbit-fix-odd} shows one on the left, and its real picture on the right.
This orbit is formed by taking the right diagram of Figure~\ref{fig:orbits-even-odd},
but identifying the black dot on $\alpha$ (the leftmost one) with its preceding white dot from $f$, since \HOLinline{\HOLFreeVar{f}\;\HOLFreeVar{\alpha}\;\HOLSymConst{=}\;\HOLFreeVar{\alpha}}, giving the left $f$-loop.
The same reasoning as the even period orbit of Section~\ref{sec:fixed-point-period-even} shows that all the dots linked by double lines are identical, so that the orbit on the left can be simplified to the one on the right, again taking only black dots.

\begin{figure*}[h]
\centering
\begin{tikzpicture}[scale=1.7, 
    ele/.style={fill=black,circle,minimum width=.8pt,inner sep=1pt},
    every fit/.style={ellipse,draw,inner sep=5pt}]
  \node[ele,label=left:{\tiny{$\alpha$}}] (a) at (0,0.5) {};    
  \node[ele,label=below:{\tiny{$\varphi\ \alpha$}}] (b) at (1,0) {};
  \node[ele,label=below:{\tiny{$\varphi^{2}\ \alpha$}}] (c) at (2,0) {};    
  \node[ele,label=right:{\tiny{$\varphi^{3}\ \alpha$}}] (d) at (3,0.5) {};
  \node[ele,label=above:{\tiny{$\varphi^{4}\ \alpha$}}] (e) at (2.5,1) {};
  \node[ele,label=above:{\tiny{$\varphi^{5}\ \alpha$}}] (f) at (1.5,1) {};
  \node[ele,label=above:{\tiny{$\varphi^{6}\ \alpha$}}] (g) at (0.5,1) {};
  \node[ele,label=above:{\tiny{$\varphi^{7}\ \alpha\qquad$}}] (h) at (0,0.5) {};
  \draw[->,dashed,shorten <=2pt,shorten >=2] (a) -- (b);
  \draw[->,dashed,shorten <=2pt,shorten >=2] (b) -- (c);
  \draw[->,dashed,shorten <=2pt,shorten >=2] (c) -- (d);
  \draw[->,dashed,shorten <=2pt,shorten >=2] (d) -- (e);
  \draw[->,dashed,shorten <=2pt,shorten >=2] (e) -- (f);
  \draw[->,dashed,shorten <=2pt,shorten >=2] (f) -- (g);
  \draw[->,dashed,shorten <=2pt,shorten >=2] (g) -- (h);
  \node[ele,draw,fill=white] (ab) at ($(a)!0.5!(b)$) {};
  \node[ele,draw,fill=white] (bc) at ($(b)!0.5!(c)$) {};
  \node[ele,draw,fill=white] (cd) at ($(c)!0.5!(d)$) {};
  \node[ele,draw,fill=white] (de) at ($(d)!0.5!(e)$) {};
  \node[ele,draw,fill=white] (ef) at ($(e)!0.5!(f)$) {};
  \node[ele,draw,fill=white] (fg) at ($(f)!0.5!(g)$) {};
  \draw[thick,color=blue] (a) to [out=130, in=-130,looseness=50]
       node[midway,left] {\tiny{$f$}} (a);
  \draw[thick,color=purple] (a) to [bend right] node[midway,below] {\tiny{\HOLinline{\HOLFreeVar{g}}}} (ab);
  \draw[thick,color=blue] (ab) to [bend right] node[midway,below] {\tiny{$f$}} (b);
  \draw[thick,color=purple] (b) to [bend right] node[midway,below] {\tiny{\HOLinline{\HOLFreeVar{g}}}} (bc);
  \draw[thick,color=blue] (bc) to [bend right] node[midway,below] {\tiny{$f$}} (c);
  \draw[thick,color=purple] (c) to [bend right] node[midway,below] {\tiny{\HOLinline{\HOLFreeVar{g}}}} (cd);
  \draw[thick,color=blue] (cd) to [bend right] node[midway,below] {\tiny{$f$}} (d);
  \draw[thick,color=purple] (d) to [bend right] node[midway,above] {\tiny{\HOLinline{\HOLFreeVar{g}}}} (de);
  \draw[thick,color=blue] (de) to [bend right] node[midway,above] {\tiny{$f$}} (e);
  \draw[thick,color=purple] (e) to [bend right] node[midway,above] {\tiny{\HOLinline{\HOLFreeVar{g}}}} (ef);
  \draw[thick,color=blue] (ef) to [bend right] node[midway,above] {\tiny{$f$}} (f);
  \draw[thick,color=purple] (f) to [bend right] node[midway,above] {\tiny{\HOLinline{\HOLFreeVar{g}}}} (fg);
  \draw[thick,color=blue] (fg) to [bend right] node[midway,above] {\tiny{$f$}} (g);
  \draw[thick,color=purple] (g) to [bend right] node[midway,above] {\tiny{\HOLinline{\HOLFreeVar{g}}}} (h);
  \DoubleLine[0.3pt]{g}{ab}{red}{red}
  \DoubleLine[0.3pt]{fg}{b}{red}{red}
  \DoubleLine[0.3pt]{f}{bc}{red}{red}
  \DoubleLine[0.3pt]{ef}{c}{red}{red}
  \DoubleLine[0.3pt]{e}{cd}{red}{red}
  \DoubleLine[0.3pt]{de}{d}{red}{red}
  \path[ultra thick, glow=green] (de) -- (d);
\end{tikzpicture}
\begin{tikzpicture}[scale=1.7, 
    ele/.style={fill=black,circle,minimum width=.8pt,inner sep=1pt},
    every fit/.style={ellipse,draw,inner sep=5pt}]
  \node[ele,label=left:{\tiny{$\alpha$}}] (a) at (0,0.5) {};    
  \node[ele,label=below:{\tiny{$\varphi\ \alpha$}}] (b) at (1,0) {};
  \node[ele,label=below:{\tiny{$\varphi^{2}\ \alpha$}}] (c) at (2,0) {};    
  \node[ele,label=right:{\tiny{$\varphi^{3}\ \alpha$}}] (d) at (3,0.5) {};
  \node[ele,label=above:{\tiny{$\varphi^{4}\ \alpha$}}] (e) at (2.5,1) {};
  \node[ele,label=above:{\tiny{$\varphi^{5}\ \alpha$}}] (f) at (1.5,1) {};
  \node[ele,label=above:{\tiny{$\varphi^{6}\ \alpha$}}] (g) at (0.5,1) {};
  \node[ele,label=above:{\tiny{$\varphi^{7}\ \alpha\qquad$}}] (h) at (0,0.5) {};

  \draw[->,dashed,shorten <=2pt,shorten >=2] (a) -- (b);
  \draw[->,dashed,shorten <=2pt,shorten >=2] (b) -- (c);
  \draw[->,dashed,shorten <=2pt,shorten >=2] (c) -- (d);
  \draw[->,dashed,shorten <=2pt,shorten >=2] (d) -- (e);
  \draw[->,dashed,shorten <=2pt,shorten >=2] (e) -- (f);
  \draw[->,dashed,shorten <=2pt,shorten >=2] (f) -- (g);
  \draw[->,dashed,shorten <=2pt,shorten >=2] (g) -- (h);
  \draw[thick,color=purple] (f) to [bend left] node[midway,right] {\tiny{\HOLinline{\HOLFreeVar{g}}}} (b);
  \draw[thick,color=purple] (g) to [bend left] node[midway,below] {\tiny{\HOLinline{\HOLFreeVar{g}}}} (a);
  \draw[thick,color=purple] (e) to [bend left] node[midway,right] {\tiny{\HOLinline{\HOLFreeVar{g}}}} (c);
  \draw[thick,color=purple] (d) to [out=50, in=-50,looseness=50]
       node[midway,right] {\tiny{\HOLinline{\HOLFreeVar{g}}}} (d);
  \draw[thick,color=blue] (a) to [out=130, in=-130,looseness=50]
       node[midway,left] {\tiny{$f$}} (a);
  \draw[thick,color=blue] (d) to [bend right] node[midway,above] {\tiny{$f$}} (e);
  \draw[thick,color=blue] (c) to [bend right] node[midway,right] {\tiny{$f$}} (f);
  \draw[thick,color=blue] (b) to [bend right] node[midway,left]  {\tiny{$f$}} (g);
\end{tikzpicture}
\caption{Orbit from an $f$ fixed point $\alpha$ with odd period $7$. Identical points on the left (marked by two parallel lines) are merged~on the right (move white dot to black dot). In particular, on the left the two vertices of the shaded line are the same, forming a fixed point of $g$ on the right.}
\Description{This figure shows an orbit from an f fixed point alpha with odd period 7. Identical points on the left, marked by two parallel lines, are merged on the right, by moving white dot to black dot. In particular, on the left the two vertices of the shaded line are the same, forming a fixed point of g on the right.}
\label{fig:orbit-fix-odd} 
\end{figure*}
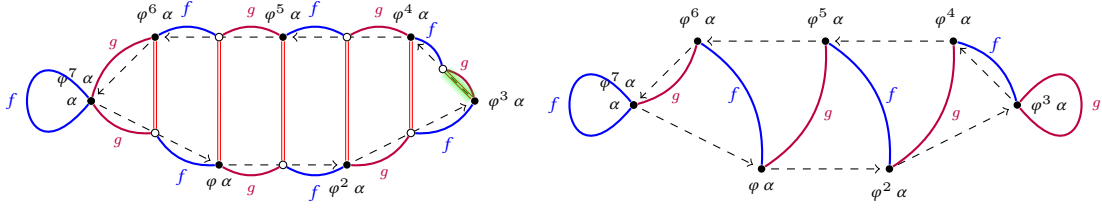

Moreover, the rightmost black dot and a preceding white dot from \HOLinline{\HOLFreeVar{g}} must be the same, due to $f$-arcs from identical dots.
This means the half-period iterate, the rightmost black dot, must be a fixed point of \HOLinline{\HOLFreeVar{g}}, say $\beta$.
If \HOLinline{\HOLFreeVar{\beta}\;\HOLSymConst{=}\;\HOLFreeVar{\alpha}}, then period \HOLinline{\HOLFreeVar{p}\;\HOLSymConst{=}\;\HOLNumLit{1}}, in accordance with Theorem~\ref{thm:involute-period-1}.
This example motivates the following:
\begin{theorem}
\label{thm:involute-two-fixes-odd}
\script{iterateCompose}{980}
When $f$ fixes $x$, and (\HOLinline{\HOLFreeVar{f}\;\HOLSymConst{\HOLTokenCompose}\;\HOLFreeVar{g}}) has an odd period $p$ for $x$,
then \HOLinline{\HOLFreeVar{g}} fixes \HOLinline{(\HOLFreeVar{f}\;\HOLSymConst{\HOLTokenCompose}\;\HOLFreeVar{g})\ensuremath{\sp{\HOLFreeVar{p}\;\HOLConst{div}\;2}(\HOLFreeVar{x})}}, which is not $x$ itself if and only if \HOLinline{\HOLFreeVar{p}\;\HOLSymConst{\HOLTokenNotEqual{}}\;\HOLNumLit{1}}.
\begin{HOLmath}
\HOLTokenTurnstile{}\HOLConst{\HOLConst{finite}}\;\HOLFreeVar{S}\;\HOLSymConst{\HOLTokenConj{}}\;\HOLFreeVar{f}\;\HOLConst{involute}\;\HOLFreeVar{S}\;\HOLSymConst{\HOLTokenConj{}}\;\HOLFreeVar{g}\;\HOLConst{involute}\;\HOLFreeVar{S}\;\HOLSymConst{\HOLTokenConj{}}\\
\;\;\;\;\;\HOLFreeVar{x}\;\HOLSymConst{\HOLTokenIn{}}\;\HOLConst{fixes}\;\HOLFreeVar{f}\;\HOLFreeVar{S}\;\HOLSymConst{\HOLTokenConj{}}\;\HOLFreeVar{p}\;\HOLSymConst{=}\;\HOLConst{\HOLConst{period}}\;(\HOLFreeVar{f}\;\HOLSymConst{\HOLTokenCompose}\;\HOLFreeVar{g})\;\HOLFreeVar{x}\;\HOLSymConst{\HOLTokenConj{}}\\
\;\;\;\;\;\HOLFreeVar{y}\;\HOLSymConst{=}\;(\HOLFreeVar{f}\;\HOLSymConst{\HOLTokenCompose}\;\HOLFreeVar{g})\ensuremath{\sp{\HOLFreeVar{p}\;\HOLConst{div}\;2}(\HOLFreeVar{x})}\;\HOLSymConst{\HOLTokenConj{}}\;\HOLConst{\HOLConst{odd}}\;\HOLFreeVar{p}\;\HOLSymConst{\HOLTokenImp{}}\\
\;\;\;\;\;\;\;\HOLFreeVar{y}\;\HOLSymConst{\HOLTokenIn{}}\;\HOLConst{fixes}\;\HOLFreeVar{g}\;\HOLFreeVar{S}\;\HOLSymConst{\HOLTokenConj{}}\;(\HOLFreeVar{y}\;\HOLSymConst{=}\;\HOLFreeVar{x}\;\HOLSymConst{\HOLTokenEquiv{}}\;\HOLFreeVar{p}\;\HOLSymConst{=}\;\HOLNumLit{1})
\end{HOLmath}
\end{theorem}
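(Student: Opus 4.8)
The plan is to mirror the even-period argument of Theorem~\ref{thm:involute-two-fixes-even}, replacing its use of Theorem~\ref{thm:involute-two-fix-orbit-1} (which detects when two $\varphi$-iterates differ by one application of $f$) with Theorem~\ref{thm:involute-two-fix-orbit-2} (which detects when they differ by one application of $g$). Write $\varphi = f \circ g$ and $h = p \; \HOLConst{div} \; 2$, so that $y = \varphi^{h}(x)$. Since $f$ and $g$ are involutions on $S$ they permute $S$, hence so does $\varphi$; with $S$ finite and $x \in S$ this gives $0 < p$, so $p \ge 1$. Because $p$ is odd, $p = 2h + 1$, and therefore $h + h + 1 = p \equiv 0 \pmod{p}$.

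For the first conclusion, instantiate Theorem~\ref{thm:involute-two-fix-orbit-2} at $i = j = h$: the hypotheses $x \in \HOLConst{fixes}\;f\;S$ and $h + h + 1 \equiv 0 \pmod{p}$ yield $\varphi^{h}(x) = g\,(\varphi^{h}(x))$, i.e.\ $g\,y = y$; and $y = \varphi^{h}(x) \in S$ since $\varphi$ permutes $S$. Hence $y \in \HOLConst{fixes}\;g\;S$.

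For the equivalence $y = x \iff p = 1$: if $p = 1$ then $h = 1 \; \HOLConst{div} \; 2 = 0$, so $y = \varphi^{0}(x) = x$. Conversely, if $y = x$ then $\varphi^{h}(x) = x$, so by Theorem~\ref{thm:iterate-period-mod} (using $0 < p$) we get $p \mid h$; but $0 \le h < p$ because $p \ge 1$, forcing $h = 0$, and then $p \; \HOLConst{div} \; 2 = 0$ together with the oddness of $p$ (which rules out $p = 0$) gives $p = 1$.

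There is no deep obstacle here — the content is index bookkeeping modulo $p$ layered on already-proved orbit lemmas. The one place to be careful is matching the picture of Figure~\ref{fig:orbit-fix-odd} to the algebra: the shaded line there asserts that the half-period iterate $\beta = \varphi^{(p-1)/2}(\alpha)$ is a $g$-fixed point, which works precisely because $(p-1)/2 + (p-1)/2 + 1 = p$ — exactly the ``$i + j + 1 \equiv 0$'' pattern of Theorem~\ref{thm:involute-two-fix-orbit-2} — and in the converse direction one must combine $p \mid h$ with $h < p$ and then the parity of $p$ to pin down $p = 1$ rather than just $p \mid h$.
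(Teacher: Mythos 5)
Your proof is correct and follows the paper's own argument for the main claim: both use Theorem~\ref{thm:involute-two-fix-orbit-2} with $i = j = h = p\;\HOLConst{\HOLConst{div}}\;2$ and the identity $h + h + 1 = p \equiv 0 \pmod{p}$ to get $g\,y = y$, with $y \in S$ because $f \circ g$ permutes $S$. The only (harmless) divergence is in the equivalence $y = x \Leftrightarrow p = 1$: the paper dispatches it by citing Theorem~\ref{thm:involute-period-1} (since $y \in \HOLConst{fixes}\;g\;S$ has already been established, $y = x$ makes $x$ a $g$-fixed point, forcing $p = 1$), whereas you rerun the divisibility argument via Theorem~\ref{thm:iterate-period-mod} exactly as in the even-period case — both routes are valid and of comparable length.
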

\begin{proof}
First we show that \HOLinline{\HOLFreeVar{g}} fixes $y$.
Let \HOLinline{\HOLFreeVar{h}\;\HOLSymConst{=}\;\HOLFreeVar{p}\;\HOLConst{\HOLConst{div}}\;\HOLNumLit{2}}.
Since period $p$ is odd, $p \ee \HOLinline{\HOLNumLit{2}\HOLSymConst{\ensuremath{}}\HOLFreeVar{h}\;\HOLSymConst{\ensuremath{+}}\;\HOLNumLit{1}\;\HOLSymConst{=}\;\HOLFreeVar{h}\;\HOLSymConst{\ensuremath{+}}\;\HOLFreeVar{h}\;\HOLSymConst{\ensuremath{+}}\;\HOLNumLit{1}}$.
Thus \HOLinline{\HOLFreeVar{h}\;\HOLSymConst{\ensuremath{+}}\;\HOLFreeVar{h}\;\HOLSymConst{\ensuremath{+}}\;\HOLNumLit{1}\;\ensuremath{\equiv}\;\HOLNumLit{0}\;\ensuremath{(}\ensuremath{\bmod}\;\HOLFreeVar{p}\ensuremath{)}},
so \HOLinline{\HOLFreeVar{y}\;\HOLSymConst{=}\;\HOLFreeVar{g}\;\HOLFreeVar{y}} by Theorem~\ref{thm:involute-two-fix-orbit-2}.
As (\HOLinline{\HOLFreeVar{f}\;\HOLSymConst{\HOLTokenCompose}\;\HOLFreeVar{g}}) is a permutation, \HOLinline{\HOLFreeVar{y}\;\HOLSymConst{\HOLTokenIn{}}\;\HOLFreeVar{S}}, so \HOLinline{\HOLFreeVar{y}\;\HOLSymConst{\HOLTokenIn{}}\;\HOLConst{fixes}\;\HOLFreeVar{g}\;\HOLFreeVar{S}}.
Theorem~\ref{thm:involute-period-1} ensures that: \HOLinline{\HOLFreeVar{y}\;\HOLSymConst{=}\;\HOLFreeVar{x}\;\HOLSymConst{\HOLTokenEquiv{}}\;\HOLFreeVar{p}\;\HOLSymConst{=}\;\HOLNumLit{1}}.
\end{proof}

\subsection{Fixed Point Orbits}
\label{sec:fixed-point-orbits}
Let \HOLinline{\HOLFreeVar{\varphi}\;\HOLSymConst{=}\;\HOLFreeVar{f}\;\HOLSymConst{\HOLTokenCompose}\;\HOLFreeVar{g}},
and $\alpha, \beta$ be fixed points of $f, \HOLinline{\HOLFreeVar{g}}$, respectively.
Theorem~\ref{thm:involute-two-fixes-even} and Theorem~\ref{thm:involute-two-fixes-odd} show that:
\begin{itemize}[leftmargin=*] 
\item if the period $p$ of $\alpha$ is even, its orbit has another fixed point of $f$ at the \HOLinline{\HOLFreeVar{h}\;\HOLSymConst{=}\;\HOLFreeVar{p}\;\HOLConst{\HOLConst{div}}\;\HOLNumLit{2}} iterate: \HOLinline{\HOLFreeVar{\varphi}\ensuremath{\sp{\HOLFreeVar{h}}(\HOLFreeVar{\alpha})}}.
\item if the period $p$ of $\alpha$ is odd, its orbit has another fixed point of \HOLinline{\HOLFreeVar{g}} at the \HOLinline{\HOLFreeVar{h}\;\HOLSymConst{=}\;\HOLFreeVar{p}\;\HOLConst{\HOLConst{div}}\;\HOLNumLit{2}} iterate: \HOLinline{\HOLFreeVar{\beta}\;\HOLSymConst{=}\;\HOLFreeVar{\varphi}\ensuremath{\sp{\HOLFreeVar{h}}(\HOLFreeVar{\alpha})}}.
\end{itemize}
Figure~\ref{fig:orbit-fix-even} and Figure~\ref{fig:orbit-fix-odd} show that these orbits have no more fixed points. The only fixed point, of either $f$ or $g$, occurs at halfway point of the orbit.

Thus, fixed point orbits lead directly from one fixed point to another. This is because, assuming one of the intermediate iterate is a fixed point, the iteration path will turn back, due to either $f$ or $g$, both being involutions. This will produce an orbit with a shorter period, but period for an orbit is minimal.

Such considerations lead to the following stronger forms of Theorem~\ref{thm:involute-two-fixes-even} and Theorem~\ref{thm:involute-two-fixes-odd}:
\begin{theorem}
\label{thm:involute-two-fixes-even-odd}
\script{iterateCompose}{1100}
When $f$ fixes $x$, the $j$-th iterate of (\HOLinline{\HOLFreeVar{f}\;\HOLSymConst{\HOLTokenCompose}\;\HOLFreeVar{g}}) from $x$ is a fixed point of either $f$ or \HOLinline{\HOLFreeVar{g}} if and only if $j$ is half of the period $p$.
\begin{HOLmath}
\HOLTokenTurnstile{}\HOLConst{\HOLConst{finite}}\;\HOLFreeVar{S}\;\HOLSymConst{\HOLTokenConj{}}\;\HOLFreeVar{f}\;\HOLConst{involute}\;\HOLFreeVar{S}\;\HOLSymConst{\HOLTokenConj{}}\;\HOLFreeVar{g}\;\HOLConst{involute}\;\HOLFreeVar{S}\;\HOLSymConst{\HOLTokenConj{}}\\
\;\;\;\;\;\HOLFreeVar{x}\;\HOLSymConst{\HOLTokenIn{}}\;\HOLConst{fixes}\;\HOLFreeVar{f}\;\HOLFreeVar{S}\;\HOLSymConst{\HOLTokenConj{}}\;\HOLFreeVar{p}\;\HOLSymConst{=}\;\HOLConst{\HOLConst{period}}\;(\HOLFreeVar{f}\;\HOLSymConst{\HOLTokenCompose}\;\HOLFreeVar{g})\;\HOLFreeVar{x}\;\HOLSymConst{\HOLTokenConj{}}\;\HOLConst{\HOLConst{even}}\;\HOLFreeVar{p}\;\HOLSymConst{\HOLTokenImp{}}\\
\;\;\;\;\;\;\;\HOLSymConst{\HOLTokenForall{}}\HOLBoundVar{j}.\;\HOLNumLit{0}\;\HOLSymConst{\HOLTokenLt{}}\;\HOLBoundVar{j}\;\HOLSymConst{\HOLTokenConj{}}\;\HOLBoundVar{j}\;\HOLSymConst{\HOLTokenLt{}}\;\HOLFreeVar{p}\;\HOLSymConst{\HOLTokenImp{}}\\
\;\;\;\;\;\;\;\;\;\;\;\;\;((\HOLFreeVar{f}\;\HOLSymConst{\HOLTokenCompose}\;\HOLFreeVar{g})\ensuremath{\sp{\HOLBoundVar{j}}(\HOLFreeVar{x})}\;\HOLSymConst{\HOLTokenIn{}}\;\HOLConst{fixes}\;\HOLFreeVar{f}\;\HOLFreeVar{S}\;\HOLSymConst{\HOLTokenEquiv{}}\;\HOLBoundVar{j}\;\HOLSymConst{=}\;\HOLFreeVar{p}\;\HOLConst{\HOLConst{div}}\;\HOLNumLit{2})\\
\HOLTokenTurnstile{}\HOLConst{\HOLConst{finite}}\;\HOLFreeVar{S}\;\HOLSymConst{\HOLTokenConj{}}\;\HOLFreeVar{f}\;\HOLConst{involute}\;\HOLFreeVar{S}\;\HOLSymConst{\HOLTokenConj{}}\;\HOLFreeVar{g}\;\HOLConst{involute}\;\HOLFreeVar{S}\;\HOLSymConst{\HOLTokenConj{}}\\
\;\;\;\;\;\HOLFreeVar{x}\;\HOLSymConst{\HOLTokenIn{}}\;\HOLConst{fixes}\;\HOLFreeVar{f}\;\HOLFreeVar{S}\;\HOLSymConst{\HOLTokenConj{}}\;\HOLFreeVar{p}\;\HOLSymConst{=}\;\HOLConst{\HOLConst{period}}\;(\HOLFreeVar{f}\;\HOLSymConst{\HOLTokenCompose}\;\HOLFreeVar{g})\;\HOLFreeVar{x}\;\HOLSymConst{\HOLTokenConj{}}\;\HOLConst{\HOLConst{odd}}\;\HOLFreeVar{p}\;\HOLSymConst{\HOLTokenImp{}}\\
\;\;\;\;\;\;\;\HOLSymConst{\HOLTokenForall{}}\HOLBoundVar{j}.\;\HOLNumLit{0}\;\HOLSymConst{\HOLTokenLt{}}\;\HOLBoundVar{j}\;\HOLSymConst{\HOLTokenConj{}}\;\HOLBoundVar{j}\;\HOLSymConst{\HOLTokenLt{}}\;\HOLFreeVar{p}\;\HOLSymConst{\HOLTokenImp{}}\\
\;\;\;\;\;\;\;\;\;\;\;\;\;((\HOLFreeVar{f}\;\HOLSymConst{\HOLTokenCompose}\;\HOLFreeVar{g})\ensuremath{\sp{\HOLBoundVar{j}}(\HOLFreeVar{x})}\;\HOLSymConst{\HOLTokenIn{}}\;\HOLConst{fixes}\;\HOLFreeVar{g}\;\HOLFreeVar{S}\;\HOLSymConst{\HOLTokenEquiv{}}\;\HOLBoundVar{j}\;\HOLSymConst{=}\;\HOLFreeVar{p}\;\HOLConst{\HOLConst{div}}\;\HOLNumLit{2})
\end{HOLmath}
\end{theorem}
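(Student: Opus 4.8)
The plan is to extract from Theorems~\ref{thm:involute-two-fix-orbit-1} and~\ref{thm:involute-two-fix-orbit-2} two arithmetic criteria saying exactly when an iterate of $\varphi = f \circ g$ is a fixed point of $f$, respectively of $g$, and then to read both statements off directly. First note that, $f$ and $g$ being involutions on $S$, their composite $\varphi$ permutes $S$; and since $x \in \mathsf{fixes}\,f\,S$ we have $x \in S$, hence $\varphi^{j}(x) \in S$ for every $j$. Therefore the condition $\varphi^{j}(x) \in \mathsf{fixes}\,f\,S$ is equivalent to the bare equation $\varphi^{j}(x) = f(\varphi^{j}(x))$, and similarly with $g$ in place of $f$.

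Next I would instantiate the two orbit lemmas at the choice $i = j$. The standing hypotheses of the present theorem are exactly the hypotheses of Theorem~\ref{thm:involute-two-fix-orbit-1} (and in particular they force $p \neq 0$, since $S$ is finite), so taking $i = j$ there gives $\varphi^{j}(x) = f(\varphi^{j}(x)) \iff 2j \equiv 0 \pmod{p}$, while Theorem~\ref{thm:involute-two-fix-orbit-2} at $i = j$ gives $\varphi^{j}(x) = g(\varphi^{j}(x)) \iff 2j + 1 \equiv 0 \pmod{p}$. For the even case, fix $j$ with $0 < j < p$; then $0 < 2j < 2p$, and the unique positive multiple of $p$ strictly below $2p$ is $p$ itself, so $2j \equiv 0 \pmod{p}$ holds iff $2j = p$, i.e.\ iff $j = p\ \mathrm{div}\ 2$ (a genuine halving, since $p$ is even). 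Combined with the first criterion this is precisely $\varphi^{j}(x) \in \mathsf{fixes}\,f\,S \iff j = p\ \mathrm{div}\ 2$. For the odd case, fix $j$ with $0 < j < p$ (the statement is vacuous when $p = 1$); then $2j + 1$ is odd with $0 < 2j + 1 < 2p$, so it is a multiple of $p$ only if it equals $p$, giving $2j + 1 \equiv 0 \pmod{p}$ iff $2j + 1 = p$, i.e.\ iff $j = (p-1)/2 = p\ \mathrm{div}\ 2$. Combined with the second criterion this is precisely $\varphi^{j}(x) \in \mathsf{fixes}\,g\,S \iff j = p\ \mathrm{div}\ 2$.

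All the real work has already been done in the two orbit lemmas; the only genuinely new ingredient is the elementary observation that a positive multiple of $p$ lying in the open interval $(0, 2p)$ must equal $p$, together with the parity bookkeeping that turns $2j = p$ (resp.\ $2j + 1 = p$) into $j = p\ \mathrm{div}\ 2$. That arithmetic step is the one place that needs a little attention — one must keep the $p$-even and $p$-odd arguments apart and dispose of the degenerate range $0 < j < 1$ when $p = 1$ — but it is short and self-contained, readily discharged by linear arithmetic once the congruences are in hand. As a byproduct, the $(\Leftarrow)$ halves of the two criteria re-prove Theorems~\ref{thm:involute-two-fixes-even} and~\ref{thm:involute-two-fixes-odd}.
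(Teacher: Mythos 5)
Your proof is correct and is essentially the route the paper takes: the paper prints no explicit proof of this theorem, but its surrounding discussion and its proofs of Theorems~\ref{thm:involute-two-fixes-even} and~\ref{thm:involute-two-fixes-odd} rest on exactly the same instantiation of Theorems~\ref{thm:involute-two-fix-orbit-1} and~\ref{thm:involute-two-fix-orbit-2} (there at $i=j=p\;\mathrm{div}\;2$, here at general $i=j$), followed by the observation that the only positive multiple of $p$ in $(0,2p)$ is $p$ itself. The paper's informal justification via ``a fixed point in mid-orbit would fold the path back and shorten the period'' is just the geometric reading of the congruence criterion you use, so there is nothing missing.
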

This completes our tour of the theory of permutation orbits and fixed points.
The results provide the key to formally prove that our two-squares algorithm by iterations is correct.

\section{Correctness of Algorithm}
\label{sec:correctness}
The algorithm to compute the flip fixed point from the known Zagier fixed point, given in Definition~\ref{def:two-sq-def}, makes use of a while-loop.
A while-loop consists of a guard $G$ and a body $B$, starting with an element $x$. The body is a function on $x$, producing iterates
$B(x)$, \HOLinline{\HOLFreeVar{B}\ensuremath{\sp{\HOLNumLit{2}}(\HOLFreeVar{x})}}, \HOLinline{\HOLFreeVar{B}\ensuremath{\sp{\HOLNumLit{3}}(\HOLFreeVar{a})}}, \etc.
The guard is a predicate on each iterate: the loop continues only if the test result by the guard stays true.

In HOL4, the \HOLConst{WHILE} loop with guard $G$ and body $B$ starting with $x$ is defined as:
\begin{HOLmath}
\;\;\HOLConst{WHILE}\;\HOLFreeVar{G}\;\HOLFreeVar{B}\;\HOLFreeVar{x}\;\HOLTokenDefEquality{}\;\HOLKeyword{if}\;\HOLFreeVar{G}\;\HOLFreeVar{x}\;\HOLKeyword{then}\;\HOLConst{WHILE}\;\HOLFreeVar{G}\;\HOLFreeVar{B}\;(\HOLFreeVar{B}\;\HOLFreeVar{x})\;\HOLKeyword{else}\;\HOLFreeVar{x}
\end{HOLmath}
from which one can easily show by induction that:
\begin{HOLmath}
\HOLTokenTurnstile{}(\HOLSymConst{\HOLTokenForall{}}\HOLBoundVar{j}.\;\HOLBoundVar{j}\;\HOLSymConst{\HOLTokenLt{}}\;\HOLFreeVar{k}\;\HOLSymConst{\HOLTokenImp{}}\;\HOLFreeVar{G}\;(\HOLFreeVar{B}\ensuremath{\sp{\HOLBoundVar{j}}(\HOLFreeVar{x})}))\;\HOLSymConst{\HOLTokenImp{}}\\
\;\;\;\;\;\;\;\HOLConst{WHILE}\;\HOLFreeVar{G}\;\HOLFreeVar{B}\;\HOLFreeVar{x}\;\HOLSymConst{=}\\
\;\;\;\;\;\;\;\;\;\HOLKeyword{if}\;\HOLFreeVar{G}\;(\HOLFreeVar{B}\ensuremath{\sp{\HOLFreeVar{k}}(\HOLFreeVar{x})})\;\HOLKeyword{then}\;\HOLConst{WHILE}\;\HOLFreeVar{G}\;\HOLFreeVar{B}\;(\HOLFreeVar{B}\ensuremath{\sp{\HOLFreeVar{k}\;\HOLSymConst{\ensuremath{+}}\;\HOLNumLit{1}}(\HOLFreeVar{x})})\;\HOLKeyword{else}\;\HOLFreeVar{B}\ensuremath{\sp{\HOLFreeVar{k}}(\HOLFreeVar{x})}
\end{HOLmath}
giving this expected result:
\begin{theorem}
\label{thm:iterate-while-thm}
\script{iterateCompute}{922}
The \HOLinline{\HOLConst{WHILE}} loop delivers the first body iterate that fails the guard test.
\begin{HOLmath}
\HOLTokenTurnstile{}(\HOLSymConst{\HOLTokenForall{}}\HOLBoundVar{j}.\;\HOLBoundVar{j}\;\HOLSymConst{\HOLTokenLt{}}\;\HOLFreeVar{k}\;\HOLSymConst{\HOLTokenImp{}}\;\HOLFreeVar{G}\;(\HOLFreeVar{B}\ensuremath{\sp{\HOLBoundVar{j}}(\HOLFreeVar{x})}))\;\HOLSymConst{\HOLTokenConj{}}\;\HOLSymConst{\HOLTokenNeg{}}\HOLFreeVar{G}\;(\HOLFreeVar{B}\ensuremath{\sp{\HOLFreeVar{k}}(\HOLFreeVar{x})})\;\HOLSymConst{\HOLTokenImp{}}\\
\;\;\;\;\;\;\;\HOLConst{WHILE}\;\HOLFreeVar{G}\;\HOLFreeVar{B}\;\HOLFreeVar{x}\;\HOLSymConst{=}\;\HOLFreeVar{B}\ensuremath{\sp{\HOLFreeVar{k}}(\HOLFreeVar{x})}
\end{HOLmath}
\end{theorem}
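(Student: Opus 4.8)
The plan is to prove this by a short induction on $k$, reading off everything from the recursion equation for $\HOLConst{WHILE}$ stated just above; in fact almost all of the work has already been isolated in the unrolling lemma that precedes the statement, so the theorem itself is a one-line consequence.

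\textbf{Approach via the unrolling lemma.} The displayed lemma immediately before the theorem says that under the hypothesis $\forall j < k.\; G(B^{j}(x))$ one has
\[
\HOLConst{WHILE}\;G\;B\;x = \text{if } G(B^{k}(x)) \text{ then } \HOLConst{WHILE}\;G\;B\;(B^{k+1}(x)) \text{ else } B^{k}(x).
\]
Feeding in the second hypothesis $\neg G(B^{k}(x))$ collapses the conditional to its else-branch, yielding $\HOLConst{WHILE}\;G\;B\;x = B^{k}(x)$ directly. So the only real content is the unrolling lemma.

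\textbf{The unrolling lemma, by induction on $k$.} For $k = 0$ the claimed equation is just the defining equation of $\HOLConst{WHILE}$, since $B^{0}(x) = x$ and $B^{1}(x) = B\;x$. For the step, assume the equation for $k$ and assume $\forall j < k+1.\; G(B^{j}(x))$. Instantiating $j = 0$ gives $G(x)$, so the defining equation rewrites $\HOLConst{WHILE}\;G\;B\;x$ to $\HOLConst{WHILE}\;G\;B\;(B\;x)$. Now apply the induction hypothesis with starting point $B\;x$: its premise $\forall j < k.\; G(B^{j}(B\;x))$ is exactly $\forall j < k.\; G(B^{j+1}(x))$, which holds since $j+1 < k+1$. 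This rewrites $\HOLConst{WHILE}\;G\;B\;(B\;x)$ into the conditional with $B^{k}(B\;x) = B^{k+1}(x)$ in the appropriate positions, which, after re-folding $B^{j}(B\;x) = B^{j+1}(x)$, is precisely the $k+1$ instance.

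\textbf{Main obstacle.} There is no deep obstacle here; the two points requiring a little care are (i) that $\HOLConst{WHILE}$ is introduced by a possibly non-terminating recursion, so one must argue purely from its one-step unfolding equation rather than from any well-founded recursion or termination argument — but that equation is exactly what is supplied; and (ii) the index bookkeeping, namely $B^{j}(B\;x) = B^{j+1}(x)$ together with the off-by-one shift in the quantified hypothesis, which is routine arithmetic rewriting. If one prefers to bypass the lemma, the same induction on $k$ can be run directly against the statement: the base case uses $\neg G(B^{0}(x)) = \neg G(x)$ with the else-branch of the defining equation, and the step uses $G(x)$ from $j=0$ to unfold once and then the induction hypothesis at $B\;x$.
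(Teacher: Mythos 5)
Your proposal is correct and follows the same route as the paper: the paper likewise derives the theorem as an immediate consequence of the displayed unrolling lemma (proved by induction on $k$ from the defining equation of \HOLConst{WHILE}), with the hypothesis $\neg G(B^{k}(x))$ collapsing the conditional to its else-branch. The only difference is that you spell out the induction for the unrolling lemma, which the paper dismisses as ``one can easily show by induction''.
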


\subsection{Iterate with WHILE}
\label{sec:iterate-while}
From Section~\ref{sec:orbits}, we learn that for two involutions $f$ and \HOLinline{\HOLFreeVar{g}},
a fixed point $\alpha$ of $f$ is paired up with a fixed point $\beta$ of \HOLinline{\HOLFreeVar{g}}
whenever the period of $\alpha$ under the composition \HOLinline{\HOLFreeVar{\varphi}\;\HOLSymConst{=}\;\HOLFreeVar{f}\;\HOLSymConst{\HOLTokenCompose}\;\HOLFreeVar{g}} is odd.
In fact, $\beta$ lies in the orbit of $\alpha$ at halfway point, the iterate at half period.
Since a while-loop also gives an iterate, we have:
\begin{theorem}
\label{thm:involute-involute-fixes-while}
\script{iterateCompose}{1536}
For two involutions $f$ and \HOLinline{\HOLFreeVar{g}}, if $f$ fixes $x$ with an odd period,
a WHILE loop with \HOLinline{\HOLFreeVar{f}\;\HOLSymConst{\HOLTokenCompose}\;\HOLFreeVar{g}} from $x$ can reach a fixed point of \HOLinline{\HOLFreeVar{g}}.
\begin{HOLmath}
\HOLTokenTurnstile{}\HOLConst{\HOLConst{finite}}\;\HOLFreeVar{S}\;\HOLSymConst{\HOLTokenConj{}}\;\HOLFreeVar{f}\;\HOLConst{involute}\;\HOLFreeVar{S}\;\HOLSymConst{\HOLTokenConj{}}\;\HOLFreeVar{g}\;\HOLConst{involute}\;\HOLFreeVar{S}\;\HOLSymConst{\HOLTokenConj{}}\\
\;\;\;\;\;\HOLFreeVar{x}\;\HOLSymConst{\HOLTokenIn{}}\;\HOLConst{fixes}\;\HOLFreeVar{f}\;\HOLFreeVar{S}\;\HOLSymConst{\HOLTokenConj{}}\;\HOLFreeVar{p}\;\HOLSymConst{=}\;\HOLConst{\HOLConst{period}}\;(\HOLFreeVar{f}\;\HOLSymConst{\HOLTokenCompose}\;\HOLFreeVar{g})\;\HOLFreeVar{x}\;\HOLSymConst{\HOLTokenConj{}}\;\HOLConst{\HOLConst{odd}}\;\HOLFreeVar{p}\;\HOLSymConst{\HOLTokenImp{}}\\
\;\;\;\;\;\;\;\HOLConst{WHILE}\;(\HOLTokenLambda{}\HOLBoundVar{t}.\;\HOLFreeVar{g}\;\HOLBoundVar{t}\;\HOLSymConst{\HOLTokenNotEqual{}}\;\HOLBoundVar{t})\;(\HOLFreeVar{f}\;\HOLSymConst{\HOLTokenCompose}\;\HOLFreeVar{g})\;\HOLFreeVar{x}\;\HOLSymConst{\HOLTokenIn{}}\;\HOLConst{fixes}\;\HOLFreeVar{g}\;\HOLFreeVar{S}
\end{HOLmath}
\end{theorem}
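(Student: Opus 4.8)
The plan is to reduce this to the structural facts about fixed-point orbits already established in Section~\ref{sec:orbits}, combined with the generic WHILE-loop unrolling of Theorem~\ref{thm:iterate-while-thm}. Write $\varphi = f \circ g$, $h = p\;\HOLConst{div}\;2$, $G = (\lambda t.\ g\;t \ne t)$ and $B = f \circ g$; note $x \in S$ since $x \in \HOLConst{fixes}\;f\;S$, and that $\varphi$ permutes $S$ (a composite of two involutions on $S$), so every iterate $\varphi^{j}(x)$ lies in $S$ and hence ``$g\;(\varphi^{j}(x)) \ne \varphi^{j}(x)$'' is exactly ``$\varphi^{j}(x) \notin \HOLConst{fixes}\;g\;S$''. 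Since $p$ is odd, $p = 2h+1$, so $0 \le h < p$. The target iterate produced by the loop will be precisely $\varphi^{h}(x)$.

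First I would establish the two hypotheses needed by Theorem~\ref{thm:iterate-while-thm} with $k = h$. For the stopping step, Theorem~\ref{thm:involute-two-fixes-odd} applied to this $x$, $p$ and odd $p$ names $(f\circ g)^{p\;\HOLConst{div}\;2}(x) = \varphi^{h}(x)$ as a fixed point of $g$; hence $\neg G(\varphi^{h}(x))$. For the running steps I must show $G(\varphi^{j}(x))$, i.e.\ $\varphi^{j}(x) \notin \HOLConst{fixes}\;g\;S$, for every $j < h$. Split on $j$. If $j = 0$, then $h \ge 1$, so $p \ge 2$, and since $p$ is odd, $p \ne 1$; Theorem~\ref{thm:involute-period-1} (whose biconditional reads $p = 1 \iff x \in \HOLConst{fixes}\;g\;S$) then gives $x \notin \HOLConst{fixes}\;g\;S$. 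If $0 < j < h$, then $0 < j < p$ and $j \ne p\;\HOLConst{div}\;2$, so the odd-period clause of Theorem~\ref{thm:involute-two-fixes-even-odd} — which states, for $0 < j < p$, that $\varphi^{j}(x) \in \HOLConst{fixes}\;g\;S \iff j = p\;\HOLConst{div}\;2$ — yields $\varphi^{j}(x) \notin \HOLConst{fixes}\;g\;S$.

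Finally I would feed these facts into Theorem~\ref{thm:iterate-while-thm} with $k = h$, concluding $\HOLConst{WHILE}\;G\;B\;x = B^{h}(x) = \varphi^{h}(x)$, which we already know belongs to $\HOLConst{fixes}\;g\;S$; this is exactly the statement. The only delicate part is the bookkeeping: lining up the WHILE iteration count with $p\;\HOLConst{div}\;2$, handling $j = 0$ separately (Theorem~\ref{thm:involute-two-fixes-even-odd} requires $0 < j$, so the $j=0$ case must go through Theorem~\ref{thm:involute-period-1} instead), and the routine observation that $\varphi = f\circ g$ permutes $S$ so that iterates stay in $S$. There is no genuine obstacle here — everything substantive was proved in Section~\ref{sec:orbits}, and this is the assembly step.
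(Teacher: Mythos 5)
Your proposal is correct and follows essentially the same route as the paper's proof: the same guard/body setup, Theorem~\ref{thm:involute-two-fixes-odd} for the halfway iterate, Theorem~\ref{thm:involute-period-1} for the $j=0$ step, Theorem~\ref{thm:involute-two-fixes-even-odd} for $0<j<p\;\HOLConst{div}\;2$, and Theorem~\ref{thm:iterate-while-thm} to unroll the loop. The only (harmless) structural difference is that the paper case-splits on $p=1$ versus $p\neq 1$ at the top level, whereas you absorb the $p=1$ case implicitly since $h=0$ makes the guard claim vacuous.
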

\begin{proof}
Let guard \HOLinline{\HOLFreeVar{G}\;\HOLSymConst{=}\;(\HOLTokenLambda{}\HOLBoundVar{t}.\;\HOLFreeVar{g}\;\HOLBoundVar{t}\;\HOLSymConst{\HOLTokenNotEqual{}}\;\HOLBoundVar{t})}, and body \HOLinline{\HOLFreeVar{B}\;\HOLSymConst{=}\;\HOLFreeVar{f}\;\HOLSymConst{\HOLTokenCompose}\;\HOLFreeVar{g}}.
If period \HOLinline{\HOLFreeVar{p}\;\HOLSymConst{=}\;\HOLNumLit{1}},
then \HOLinline{\HOLFreeVar{x}\;\HOLSymConst{\HOLTokenIn{}}\;\HOLConst{fixes}\;\HOLFreeVar{g}\;\HOLFreeVar{S}} by Theorem~\ref{thm:involute-period-1}.
So \HOLinline{\HOLSymConst{\HOLTokenNeg{}}\HOLFreeVar{G}\;\HOLFreeVar{x}},
and \HOLinline{\HOLConst{WHILE}\;\HOLFreeVar{G}\;\HOLFreeVar{B}\;\HOLFreeVar{x}\;\HOLSymConst{=}\;\HOLFreeVar{x}} since the condition is not met at the start.
Therefore \HOLinline{\HOLConst{WHILE}\;\HOLFreeVar{G}\;\HOLFreeVar{B}\;\HOLFreeVar{x}\;\HOLSymConst{\HOLTokenIn{}}\;\HOLConst{fixes}\;\HOLFreeVar{g}\;\HOLFreeVar{s}}.

If period \HOLinline{\HOLFreeVar{p}\;\HOLSymConst{\HOLTokenNotEqual{}}\;\HOLNumLit{1}}, let \HOLinline{\HOLFreeVar{h}\;\HOLSymConst{=}\;\HOLFreeVar{p}\;\HOLConst{\HOLConst{div}}\;\HOLNumLit{2}},
and \HOLinline{\HOLFreeVar{z}\;\HOLSymConst{=}\;\HOLFreeVar{B}\ensuremath{\sp{\HOLFreeVar{h}}(\HOLFreeVar{x})}}.
Since \HOLinline{\HOLNumLit{1}\;\HOLSymConst{\HOLTokenLt{}}\;\HOLFreeVar{p}}, $0 < h < p$.
Also $f$ and \HOLinline{\HOLFreeVar{g}} are involutions, so \HOLinline{\HOLFreeVar{B}\;\HOLConst{\HOLConst{permutes}}\;\HOLFreeVar{S}}.
Hence \HOLinline{\HOLFreeVar{z}\;\HOLSymConst{\HOLTokenIn{}}\;\HOLConst{fixes}\;\HOLFreeVar{g}\;\HOLFreeVar{S}} by Theorem~\ref{thm:involute-two-fixes-odd}.
so \HOLinline{\HOLSymConst{\HOLTokenNeg{}}\HOLFreeVar{G}\;\HOLFreeVar{z}}.

We claim \HOLinline{\HOLSymConst{\HOLTokenForall{}}\HOLBoundVar{j}.\;\HOLBoundVar{j}\;\HOLSymConst{\HOLTokenLt{}}\;\HOLFreeVar{h}\;\HOLSymConst{\HOLTokenImp{}}\;\HOLFreeVar{G}\;(\HOLFreeVar{B}\ensuremath{\sp{\HOLBoundVar{j}}(\HOLFreeVar{x})})}.
To see this, let \HOLinline{\HOLFreeVar{y}\;\HOLSymConst{=}\;\HOLFreeVar{B}\ensuremath{\sp{\HOLFreeVar{j}}(\HOLFreeVar{x})}}, which is an element of $S$.
If \HOLinline{\HOLFreeVar{j}\;\HOLSymConst{=}\;\HOLNumLit{0}}, then \HOLinline{\HOLFreeVar{y}\;\HOLSymConst{=}\;\HOLFreeVar{x}}.
Since period \HOLinline{\HOLFreeVar{p}\;\HOLSymConst{\HOLTokenNotEqual{}}\;\HOLNumLit{1}},
\HOLinline{\HOLFreeVar{y}\;\HOLSymConst{\HOLTokenNotIn{}}\;\HOLConst{fixes}\;\HOLFreeVar{g}\;\HOLFreeVar{S}} by Theorem~\ref{thm:involute-period-1}, so \HOLinline{\HOLFreeVar{G}\;\HOLFreeVar{y}}.
If \HOLinline{\HOLFreeVar{j}\;\HOLSymConst{\HOLTokenNotEqual{}}\;\HOLNumLit{0}}, then $0 < j < h < p$, and \HOLinline{\HOLFreeVar{j}\;\HOLSymConst{\HOLTokenNotEqual{}}\;\HOLFreeVar{h}}.
Hence \HOLinline{\HOLFreeVar{y}\;\HOLSymConst{\HOLTokenNotIn{}}\;\HOLConst{fixes}\;\HOLFreeVar{g}\;\HOLFreeVar{S}} by Theorem~\ref{thm:involute-two-fixes-even-odd},
so \HOLinline{\HOLFreeVar{G}\;\HOLFreeVar{y}} again. The claim is proved.

By the claim and \HOLinline{\HOLSymConst{\HOLTokenNeg{}}\HOLFreeVar{G}\;\HOLFreeVar{z}},
apply Theorem~\ref{thm:iterate-while-thm} to conclude
$\HOLinline{\HOLConst{WHILE}\;\HOLFreeVar{G}\;\HOLFreeVar{B}\;\HOLFreeVar{x}\;\HOLSymConst{=}\;\HOLFreeVar{z}} \in \HOLinline{\HOLConst{fixes}\;\HOLFreeVar{g}\;\HOLFreeVar{S}}$.
\end{proof}

\subsection{Two Squares by WHILE}
\label{sec:two-squares-while}
We have developed the theory to show that the algorithm in Section~\ref{sec:algorithm} is correct:
\begin{theorem}
\label{thm:two-sq-thm}
\script{twoSquares}{840}
For a prime of the form \HOLinline{\HOLNumLit{4}\HOLSymConst{\ensuremath{}}\HOLFreeVar{k}\;\HOLSymConst{\ensuremath{+}}\;\HOLNumLit{1}},
the two squares algorithm of Definiton~\ref{def:two-sq-def} gives a flip fixed point.
\begin{HOLmath}
\HOLTokenTurnstile{}\HOLConst{prime}\;\HOLFreeVar{n}\;\HOLSymConst{\HOLTokenConj{}}\;\HOLFreeVar{n}\;\ensuremath{\equiv}\;\HOLNumLit{1}\;\ensuremath{(}\ensuremath{\bmod}\;\HOLNumLit{4}\ensuremath{)}\;\HOLSymConst{\HOLTokenImp{}}\\
\;\;\;\;\;\;\;\HOLConst{two_sq}\;\HOLFreeVar{n}\;\HOLSymConst{\HOLTokenIn{}}\;\HOLConst{fixes}\;\HOLConst{flip}\;(\HOLConst{mills}\;\HOLFreeVar{n})
\end{HOLmath}
\end{theorem}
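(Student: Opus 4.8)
The plan is to obtain $\HOLConst{two\_sq}\;n$ as a direct instance of Theorem~\ref{thm:involute-involute-fixes-while}, taking $f = \HOLConst{zagier}$, $g = \HOLConst{flip}$, $S = \HOLConst{mills}\;n$, and $x = (1,1,n\;\HOLConst{div}\;4)$. The key point is the assignment of roles: in that theorem the loop body is $f \circ g$, which here is $\HOLConst{zagier} \circ \HOLConst{flip}$, matching Definition~\ref{def:two-sq-def}; and the loop is engineered to terminate at a fixed point of the \emph{second} map $g = \HOLConst{flip}$, which is exactly the flip fixed point we want. So $\HOLConst{zagier}$ must sit in the $f$-slot and $\HOLConst{flip}$ in the $g$-slot.

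Next I would discharge the hypotheses of that theorem. A prime is not a square, so $\HOLConst{mills}\;n$ is finite by Theorem~\ref{thm:mills-finite}; $\HOLConst{zagier}$ is an involution on it by Theorem~\ref{thm:zagier-involute-mills-prime}, and $\HOLConst{flip}$ by Theorem~\ref{thm:flip-involute-mills}. Theorem~\ref{thm:zagier-fixes-prime} gives $\HOLConst{fixes}\;\HOLConst{zagier}\;(\HOLConst{mills}\;n) = \{(1,1,n\;\HOLConst{div}\;4)\}$, so in particular $x \in \HOLConst{fixes}\;\HOLConst{zagier}\;(\HOLConst{mills}\;n)$ and, crucially, this set is a \emph{singleton}. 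That singleton condition is precisely the hypothesis of Corollary~\ref{cor:involute-fix-singleton-odd}, which therefore tells us the period $p$ of $x$ under $\HOLConst{zagier} \circ \HOLConst{flip}$ is odd. With all hypotheses in place, Theorem~\ref{thm:involute-involute-fixes-while} yields $\HOLConst{WHILE}\;(\lambda t.\; \HOLConst{flip}\;t \neq t)\;(\HOLConst{zagier} \circ \HOLConst{flip})\;x \in \HOLConst{fixes}\;\HOLConst{flip}\;(\HOLConst{mills}\;n)$.

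It then remains to recognise this $\HOLConst{WHILE}$-expression as $\HOLConst{two\_sq}\;n$. The body $\HOLConst{zagier} \circ \HOLConst{flip}$ and the initial triple $(1,1,n\;\HOLConst{div}\;4)$ already match Definition~\ref{def:two-sq-def}, and for the guard one unfolds $\HOLConst{flip}\;(x,y,z) = (x,z,y)$ to see that $\HOLConst{flip}\;t \neq t$ holds iff the last two components of $t$ disagree, i.e.\ iff $\neg(\HOLConst{found}\;t)$; hence $\lambda t.\;\HOLConst{flip}\;t \neq t$ and $(\neg) \circ \HOLConst{found}$ are the same function on triples, and the two loops coincide. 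Essentially all the mathematical substance --- the parity transfer for composition orbits from Section~\ref{sec:orbits} and the singleton Zagier-fix of Theorem~\ref{thm:zagier-fixes-prime} --- lives in earlier results, so this last step is bookkeeping. The one place to take care is precisely the role pairing above together with the small HOL-level lemma equating the two guard predicates (all orbit iterates automatically stay in $\HOLConst{mills}\;n$ since $\HOLConst{zagier} \circ \HOLConst{flip}$ permutes it); that is the only mild obstacle.
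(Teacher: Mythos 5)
Your proposal is correct and follows essentially the same route as the paper's proof: instantiate Theorem~\ref{thm:involute-involute-fixes-while} with $f=\HOLConst{zagier}$, $g=\HOLConst{flip}$ on $\HOLConst{mills}\;n$, discharging the hypotheses via Theorems~\ref{thm:mills-finite}, \ref{thm:zagier-involute-mills-prime}, \ref{thm:flip-involute-mills}, \ref{thm:zagier-fixes-prime} and Corollary~\ref{cor:involute-fix-singleton-odd}, then identify the guard $(\neg)\circ\HOLConst{found}$ with $\lambda t.\;\HOLConst{flip}\;t\neq t$. No gaps.
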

\begin{proof}
Let \HOLinline{\HOLFreeVar{S}\;\HOLSymConst{=}\;\HOLConst{mills}\;\HOLFreeVar{n}}, \HOLinline{\HOLFreeVar{\varphi}\;\HOLSymConst{=}\;\HOLConst{zagier}\;\HOLSymConst{\HOLTokenCompose}\;\HOLConst{flip}},
\HOLinline{\HOLFreeVar{u}\;\HOLSymConst{=}\;(\HOLNumLit{1}\HOLSymConst{,}\HOLNumLit{1}\HOLSymConst{,}\HOLFreeVar{n}\;\HOLConst{\HOLConst{div}}\;\HOLNumLit{4})}, and period \HOLinline{\HOLFreeVar{p}\;\HOLSymConst{=}\;\HOLConst{\HOLConst{period}}\;\HOLFreeVar{\varphi}\;\HOLFreeVar{u}}.
By Definition~\ref{def:two-sq-def},
and noting that \HOLinline{(\HOLSymConst{\HOLTokenNeg{}})\;\HOLSymConst{\HOLTokenCompose}\;\HOLConst{found}\;\HOLSymConst{=}\;(\HOLTokenLambda{}\HOLBoundVar{t}.\;\HOLConst{flip}\;\HOLBoundVar{t}\;\HOLSymConst{\HOLTokenNotEqual{}}\;\HOLBoundVar{t})},
this is to show: \HOLinline{\HOLConst{WHILE}\;(\HOLTokenLambda{}\HOLBoundVar{t}.\;\HOLConst{flip}\;\HOLBoundVar{t}\;\HOLSymConst{\HOLTokenNotEqual{}}\;\HOLBoundVar{t})\;\HOLFreeVar{\varphi}\;\HOLFreeVar{u}\;\HOLSymConst{\HOLTokenIn{}}\;\HOLConst{fixes}\;\HOLConst{flip}\;\HOLFreeVar{S}}.

Since a prime is not a square, we have \HOLinline{\HOLConst{\HOLConst{finite}}\;\HOLFreeVar{S}}.
Now \HOLinline{\HOLFreeVar{\varphi}\;\HOLConst{\HOLConst{permutes}}\;\HOLFreeVar{S}} as Zagier map and flip map are both involutions,
by Theorem~\ref{thm:zagier-involute-mills-prime} and Theorem~\ref{thm:flip-involute-mills},
and \HOLinline{\HOLConst{fixes}\;\HOLConst{zagier}\;\HOLFreeVar{S}\;\HOLSymConst{=}\;\HOLTokenLeftbrace{}\HOLFreeVar{u}\HOLTokenRightbrace{}} by Theorem~\ref{thm:zagier-fixes-prime}.
Thus \HOLinline{\HOLFreeVar{u}\;\HOLSymConst{\HOLTokenIn{}}\;\HOLConst{fixes}\;\HOLConst{zagier}\;\HOLFreeVar{S}},
and period $p$ is odd by Corollary~\ref{cor:involute-fix-singleton-odd}.
So \HOLinline{\HOLConst{WHILE}\;(\HOLTokenLambda{}\HOLBoundVar{t}.\;\HOLConst{flip}\;\HOLBoundVar{t}\;\HOLSymConst{\HOLTokenNotEqual{}}\;\HOLBoundVar{t})\;\HOLFreeVar{\varphi}\;\HOLFreeVar{u}\;\HOLSymConst{\HOLTokenIn{}}\;\HOLConst{fixes}\;\HOLConst{flip}\;\HOLFreeVar{S}} by Theorem~\ref{thm:involute-involute-fixes-while}.
\end{proof}
\noindent
It is almost trivial to convert \HOLinline{\HOLConst{two_sq}\;\HOLFreeVar{n}} to following algorithm:
\begin{definition}
\label{def:two-squares-def}
Compute the two squares for Fermat's two squares theorem.
\begin{HOLmath}
\;\;\HOLConst{two_squares}\;\HOLFreeVar{n}\;\HOLTokenDefEquality{}\;(\HOLKeyword{let}\;(\HOLBoundVar{x}\HOLSymConst{,}\HOLBoundVar{y}\HOLSymConst{,}\HOLBoundVar{z})\;=\;\HOLConst{two_sq}\;\HOLFreeVar{n}\;\HOLKeyword{in}\;(\HOLBoundVar{x}\HOLSymConst{,}\HOLBoundVar{y}\;\HOLSymConst{\ensuremath{+}}\;\HOLBoundVar{z}))
\end{HOLmath}
\end{definition}
\noindent
giving the two squares in a pair,
and its correctness is readily demonstrated:
\begin{theorem}
\label{thm:two-squares-thm}
\script{twoSquares}{1041}
The algorithm by Definition~\ref{def:two-squares-def} gives indeed Fermat's two squares.
\begin{HOLmath}
\HOLTokenTurnstile{}\HOLConst{prime}\;\HOLFreeVar{n}\;\HOLSymConst{\HOLTokenConj{}}\;\HOLFreeVar{n}\;\ensuremath{\equiv}\;\HOLNumLit{1}\;\ensuremath{(}\ensuremath{\bmod}\;\HOLNumLit{4}\ensuremath{)}\;\HOLSymConst{\HOLTokenImp{}}\\
\;\;\;\;\;\;\;(\HOLKeyword{let}\;(\HOLBoundVar{u}\HOLSymConst{,}\HOLBoundVar{v})\;=\;\HOLConst{two_squares}\;\HOLFreeVar{n}\;\HOLKeyword{in}\;\HOLFreeVar{n}\;\HOLSymConst{=}\;\HOLBoundVar{u}\HOLSymConst{\ensuremath{\sp{2}}}\;\HOLSymConst{\ensuremath{+}}\;\HOLBoundVar{v}\HOLSymConst{\ensuremath{\sp{2}}})
\end{HOLmath}
\end{theorem}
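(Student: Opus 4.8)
The plan is to reduce this statement to Theorem~\ref{thm:two-sq-thm}, which has already done all the real work; what remains is pure bookkeeping with the definitions. Writing $(x,y,z) = \texttt{two\_sq}\ n$, Definition~\ref{def:two-squares-def} tells us that $\texttt{two\_squares}\ n = (x,\,y+z)$, so it suffices to exhibit $n = x^2 + (y+z)^2$.

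First I would invoke Theorem~\ref{thm:two-sq-thm}: since $n$ is prime with $n \equiv 1 \pmod 4$, the triple $(x,y,z) = \texttt{two\_sq}\ n$ lies in $\texttt{fixes}\ \texttt{flip}\ (\texttt{mills}\ n)$. Unfolding $\texttt{fixes}$ (Definition~\ref{def:involute-pairs-fixes-def}) yields two facts: $(x,y,z) \in \texttt{mills}\ n$, and $\texttt{flip}\ (x,y,z) = (x,y,z)$. By Definition~\ref{def:flip-def} the latter reads $(x,z,y) = (x,y,z)$, so comparing the second components gives $y = z$.

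Next, the membership $(x,y,z) \in \texttt{mills}\ n$ means, by Definition~\ref{def:mills-def} together with Definition~\ref{def:windmill-def}, that $n = \texttt{windmill}\ x\ y\ z = x^2 + 4yz$. Substituting $z = y$ gives $n = x^2 + 4y^2 = x^2 + (2y)^2$; and since $y + z = 2y$ in this case, this is exactly $n = x^2 + (y+z)^2$, which is the goal after destructuring the \texttt{let} in Definition~\ref{def:two-squares-def}.

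As for obstacles, there really is none of substance: all the difficulty was packed into Theorem~\ref{thm:two-sq-thm} and, behind it, the orbit theory of Section~\ref{sec:orbits} and the windmill involutions of Section~\ref{sec:windmill-involutions}. The only mild care needed in HOL4 is to unfold the \texttt{let}-binding cleanly and to carry the same witness triple through both halves of the argument; a single rewrite with Theorem~\ref{thm:two-sq-thm} followed by simplification with the definitions of \texttt{fixes}, \texttt{flip}, \texttt{mills} and \texttt{windmill} should discharge the goal.
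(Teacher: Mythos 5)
Your proposal is correct and follows exactly the route the paper intends: the paper gives no explicit proof for this theorem, remarking only that its correctness ``is readily demonstrated,'' precisely because it reduces to Theorem~\ref{thm:two-sq-thm} plus unfolding of \HOLConst{fixes}, \HOLConst{flip}, \HOLConst{mills} and \HOLConst{windmill}, which is what you do. The details (extracting $y=z$ from the flip fixed point, substituting into $n = x^2 + 4yz$ to get $x^2 + (y+z)^2$) are all accurate.
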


\begin{table*}[h] 
\caption{Running Fermat's two squares algorithm in a HOL4 session, with timing.}
\Description{This table shows a sample run of Fermat's two squares algorithm in a HOL4 session, with timing information.}
\label{tbl:sample-run}  
\begin{tabular}{p{0.8\textwidth}}
\begin{verbatim}
> time EVAL ``two_squares 97``;
runtime: 0.00770s, gctime: 0.00086s, systime: 0.00077s.
val it = |- two_squares 97 = (9,4): thm
> time EVAL ``two_squares 1999999913``;
runtime: 2m23s, gctime: 14.7s, systime: 11.3s.
val it = |- two_squares 1999999913 = (1093,44708): thm
> time EVAL ``two_squares 12345678949``;
runtime: 6m02s,    gctime: 37.5s,     systime: 26.0s.
val it = |- two_squares 12345678949 = (110415,12418): thm
> EVAL ``9 * 9 + 4 * 4``;
val it = |- 9 * 9 + 4 * 4 = 97: thm
> EVAL ``1093 * 1093 + 44708 * 44708``;
val it = |- 1093 * 1093 + 44708 * 44708 = 1999999913: thm
> EVAL ``110415 * 110415 + 12418 * 12418``;
val it = |- 110415 * 110415 + 12418 * 12418 = 12345678949: thm
\end{verbatim}
\end{tabular}
\end{table*}

Table~\ref{tbl:sample-run} shows a sample run in HOL4 session on a typical laptop,
using \HOLConst{EVAL} for evaluation and prefix \HOLConst{time} to obtain timing statistics.
Note that these \HOLConst{EVAL} executions are based on optimised symbolic rewriting in HOL4, thus orders of magnitude slower than running native code.

\paragraph*{Other algorithms}
A prime has a finite set of windmill triples, by Theorem~\ref{thm:mills-finite}.
Fermat's two squares for a prime $n$ with \HOLinline{\HOLFreeVar{n}\;\ensuremath{\equiv}\;\HOLNumLit{1}\;\ensuremath{(}\ensuremath{\bmod}\;\HOLNumLit{4}\ensuremath{)}}, which must exist by Theorem~\ref{thm:fermat-two-squares-exists},
can be found by a brute-force search: subtract $n$ by successive odd squares, and check whether the difference is a square. Although there are better ways to test a square than the  square-root test, they are not simple to implement.

Don Zagier, after his one-sentence proof, referred to an effective algorithm by Wagon~\cite{Wagon-1990-acm} to compute the two squares. The algorithm requires finding a quadratic non-residue of the given prime $n$.

The advantage of our algorithm in Definition~\ref{def:two-squares-def} over such alternative methods is that only addition and subtraction are performed.
The implementation is rather straightforward. The issue of termination is discussed next.

\subsection{Terminating Condition}
\label{sec:termination}
As mentioned in Section~\ref{sec:flip-fix-search}, for our algorithm the WHILE loop may or may not terminate. To gaurantee termination, convert the WHILE loop to a countdown loop, as follows. First, ensure that the input number $n$ is not a square, so that \HOLinline{\HOLConst{mills}\;\HOLFreeVar{n}} is finite (Theorem~\ref{thm:mills-finite}), and check \HOLinline{\HOLFreeVar{n}\;\ensuremath{\equiv}\;\HOLNumLit{1}\;\ensuremath{(}\ensuremath{\bmod}\;\HOLNumLit{4}\ensuremath{)}}, so that \HOLinline{(\HOLNumLit{1}\HOLSymConst{,}\HOLNumLit{1}\HOLSymConst{,}\HOLFreeVar{n}\;\HOLConst{\HOLConst{div}}\;\HOLNumLit{4})\;\HOLSymConst{\HOLTokenIn{}}\;\HOLConst{mills}\;\HOLFreeVar{n}}, \ie, \HOLinline{\HOLConst{mills}\;\HOLFreeVar{n}\;\HOLSymConst{\HOLTokenNotEqual{}}\;\HOLSymConst{\HOLTokenEmpty{}}}.

Obviously for any triple \HOLinline{(\HOLFreeVar{x}\HOLSymConst{,}\HOLFreeVar{y}\HOLSymConst{,}\HOLFreeVar{z})\;\HOLSymConst{\HOLTokenIn{}}\;\HOLConst{mills}\;\HOLFreeVar{n}}, each $x, y$ or $y$ is less than $n$, hence \HOLinline{\ensuremath{|}\HOLConst{mills}\;\HOLFreeVar{n}\ensuremath{|}\;\HOLSymConst{\HOLTokenLt{}}\;\HOLFreeVar{n}\HOLSymConst{\ensuremath{\sp{3}}}}.
Now, use a countdown loop from \HOLinline{\HOLFreeVar{n}\HOLSymConst{\ensuremath{\sp{3}}}} to $0$, start with the triple \HOLinline{(\HOLNumLit{1}\HOLSymConst{,}\HOLNumLit{1}\HOLSymConst{,}\HOLFreeVar{n}\;\HOLConst{\HOLConst{div}}\;\HOLNumLit{4})} for the \HOLinline{\HOLConst{zagier}\;\HOLSymConst{\HOLTokenCompose}\;\HOLConst{flip}} iteration. 

The iterations trace an orbit. At half-way point, the orbit hits either a flip fixed point, detected by \HOLinline{\HOLFreeVar{y}\;\HOLSymConst{=}\;\HOLFreeVar{z}}, when the period is odd (Theorem~\ref{thm:involute-two-fixes-odd}), or another Zagier fixed point, detected by \HOLinline{\HOLFreeVar{x}\;\HOLSymConst{=}\;\HOLFreeVar{y}}, when the period is even (Theorem~\ref{thm:involute-two-fixes-even}). They provide actual exits from the countdown loop, much earlier than the count drops to zero.

\subsection{Lessons Learnt}
\label{sec:lessons}
This formalisation work can be a self-contained project in a theorem-proving workshop.
The ideas are simple, but formulating the theorems properly is not simple.
For example, at first the author would like to prove:
\begin{equation*}
\label{eqn:zagier-inv}
\HOLinline{\HOLSymConst{\HOLTokenForall{}}\HOLBoundVar{x}\;\HOLBoundVar{y}\;\HOLBoundVar{z}.\;\HOLConst{zagier}\;(\HOLConst{zagier}\;(\HOLBoundVar{x}\HOLSymConst{,}\HOLBoundVar{y}\HOLSymConst{,}\HOLBoundVar{z}))\;\HOLSymConst{=}\;(\HOLBoundVar{x}\HOLSymConst{,}\HOLBoundVar{y}\HOLSymConst{,}\HOLBoundVar{z})}.
\end{equation*}
The interactive session produces several subgoals which he cannot resolve immediately.
A comparison of Definition~\ref{def:zagier-def} with Equation~\eqref{eqn:zagier-map}
shows differences in boundary cases.
Finally, some insight from windmills resolves why the boundaries are ignored, and provides the pre-condition \HOLinline{\HOLFreeVar{x}\;\HOLSymConst{\HOLTokenNotEqual{}}\;\HOLNumLit{0}\;\HOLSymConst{\HOLTokenConj{}}\;\HOLFreeVar{z}\;\HOLSymConst{\HOLTokenNotEqual{}}\;\HOLNumLit{0}}, see Equation~\eqref{eqn:zagier-involute}. The result is Theorem~\ref{thm:zagier-involute-mills-prime}.

Explaining the Zagier map is an involution through the mind of a windmill poses some challenges. Definition~\ref{def:zagier-def} of the Zagier map has $3$ branches, so the initial effort is to treat just $3$ cases. The first case is immediate, but the second case runs into a mess. It is only after drawing a lot of windmills that the author realises these finer points:
\begin{itemize}[leftmargin=*] 
\item there are $3$ types: $x < y, x \ee y$, and $x > y$ for the windmill triple $(x,y,z)$,
\item the $3$ types are further subdivided due to geometry of the mind, giving $5$ cases in total,
\item the $5$ cases can be condensed into $3$ branches, as the definition shows.
\end{itemize}
The result is Table~\ref{tbl:zagier-map} in Section~\ref{sec:windmill-mind}.

For the permutation orbits in Section~\ref{sec:orbits},
the proofs about relations between iterates start as long-winded arguments treating if-part and only-if part separately.
Putting them in this paper prompts the author to rethink the logic.
The polished proofs simply employ a chain of logical equivalences.

Fixed point orbits have either even or odd period, as treated in Section~\ref{sec:fixed-point-period-even} and Section~\ref{sec:fixed-point-period-odd}. 
The drawing of the diagrams helps to refine the proofs to be short and sweet, making good use of theorems already proved.

About the correctness proof of the algorithm using a while-loop in Section~\ref{sec:correctness}, the author initially applied Hoare logic assertions to derive the desired iterate upon loop exit.
\ifdefined\READY
This is awkward, as pointed out by Michael Norrish who knows the HOL4 theorem-prover inside out.
\else
This is awkward, as pointed out by (omitted for anonymous review).
\fi
The reason is that \HOLinline{\HOLConst{WHILE}} is \emph{defined} as iteration of the body in HOL4.
The section had since been rewritten.

\paragraph*{Development Effort}
The proofs have been streamlined after several revisions.
Such refinements result in the script line counts for various theories developed, shown in Table~\ref{tbl:hol4-line-counts}.

\begin{table}[h] 
\caption{Statistics of various theories in this work.}
\Description{This table gives the statistics of various theories in this formalisation work.}
\label{tbl:hol4-line-counts}  
\[
\begin{array}{llr}
\text{HOL4 Theory}    & \text{Description}                        & \text{\#Lines}\\
\hline
\text{involute}       & \text{basic involution}                   & 231\\
\text{iteration}      & \text{function iteration and period}      & 917\\
\text{iterateCompose} & \text{iteration of involute composition}  & 1648\\
\text{iterateCompute} & \text{iteration period computation}       & 939\\
\text{windmill}       & \text{windmills and their involutions}    & 1844\\
\text{twoSquares}     & \text{two-squares by windmills}           & 1317\\
\end{array}
\]
\end{table}

\noindent
The scripts are fully documented, including the traditional proofs as comment before each theorem. Although comments almost double the script size, the line counts are still indicative of the effort to convert ideas into formal proofs.

\subsection{Related Work}
\label{sec:related-work}
As noted in Section~\ref{sec:introduction}, Fermat's two squares theorem has been formalised. However, none of these formal proofs is constructive, in the sense that there is no formal proof of an algorithm to compute the two squares for a prime satisfying the theorem.

Fermat's two squares theorem has two parts: existence and uniqueness.
All formal proofs include the existence part (see Theorem~\ref{thm:fermat-two-squares-exists}) , using classic and modern existence proofs: the method of infinite descent is used in one system, Gaussian integers are employed in three systems, both Heath-Brown's proof and Zagier's proof are treated in two systems.

Only two formal proofs include the uniqueness part (see Theorem~\ref{thm:fermat-two-squares-unique}): Th{\'e}ry~\cite{Thery-2004} proved by algebraic identities and divisibility, and Hughes~\cite{Hughes-2019} proved by unique factorisation of Gaussian integers.

Recently, Dubach and Muehlboeck~\cite{Dubach-Muehlboeck-2021-acm} formalised Zagier's proof using involutions in Coq's Mathematical Components Library. They illustrated their proof using the windmills as per this paper, and extended the use of involutions on the same set to formalise also an integer-partition proof of Fermat's two squares theorem by Christopher~\cite{Christopher-2016-acm}.

A summary of these formal proofs, in chronological order, is given in Table~\ref{tbl:chronology-formalise-two-squares}.

\begin{table*}[h] 
\caption{Chronology of formalisation of Fermat's two squares theorem.}
\Description{This table lists, in chronological order, the formal proofs of Fermat's two squares theorem, by various authors in different theorem provers.}
\label{tbl:chronology-formalise-two-squares}  
\begin{tabular}{llll}
Year & Author(s)[reference] & Theorem Prover & Comment\\
\hline
2004 & Laurent Th{\'e}ry~\cite{Thery-2004} & Coq & Gaussian integers, with uniqueness\\
2007 & Roelof Oosterhuis~\cite{Oosterhuis-2007} & Isabelle & Euler's proof with infinite descent\\
2009 & Marco Riccardi~\cite{Riccardi-2009} & Mizar & Heath-Brown's proof with involutions\\
2010 & John Harrison~\cite{Harrison-2010} & HOL Light & Zagier's proof with involutions\\
2012 & Anthony Narkawicz~\cite{Narkawicz-2012-acm} & NASA PVS & Zagier's proof with involutions\\
2015 & Mario Carneiro~\cite{Carneiro-2015} & MetaMath & Gaussian integers\\
2016 & Rob Arthan~\cite{Arthan-2016} & ProofPower & Heath-Brown's proof with involutions\\
2019 & Chris Hughes~\cite{Hughes-2019} & Lean & Principal Ideal Ring of Gaussian integers, with uniqueness\\
2021 & Dubach and Muehlboeck~\cite{Dubach-Muehlboeck-2021-acm} & Coq & Zagier's and Christopher's proofs with involutions\\
\end{tabular}
\end{table*}

\section{Conclusion}
\label{sec:conclusion}
About Fermat's two squares theorem,
G. H. Hardy wrote in his 1940 essay \emph{A Mathematician's Apology}~\cite[Section 13]{Hardy-1940-acm}:

\bigskip
\begin{mquote}[1em] 
\textit{This is Fermat's theorem, which is ranked, very justly, as one of the finest of arithmetic. Unfortunately, there is no proof within the comprehension of anybody but a fairly expert mathematician.}
\end{mquote}

\bigskip
\noindent
This work has been a rewarding exercise in formalisation,
delivering a proof of Fermat's Theorem~\ref{thm:fermat-two-squares-thm} using only natural numbers, involutions, and counting.
There is a certain sense of mathematical beauty when a non-trivial result can be shown by elementary means, borrowing elegant ideas by Zagier and Spivak.
Moreover, by developing a theory of involution iteration, an algorithm to compute the two squares of the theorem can be formally shown to be correct.

\paragraph*{Future Work}
The theory in Section~\ref{sec:orbits}, about orbits and fixed points, can be developed using group actions, since the iteration indices form an addition cyclic group under \HOLConst{mod} $p$, where $p$ is the orbit period. 
One can exploit the symmetry in permutation orbits, especially for permutations arising from two involutions, to improve the algorithm, as shown in the analysis by Shiu~\cite{Shiu-1996}. 
In HOL4, this direction can start from the algebra of group theory in Chan and Norrish~\cite{Chan-Norrish-cpp-2012}.
A formal analysis of the performance of the algorithm for two squares described in Definition~\ref{def:two-sq-def} can be modelled using an approach in Chan~\cite{Chan-ANU-2019-acm}.

\ifdefined\READY
\section*{Acknowledgements}
\label{sec:acknowledgements}
\addcontentsline{toc}{section}{Acknowledgments} 
Many thanks to Michael Norrish for his careful review of the draft, providing useful advice and helpful recommendations to improve this paper.
The author is also grateful to the anonymous reviewers who pointed out typographical errors and suggested clarifications.
This paper has been revised to incorporate their comments.
\else
\fi

\ifdefined\READY
\else
\section*{Appendices}
\appendix

\section{Cross-reference with Proof Script}
\label{app:cross-reference-theorems}
This is supplementary material for review.
Refer to Table~\ref{tbl:cross-reference-of-theorems} for a
mapping of theorems in this paper with names in scripts.

\begin{table*}[h] 
\caption{Cross-reference of theorems and definitions. HOL4 proof scripts uploaded as \q{script.zip.}}
\Description{This is a table giving the cross-reference of therems and definitions in the scripts. HOL4 proof scripts have been uploaded as script.zip file.}
\label{tbl:cross-reference-of-theorems}  
\begin{tabular}{lllr}
\centering
\textbf{Location of $\dots$} & \textbf{Script: (name)Script.sml} &
\textbf{Name of theorem or definition} & \textbf{Line \#}\\
\hline
Theorem~\ref{thm:fermat-two-squares-thm}
       & \q{twoSquares} & \q{fermat_two_squares_iff} & 619\\
Definition~\ref{def:windmill-def}
       & \q{windmill} & \q{windmill_def} & 348\\
Definition~\ref{def:mills-def}
       & \q{windmill} & \q{mills_def} & 481\\
Theorem~\ref{thm:mills-finite}
       & \q{windmill} & \q{mills_finite_non_square} & 714\\
Theorem~\ref{thm:mills-trivial-prime}
       & \q{windmill} & \q{windmill_trivial_prime} & 428\\
Definition~\ref{def:involute-pairs-fixes-def}
       & \q{involuteFix} & \q{pairs_def} & 269\\
Definition~\ref{def:involute-pairs-fixes-def}
       & \q{involuteFix} & \q{fixes_def} & 264\\
Theorem~\ref{thm:involute-two-fixes-both-odd}
       & \q{involuteFix} & \q{involute_two_fixes_both_odd} & 1182\\
Definition~\ref{def:flip-def}
       & \q{windmill} & \q{flip_def} & 884\\
Theorem~\ref{thm:flip-involute-mills}
       & \q{windmill} & \q{flip_involute_mills} & 933\\
Definition~\ref{def:zagier-def}
       & \q{windmill} & \q{zagier_def} & 959\\
Theorem~\ref{thm:zagier-involute-mills-prime}
       & \q{windmill} & \q{zagier_involute_mills_prime} & 1475\\
Theorem~\ref{thm:zagier-fixes-prime}
       & \q{twoSquares} & \q{zagier_fixes_prime} & 162\\
Theorem~\ref{thm:fermat-two-squares-exists}
       & \q{twoSquares} & \q{fermat_two_squares_exists_odd_even} & 441\\
Theorem~\ref{thm:fermat-two-squares-unique}
       & \q{twoSquares} & \q{fermat_two_squares_unique_thm} & 205\\
Theorem~\ref{thm:mod-4-not-squares}
       & \q{helperTwosq} & \q{mod_4_not_squares} & 419\\
Definition~\ref{def:two-sq-def}
       & \q{twoSquares} & \q{two_sq_def} & 816\\
Definition~\ref{def:period-def}
       & \q{iteration} & \q{iterate_period_def} & 348\\
Theorem~\ref{thm:iterate-period-mod}
       & \q{iteration} & \q{iterate_period_mod} & 653\\
Theorem~\ref{thm:involute-period-1}
       & \q{iterateCompose} & \q{involute_involute_fixes_both} & 558\\
Theorem~\ref{thm:involute-mod-period}
       & \q{iterateCompose} & \q{iterate_involute_mod_period} & 401\\
Theorem~\ref{thm:involute-two-fix-orbit-1}
       & \q{iterateCompose} & \q{involute_involute_fix_orbit_1} & 583\\
Theorem~\ref{thm:involute-two-fix-orbit-2}
       & \q{iterateCompose} & \q{involute_involute_fix_orbit_2} & 689\\
Theorem~\ref{thm:involute-two-fixes-even}
       & \q{iterateCompose} & \q{involute_involute_fix_orbit_fix_even_distinct} & 884\\
Corollary~\ref{cor:involute-fix-singleton-odd}
       & \q{iterateCompose} & \q{involute_involute_fix_sing_period_odd} & 1009\\
Theorem~\ref{thm:involute-two-fixes-odd}
       & \q{iterateCompose} & \q{involute_involute_fix_orbit_fix_odd_distinct_iff} & 980\\
Theorem~\ref{thm:involute-two-fixes-even-odd}
       & \q{iterateCompose} & \q{involute_involute_fix_even_period_fix} & 1100\\
Theorem~\ref{thm:involute-two-fixes-even-odd}
       & \q{iterateCompose} & \q{involute_involute_fix_odd_period_fix} & 1146\\
Theorem~\ref{thm:iterate-while-thm}
       & \q{iterateCompute} & \q{iterate_while_thm} & 922\\
Theorem~\ref{thm:involute-involute-fixes-while}
       & \q{iterateCompose} & \q{involute_involute_fix_odd_period_fix_while} & 1536\\
Theorem~\ref{thm:two-sq-thm}
       & \q{twoSquares} & \q{two_sq_thm} & 840\\
Definition~\ref{def:two-squares-def}
       & \q{twoSquares} & \q{two_squares_def} & 1025\\
Theorem~\ref{thm:two-squares-thm}
       & \q{twoSquares} & \q{two_squares_thm} & 1041\\
\hline
\end{tabular}
\end{table*}
\fi

\bibliographystyle{ACM-Reference-Format}
\bibliography{all}


\end{document}